\renewcommand{\d}{\mathrm{d}}
\newcommand{\scri}{{\mathscr I}}
\newcommand{\hook}{{\setlength{\unitlength}{11pt}   
                   \begin{picture}(.833,.8)
                   \put(.15,.08){\line(1,0){.35}}
                   \put(.5,.08){\line(0,1){.5}}
                   \end{picture}}}
\newtheorem{definition}{Definition}
\newtheorem{theorem}{Theorem}
\newtheorem{proposition}{Proposition}
\newtheorem{lemma}{Lemma}
\newtheorem{remark}{Remark}
\newtheorem{defn}{Definition}[section]
\newtheorem{thm}{Theorem}[section]
\newtheorem{cor}{Corollary}[section]
\newtheorem{lem}{Lemma}[section]
\begin{document}
\mbox{} \thispagestyle{empty}

\begin{center}
\bf{\Huge Conformal scattering theory for the Dirac field on Kerr spacetime} \\

\vspace{0.1in}

{PHAM Truong Xuan\footnote{Faculty of Information Technology, Department of Mathematics, Thuyloi university, Khoa Cong nghe Thong tin, Bo mon Toan, Dai hoc Thuy loi, 175 Tay Son, Dong Da, Ha Noi, Viet Nam.\\
Email~: xuanpt@tlu.edu.vn or phamtruongxuan.k5@gmail.com}}
\end{center}

{\bf Abstract.} We investigate to construct a conformal scattering theory of the spin-$1/2$ massless Dirac equation on the Kerr spacetime by using the conformal geometric method and under an assumption on the pointwise decay of the Dirac field. In particular, our construction is valid in the exteriors of Schwarzschild and very slowly Kerr black hole spacetimes, where the pointwise decay was established. 

{\bf Keywords.} Conformal scattering operator, spin-$1/2$ massless Dirac field, Kerr spacetime, null infinity, Penrose's conformal compactification, Cauchy problem, Goursat problem, Pointwise decay.

{\bf Mathematics subject classification.} 35L05, 35P25, 35Q75, 83C57.

\tableofcontents

\section{Introduction}
The spin-$1/2$ massless Dirac field were studied since 1960's in the works of Sachs \cite{Sa61} and Penrose \cite{Pe1964,Pe1965} on the ''peeling-off property'' in the Minkowski spacetime. Peeling is the asymptotic behaviours of the spin zero rest-mass fields along the outgoing null geodesic lines. Then, the peeling of Dirac fields were extended to study in the asymptotic flat spacetimes such as Schwarzschild and Kerr spacetimes in \cite{MaNi2012,Xuan2020}. 

The pointwise decays (also called Price's law) of the Dirac and the generalized spin-$n/2$ zero rest-mass fields in the Minkowski spacetime were established by Andersson et al. in \cite{ABJ} by analyzing Hertz potentials. 
The pointwise decays for the Dirac field on the Schwarzschild spacetime were studied by Smoller and Xie \cite{SmolXi} and these results are improved in the recent work of Ma \cite{Ma2021}. Another interesting aspect of the massless spin fields is the local integral formula which were establish initially in the Minkowski spacetime by Penrose \cite{Pe63}. Then, Joudioux \cite{Jo2011} extended the formula on the general curved spacetimes. The local integral formula gives the solution of the Goursat problem in the region near the timlike infinity $i^\pm$ of the Minkowski spacetime.

Concerning the scattering for the massless and massive Dirac equations Nicolas \cite{Ni95}, Nicolas and H\"afner \cite{HaNi2004}, Batic \cite{Ba} and Nicolas et al. in recent work \cite{HNM} established the fully results in the Schwarzschild, Kerr spacetimes and in the interior of the Reissner-Nordstr\"om-type black hole respectively by using the Cook's method and the Mourre estimates, i.e, the completely analytic scattering. In a recent work \cite{Mo2021} Mokdad has obtained the conformal Scattering, i.e, the geometric scattering for Dirac equation in the interior of charged spherically symmetric black holes. 

A conformal scattering theory on the asymptotic flat spacetimes consists on three steps: first, the resolution of the Cauchy problem of the rescaled equations on the rescaled spacetime, then the definition and the extension of the trace operator. Second, we prove the energy identity up to the future timelike infinity $i^+$, this shows that the extension of the trace operator is injective. Third, we solve the Goursat problem with the initial data on the conformal boundary consisting the horizon $\mathfrak{H}^+$ and null infinity $\scri^+$. Therefore, the trace operator will be surjective, hence, an isometry.  

Concerning the Goursat problem of the massless spin equations, Mason and Nicolas established the well-posed of the scalar wave, Dirac and Maxwell equations in the asymptotic simple spacetimes in \cite{MaNi2004}. By using these results, they constructed the conformal scattering operators, i.e, the geometric scattering operators for these field equations in the asymptotic simple spacetimes. After that, the Goursat problem for the spin field equations in the asymptotic flat spacetime are also established in some recent works. In particular, the Goursat problem for the scalar wave equation has solved in the Schwarzschild by Nicolas \cite{Ni2016} and the ones for Dirac and Maxwell equations in the Reissner-Nordstr\"om-de Sitter spacetime have treated by Mokdad \cite{Mo2019,Mo2021}. The Goursat prolem for the generalized $n/2$-spin zero ress-mass equations have studied by the author in \cite{Xuan21}.
The method combines the vector field method (use the pointwise decays to establish the energy estimates and energy equality) and the generalized results of H\"ormander for the scalar and spin wave equations (see \cite{Mo2019,Mo2021,Ni2016,Xuan21}). 
The Goursat problem is an important step to construct the conformal scattering theory, i.e, the geometric scattering theory for the field equations in the asymptotic simple and flat spacetimes (in detail see \cite{MaNi2004,Ni2016} for the initial construction of the theory).

In the present paper, we will extend \cite{Xuan21} to establish the conformal scattering theory for the spin-$1/2$ massless Dirac equation in the Kerr spacetime by using conformal geometric methods. We use the Penrose's conformal mapping on the exterior of Kerr to obtain the domain $\bar{\mathcal{B}}_I$ which consits three singular points: the spacelike infinity $i_0$ and the future (past) timelike infinity $i^\pm$. We will construct the conformal scattering for the Dirac equation on this domain. 
Since the Cauchy problem for the Dirac equation is well-posedness on the exterior of Kerr black holde (see \cite{Ni2002}) we can define the trace operator of the Dirac field on the conformal boundaries $\mathfrak{H}^+\cup\scri^+$.
We will close the timelike future infinity $i^+$ by a hyperboloid hypersurface $\mathcal{H}_T$.
Under an assumption on the decay rate (follows both the time and spatial directions) of the components of Dirac field we prove that the energy of the Dirac field through $\mathcal{H}_T$ tends to zero. This leads to the energy equality of the Dirac field between the initial hypersurface $\Sigma_0$ and the null conformal boundaries $\mathfrak{H}^+\cup \scri^+$. As a consequent we can extend the trace operator on the Sobolev space on $\mathfrak{H}^+\cup \scri^+$ with the energy norm by density.

We develop the methods in \cite{Mo2019,Ni2016,Xuan2021,Xuan20,Xuan21} to establish the well-posed of the Goursat problem. In particular, the Goursat problem will be solved by using the energy equality and the generalized results of H\"ormander in two parts. The first one we apply the generalized results of H\"ormander to obtained the solution in the future $\mathcal{I}^+(\mathcal{S})$ of the Cauchy hypersurface $\mathcal{S}$ which pass the bifurcation sphere and intersects strictly at the past of the support of initial data. The second one we extend the solution of the first part down to the initial hypersurface $\Sigma_0$ by using again the well-posed of the Cauchy problem and the equality energy. The solution of the Goursat problem is an union of the ones obtained in the two parts $\mathcal{I}^+(\mathcal{S})$ and $\mathcal{I}^-(\mathcal{S})$ (see Theorem \ref{Goursatprob}).

The paper is organized as follows: Section \ref{S2} we recall the geometric setting of the Kerr spacetime and some formulas of the curvature spinors and their commutators; Section \ref{S4} we study the spin-$1/2$ massless Dirac equation and prove the energy equality on the rescaled Kerr spacetime; Section \ref{S5} we solve the Goursat problem and establish the conformal scattering operator which associates the past scattering data to the future scattering data; Appendix \ref{A} we provide the applying of the generalized results of H\"ormander's for the spin wave equation on Kerr spacetime, then establish a pointwise decay of the components of Dirac field on very slowly Kerr spacetimes and finally Appendix \ref{B} we give some calculations which are necessary to solve the Goursat problem, the detailed components of Riemann curvature and the covariant derivetives of origin spin-frame.\\
{\bf Remarks and Notations.}\\
$\bullet$ We use the formalisms of abstract indices, $2$-component spinors and Newman-Penrose and Geroch-Held-Penrose.\\
$\bullet$ Let $f(x)$ and $g(x)$ be two real functions. We write $f \lesssim g$ if there exists a constant $C \in (0,+\infty)$ such that $f(x)\leq C g(x)$ for all $x$, and write $f\simeq g$ if both $f\lesssim g$ and $g\lesssim f$ are valid.\\
$\bullet$ This work is completely a part of the conformal scattering for Dirac field on Kerr metric focus on the proof of the energy equality and the well-posedness of the Goursat problem. There is other important part which relates to the pointwise decay of the field on the full subextremal Kerr spactime. In this work we take the pointwise decay into the decay assumption \eqref{pointwise}. The decay assumption \eqref{pointwise} are valid for the case $a=0$, Schwarzschild spacetime (see \cite[Theorem 1.6]{Ma2021}) and very slowly Kerr spacetimes (see Appendix \ref{verySlowKer}).\\
{\bf Acknowledgements.} 
The author is grateful to Siyuan Ma for the explanation on the Price's law of the Dirac field on Kerr spacetimes.

\section{Geometric and analytic setting} \label{S2}
\subsection{Kerr metric, star-Kerr and Kerr-star coordinates}
In Boyer-Lindquist coordinates, Kerr spacetime is a manifold $({\cal M}={\mathbb R}_t\times {\mathbb R}_r \times S_\omega^2, \, g)$ whose the metric $g$ takes the form
\begin{equation}\label{Metric}
 g=\left(1-\frac{2Mr}{\rho^2}\right)\d t^2 +\frac{4aMr\sin^2\theta}{\rho^2}\d t\d \varphi -\frac{\rho^2}{\Delta}\d r^2-\rho^2\d \theta^2-\frac{\sigma^2}{\rho^2}\sin^2\theta\d \varphi^2, 
\end{equation}
$$ \rho^2=r^2+a^2\cos^2\theta,\Delta=r^2-2Mr+a^2, $$
$$ \sigma^2=(r^2+a^2)\rho^2+2Mra^2\sin^2\theta=(r^2+a^2)^2-\Delta a^2\sin^2\theta,$$
where $M>0$ is the mass of the black hole and $a\neq 0$ is its angular momentum per unit mass. Kerr spacetime is asymptotically plat and there are only two basic Killing vectors $\partial_t$ and $\partial_{\varphi}$. In this paper, we work on the exterior of the black hole $\mathcal{B}_I=\left\{ r>r_+ = M + \sqrt{M^2-a^2} \right\}$.

The non zero components of the Riemann curvature tensor of the Kerr metric are given in Appendix \ref{Riemann}.

The determinant of $g$ is given by $\det{g} = -\rho^4\sin^2\theta$, and we give here the form of the inverse Kerr metric $g^{-1}$, which will be useful to the next sections
\begin{equation}\label{inverse}
g^{-1} = \frac{1}{\rho^2} \left( \frac{\sigma^2}{\Delta}\partial_t^2 - \frac{2aMr}{\Delta}\partial_t\partial_{\varphi} - \Delta\partial_r^2 - \partial_{\theta}^2 - \frac{\rho^2 - 2Mr}{\Delta\sin^2\theta}\partial_{\varphi}  \right).
\end{equation}

The Kerr spacetime has two principal null geodesics $V^{\pm}$:
\begin{equation*}
V^\pm = \frac{r^2+a^2}{\Delta}\partial_t \pm \partial_r + \frac{a}{\Delta}\partial_\varphi.
\end{equation*}
These geodesics lead to construct the star-Kerr and Kerr-star coordinates. In particular, the star-Kerr coordinates $({^*t},r,\theta,{^*\varphi})$ are defined as
$${^*t} = t - r_*, \, {^*\varphi} = \varphi - \Lambda(r),$$
where the function $r_*$ is Regge-Wheeler-type variable and $\Lambda$ sastifies
$$\frac{\d r_*}{\d r} = \frac{r^2+a^2}{\Delta}, \,  \frac{\d \Lambda}{\d r} = \frac{a}{\Delta} \, .$$
In these coordinates, the outgoing principal null geodesics are the $r$-coordinate lines and the metric has the form
$$g = \left( 1 - \frac{2Mr}{\rho^2} \right)\d {^*t}^2 + \frac{4aMr\sin^2\theta}{\rho^2}\d{^*t}\d{^*\varphi} - \rho^2\d\theta^2 - \frac{\sigma^2}{\rho^2}\sin^2\theta\d{^*\varphi} + 2\d{^*t}\d r - 2a\sin^2\theta\d{^*\varphi}\d r.$$
In this coordinate system we can add the horizon
$${\mathfrak H}^- = {\mathbb R}_{^*t}\times \left\{r=r_+\right\}\times S_{\theta,{^*\varphi}}^2$$
as a smooth null boundary to $\mathcal{B}_I$. 
\begin{remark}
We have the following relations between the Boyer-Lindquist ($BL$-) coordinates and star-Kerr ($*K$-) coordinates
\begin{eqnarray*}
\partial_t &=& \partial_{^*t}\cr
\partial_\theta &=& (\partial_\theta)_{*K}\cr
\partial_\varphi &=& \partial_{^*\varphi}\cr
(\partial_r)_{BL} &=& -\frac{r^2+a^2}{\Delta}\partial_{^*t} + (\partial_r)_{*K} - \frac{a}{\Delta}\partial_{^*\varphi}. 
\end{eqnarray*}
\end{remark}
Similarly star-Kerr coordinates, Kerr-star coordinates $({^*t},r,\theta,{^*\varphi})$ are constructed by using the incoming principal null geodesics that are parametrized as the integral lines of $V^-$. We have
$${t^*} = t + r_*, \, {\varphi^*} = \varphi + \Lambda(r) \; ,$$
with the same function $r_*$ and $\Lambda$ as for star-Kerr coordinates. Consequently the incoming principal null geodesics can be considered as the $r$ coordinates curves parametrized by $s = r$.

Kerr-star coordinates are defined globally on block I. In Kerr-star coordinate system the Kerr metric takes the form
\begin{equation*}
g =\left(1-\frac{2Mr}{\rho^2}\right) \d {t^*}^2 + \frac{4aMr\sin^2\theta}{\rho^2} \d {t^*}\d {\varphi^*} -\frac{\sigma^2}{\rho^2}\sin^2\theta \d {\varphi^*}^2 - \rho^2 \d \theta^2 - 2 \d {t^*} \d r + 2 a \sin^2\theta \d {\varphi^*} \d r.
\end{equation*}
This coordinate system allows us to add the horizon
$${\mathfrak H}^+ = {\mathbb R}_{t^*}\times \left\{r=r_+\right\}\times S_{\theta,{\varphi^*}}^2$$
as a smooth null boundary to $\mathcal{B}_I$. 
\begin{remark}
We have the following relations between the Boyer-Lindquist ($BL$-) coordinates and Kerr-star ($K*$-) coordinates
\begin{eqnarray*}
\partial_t &=& \partial_{t^*}\cr
\partial_\theta &=& (\partial_\theta)_{K*}\cr
\partial_\varphi &=& \partial_{\varphi^*}\cr
(\partial_r)_{BL} &=& \frac{r^2+a^2}{\Delta}\partial_{t^*} + (\partial_r)_{K*} + \frac{a}{\Delta}\partial_{\varphi^*}. 
\end{eqnarray*}
\end{remark}

\subsection{Penrose's conformal compactification}
The Penrose compactification of Kerr spacetime is constructed by using the star-Kerr and Kerr-star coordinates. 
Now we consider the rescaled star-Kerr coordinates $({^*t}, \, R =1/r,\, \theta,\, {^*\varphi})$, and rescale the Kerr metric with the conformal factor $\Omega=R$, as follows
\begin{eqnarray}
\widetilde{g}: = R^2 g&=& R^2\left(1-\frac{2Mr}{\rho^2}\right)\d {^*t}^2+\frac{4MaR\sin^2\theta}{\rho^2}\d {^*t} \d {^*\varphi} \cr
&& -(1+a^2R^2\cos^2\theta)\d {\theta^2} -\left(1+a^2R^2+\frac{2Ma^2R\sin^2\theta}{\rho^2}\right)\sin^2\theta \d {^*\varphi^2}\cr
&&-2 \d {^*t} \d R+2a\sin^2\theta \d {^*\varphi} \d R. \label{ResMetric}
\end{eqnarray}
The rescaled metric extends smoothly and non degenerately to the null hypersurface $\scri^+ = \mathbb{R}_{^*t}\times \left\{ R = 0 \right\} \times S_{\omega}^2$, which can be added as a smooth boundary to Kerr spacetime and will be called future null infinity. The Levi-Civita symbols must be rescaled as 
$$\widetilde{\varepsilon}_{AB}=R\varepsilon_{AB}.$$
The rescaled metric has the inverse form
\begin{eqnarray}\label{InvertResMetric}
{\widetilde g}^{-1}&=&-\frac{1}{\rho^2}\left(r^2a^2\sin^2\theta\partial^2_{^*t}+2(r^2+a^2)\partial_{^*t}\partial_R+2ar^2\partial_{^*t}\partial_{^*\varphi}+2a\partial_R\partial_{^*\varphi} \right) \cr
&&-\frac{1}{\rho^2} \left(R^2\Delta\partial^2_R+r^2\partial_{\theta}^2+\frac{r^2}{\sin^2\theta}\partial^2_{^*\varphi}\right). 
\end{eqnarray}

We now consider the rescaled Kerr-star coordinates $(t^*, \, R =1/r,\, \theta,\, \varphi^*)$, and rescale the Kerr metric with the conformal factor $\Omega^2=R^2$, as follows
\begin{eqnarray}
\widehat{g}: = R^2 g&=& R^2\left(1-\frac{2Mr}{\rho^2}\right)\d {t^*}^2+\frac{4MaR\sin^2\theta}{\rho^2}\d {t^*} \d {\varphi^*} \cr
&& -(1+a^2R^2\cos^2\theta)\d {\theta^2} -\left(1+a^2R^2+\frac{2Ma^2R\sin^2\theta}{\rho^2}\right)\sin^2\theta \d {{\varphi^*}^2}\cr
&&+2 \d {t^*} \d R+2a\sin^2\theta \d {\varphi^*} \d R. \label{ResMetric}
\end{eqnarray}
The rescaled metric extends smoothly and non degenerately to the null hypersurface $\scri^- = \mathbb{R}_{t^*}\times \left\{ R = 0 \right\} \times S_{\omega}^2$, which can be added as a smooth boundary to Kerr spacetime and will be called future null infinity. 

Therefore, the Penrose conformal compactification of $\mathcal{B}_I$ is
$$\bar{\mathcal{B}_I} = \mathcal{B}_I \cup \scri^+ \cup \mathfrak{H}^+ \cup \scri^-\cup \mathfrak{H}^-\cup S_c^2,$$
where $S_c^2$ is the bifurcation sphere.
\begin{remark}
We notice that the compactified space-time is not closed, it consists three singularities ''points'' of the rescaled metric $\widetilde{g}$ (and $\widehat{g}$): future timlike infinity $i^+$ is defined as the limit point of uniformly timelike curves as $t\to + \infty$, past timelike infinity $i^-$ is symmetric of $i^+$ in the distant past and spacelike infinity $i_0$ is the limit point of uniformly spacelike curves as $r\to +\infty$. 
\end{remark}
We will use the normalized Newman-Penrose tetrad given by H\"afner and Nicolas in \cite{HaNi2004}. More precisely
\begin{equation}\label{New2}
\begin{gathered}
l^a\partial_a = \frac{1}{\sqrt{2\Delta\rho^2}} \left( (r^2+a^2)\partial_t + \Delta \partial_r + a \partial_\varphi \right), \cr
n^a \partial_a = \frac{1}{\sqrt{2\Delta\rho^2}} \left( (r^2+a^2)\partial_t - \Delta \partial_r + a \partial_\varphi \right), \cr
m^a\partial_a = \frac{1}{p\sqrt 2}\left(  ia\sin\theta \partial_t + \partial_\theta + \frac{i}{\sin\theta} \partial_\varphi  \right), \, \mbox{where} \,\,  p = r + ia \cos\theta.
\end{gathered}
\end{equation}
The dual tetrad of a $1-$form is given as follows
\begin{equation}\label{DualNew2}
\begin{gathered}
l_a \d x^a = \sqrt{\frac{\Delta}{2\rho^2}} \left( \d t - \frac{\rho^2}{\Delta} \d r - a\sin^2\theta \d \varphi \right),\cr
n_a \d x^a = \sqrt{\frac{\Delta}{2\rho^2}} \left( \d t + \frac{\rho^2}{\Delta} \d r - a\sin^2\theta \d \varphi \right),\cr
m_a \d x^a = \frac{1}{p\sqrt 2} \left( ia\sin\theta \d t - \rho^2 \d \theta - i(r^2+a^2)\sin\theta \d \varphi \right).
\end{gathered}
\end{equation}

We define the rescaled tetrad $\widetilde{l}^a, \, \widetilde{n}^a, \, \widetilde{m}^a,\, \bar{\widetilde{m}}^a,$ which is normalized with respect to $\widetilde{g}$, as follows:
$$ \widetilde l^a = r^2 l^a, \, \widetilde n^a = n^a, \, \widetilde m^a = rm^a $$
and we have
$$ \widetilde{l}_a = l_a, \, \widetilde{n}_a = R^2n_a, \, \widetilde{m}_a = Rm_a.$$
Since
$$(\partial_r)_{BL} = -\frac{r^2+a^2}{\Delta}\partial_{^*t} + (\partial_r)_{*K} - \frac{a}{\Delta}\partial_{^*\varphi},$$
we have
$$r^2\Delta(\partial_r)_{BL} = -r^2(r^2+a^2)\partial_{^*t} - \Delta\partial_R - ar^2\partial_{^*\varphi}.$$
Therefore, in the rescaled star-Kerr coordinates $({^*t}, \, R, \, \theta, \, {^*\varphi})$, the rescaled Newman Penrose tetrad becomes
\begin{equation}\label{resNew2}
\begin{gathered}
\widetilde{l}^a \partial_a = -\sqrt\frac{\Delta}{2\rho^2}\partial_R,\cr
\widetilde{n}^a \partial_a = \sqrt\frac{2}{\Delta \rho^2} \left( (r^2+a^2)\partial_{^*t}+a\partial_{^*\varphi}+\frac{R^2\Delta}{2}\partial_R \right),\cr
\widetilde{m}^a \partial_a = \frac{r}{p\sqrt{2}} \left(ia\sin\theta\partial_{^*t}+ \partial_\theta + \frac{i}{\sin \theta} \partial_{^*\varphi} \right) 
\end{gathered} 
\end{equation}
and
\begin{equation}\label{dualNew3}
\begin{gathered}
\widetilde{l}_a \d x^a = \sqrt{\frac{\Delta}{2\rho^2}} \left( \d {^*t} - a\sin^2\theta \d {^*\varphi} \right),\cr
\widetilde{n}_a \d x^a = \sqrt{\frac{\Delta}{2\rho^2}} \left( R^2 \d {^*t} - \frac{2\rho^2}{\Delta}\d R - aR^2\sin^2\theta \d {^*\varphi} \right),\cr
\widetilde{m}_a \d x^a = \frac{1}{p\sqrt 2} \left( iaR\sin\theta \d {^*t} - R\rho^2 \d \theta - iR(r^2+a^2)\sin\theta \d {^*\varphi} \right).
\end{gathered}
\end{equation}
In terms of the associated spin-frame $\left\{ o^A,\iota^A \right\}$, the above relation is equivalent to the following rescaling
$$ \widetilde{o}^A = r o^A, \, \widetilde{\iota}^A = \iota^A, \, \widetilde{o}_A = o_A, \, \widetilde{\iota}_A = R\iota_A.$$

Simimarly, in the rescaled Kerr-star coordinates $({t^*}, \, R, \, \theta, \, {\varphi^*})$, the rescaled Newman Penrose tetrad becomes
\begin{equation}\label{resNew22}
\begin{gathered}
\widehat{l}^a \partial_a =  \sqrt\frac{2}{\Delta \rho^2} \left( (r^2+a^2)\partial_{t^*}+a\partial_{\varphi^*}-\frac{R^2\Delta}{2}\partial_R \right),\cr
\widehat{n}^a \partial_a = \sqrt\frac{\Delta}{2\rho^2}\partial_R,\cr
\widehat{m}^a \partial_a = \frac{r}{p\sqrt{2}} \left(ia\sin\theta\partial_{t^*}+ \partial_\theta + \frac{i}{\sin \theta} \partial_{\varphi^*} \right) 
\end{gathered} 
\end{equation}
and
\begin{equation}\label{dualNew33}
\begin{gathered}
\widehat{l}_a \d x^a = \sqrt{\frac{\Delta}{2\rho^2}} \left( R^2 \d {t^*} + \frac{2\rho^2}{\Delta}\d R - aR^2\sin^2\theta \d {\varphi^*} \right) ,\cr
\widehat{n}_a \d x^a = -\sqrt{\frac{\Delta}{2\rho^2}} \left( \d {t^*} - a\sin^2\theta \d {\varphi^*} \right),\cr
\widehat{m}_a \d x^a = \frac{1}{p\sqrt 2} \left( iaR\sin\theta \d {t^*} - R\rho^2 \d \theta - iR(r^2+a^2)\sin\theta \d {\varphi^*} \right).
\end{gathered}
\end{equation}
We have also the following relations
$$ \widehat{l}^a = l^a, \, \widehat{n}^a = r^2n^a, \, \widehat{m}^a = rm^a, $$
$$ \widehat{l}_a = R^2 l_a, \, \widehat{n}_a = n_a, \, \widehat{m}_a = Rm_a,$$
$$ \widehat{o}^A = o^A, \, \widehat{\iota}^A = r\iota^A, \, \widehat{o}_A = R o_A, \, \widehat{\iota}_A = \iota_A.$$

The $4$-volume measure associated with the rescaled Kerr metric $\widetilde{g}$ is
\begin{eqnarray*}
\mathrm{dVol}_{\widetilde{g}} &=& i \widetilde{l} \wedge \widetilde{n} \wedge \widetilde{m} \wedge \bar{\widetilde{m}} = R^2\rho^2 \d {^*t} \wedge \d R \wedge \d^2 \omega\cr
&=& \frac{R^2\Delta}{a^2+r^2} \d t \wedge \d r_* \wedge \d^2\omega = \frac{R^2\Delta}{2(a^2+r^2)}\d {^*t} \wedge \d {t^*} \wedge \d^2\omega,
\end{eqnarray*}
where $\d^2\omega$ is the euclidean area element on unit $2$-sphere $\mathbb{S}^2$.

\subsection{Curvature spinors}\label{app_1}
On Kerr spacetime $({\cal M},g)$, we recall that the Riemann tensor $R_{abcd}$ can be decomposed as follows (see Equation (4.6.1) page 231 in R. Penrose and W. Rindler \cite[Vol. 1]{PeRi}): 
\begin{equation}\label{decomposition_Riemann}
 R_{abcd} = X_{ABCD} \, \varepsilon_{A'B'} \varepsilon_{C'D'} + \Phi_{ABC'D'} \, \varepsilon_{A'B'} \varepsilon_{CD} + \bar{\Phi}_{A'B'CD} \, \varepsilon_{AB} \varepsilon_{C'D'} + \bar{X}_{A'B'C'D'} \, \varepsilon_{AB} \varepsilon_{CD},
\end{equation} 
where $X_{ABCD}$ is a complete contraction of the Riemann tensor in its primed spinor indices
$$ X_{ABCD} = \frac{1}{4} R_{abcd} {\varepsilon}^{A'B'} {\varepsilon}^{C'D'},$$
and ${\Phi}_{ab} = {\Phi}_{(ab)}$ is the trace-free part of the Ricci tensor multiplied by $-1/2$~:
$$2{\Phi}_{ab} = 6 {\Lambda} {g}_{ab} - {R}_{ab} \, ,~ {\Lambda} = \frac{1}{24} \mathrm{Scal}_{g}.$$

We set
$$P_{ab} = \Phi_{ab} - \Lambda g_{ab},$$
$$ X_{ABCD} = \Psi_{ABCD} + \Lambda \left( \varepsilon_{AC} \varepsilon_{BD} + \varepsilon_{AD} \varepsilon_{BC} \right),\, \Psi_{ABCD} = X_{(ABCD)} = X_{A(BCD)}.$$

Under a conformal rescaling $\widetilde{g}=\Omega^2 g$ we have (see R. Penrose and W. Rindler \cite[Vol. 2]{PeRi})
\begin{gather*}
\widetilde{\Psi}_{ABCD} = \Psi_{ABCD}, \cr
\widetilde{\Lambda} = \Omega^{-2} \Lambda + \frac{1}{4} \Omega^{-3} \square \Omega, ~\square = \nabla^a \nabla_a, \cr
\widetilde{P}_{ab} = P_{ab} - \nabla_b \Upsilon_a  + \Upsilon_{AB'} \Upsilon_{BA'},~ \mbox{with } \Upsilon_a = \Omega^{-1} \nabla_a \Omega = \nabla_a \log \Omega.
\end{gather*}

In the rescaled Kerr spacetime we have the detailed calculations of $\widetilde{\Psi}_{ABCD}$ and $\widetilde{X}_{ABCD}$ in the following proposition.
\begin{lemma}\label{CurSpin}
For the Kerr metric \eqref{Metric} we have $\Lambda = \Phi_{ab} = 0$ and the components of $X_{ABCD}=\Psi_{ABCD}$ are regular. For the rescaled Kerr metric \eqref{ResMetric} we have
$$\widetilde\Lambda=\frac{Mr-a^2}{2\rho^2}$$
then the components of $\widetilde{X}_{ABCD}$ are regular and the simpler expression of the rescaled trace-free Ricci tensor is
\begin{eqnarray*}\label{Cur}
\widetilde{\Phi}_{ab}\d x^a\d x^b &=& A_1 \d{^*t}^2 + A_2 \d{^*t}\d R + A_3 \d{^*t}\d\theta + A_4\sin^2\theta\d{^*t}\d{^*\varphi} + A_5 \d R^2 + A_6\d R \d \theta \cr
&& + A_7\sin^2\theta \d R \d {^*\varphi} + A_8\d \theta^2 + A_9\sin^2\theta\d \theta\d {^*\varphi} + A_{10}\sin^2\theta\d{^*\varphi}^2, 
\end{eqnarray*}
where $A_i(r,\theta), \, i=1,2...10$ are regular and bounded functions.
\end{lemma}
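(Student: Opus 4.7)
The plan rests on two facts. First, the Kerr metric is Ricci-flat, so $R_{ab}=0$; this gives $\Lambda = \tfrac{1}{24}\mathrm{Scal}_g = 0$ and, from $2\Phi_{ab} = 6\Lambda g_{ab} - R_{ab}$, also $\Phi_{ab}=0$. Second, Kerr is of Petrov type $D$: in the principal null tetrad \eqref{New2} the only non-vanishing Newman--Penrose Weyl scalar is $\Psi_2 = -M/p^3$ with $p = r + ia\cos\theta$, and since $p\neq 0$ on $\mathcal{B}_I$, $\Psi_2$ is smooth there. This already settles the first claim: $\Lambda = \Phi_{ab}=0$ and $X_{ABCD} = \Psi_{ABCD}$ has regular components on $\mathcal{B}_I$.

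For the rescaled quantities I would apply the conformal transformation laws stated just above the lemma with $\Omega = R = 1/r$. The Weyl spinor is conformally invariant, $\widetilde\Psi_{ABCD} = \Psi_{ABCD}$, so in the rescaled spin frame $\widetilde{o}^A = r\,o^A$, $\widetilde{\iota}^A = \iota^A$ the only non-zero NP scalar becomes $\widetilde\Psi_2 = r^2\Psi_2 = -Mr^2/p^3$, which extends to a regular function on $\bar{\mathcal{B}}_I\setminus\{i_0,i^\pm\}$ and in fact vanishes like $1/r$ on $\scri^+$. For $\widetilde\Lambda$, the formula $\widetilde\Lambda = \Omega^{-2}\Lambda + \tfrac{1}{4}\Omega^{-3}\square\Omega$ reduces to $\tfrac{r^3}{4}\square(1/r)$; computing $\square$ via $(\sqrt{-g})^{-1}\partial_a(\sqrt{-g}\,g^{ab}\partial_b)$, using $\sqrt{-g}=\rho^2\sin\theta$ and $g^{rr}=-\Delta/\rho^2$, gives $\square(1/r) = (2Mr-2a^2)/(r^3\rho^2)$, and hence $\widetilde\Lambda = (Mr-a^2)/(2\rho^2)$.

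For the rescaled trace-free Ricci tensor I use $\widetilde{P}_{ab} = P_{ab} - \nabla_b \Upsilon_a + \Upsilon_{AB'}\Upsilon_{BA'}$ with $P_{ab}=0$ and $\Upsilon_a = \nabla_a\log R$, so that $\Upsilon_a\,\d x^a = -r^{-1}\d r$. The covariant derivative $\nabla_b\Upsilon_a$ is read off from the Christoffel symbols of $g$ in Boyer--Lindquist coordinates (listed in Appendix \ref{Riemann}) and produces a symmetric covariant two-tensor whose coefficients are rational in $r$ with denominators only in powers of $\rho^2$ and $\Delta$. Recovering $\widetilde\Phi_{ab} = \widetilde{P}_{ab} + \widetilde\Lambda\,\widetilde{g}_{ab}$, substituting $\d r = -R^{-2}\d R$ and passing to the star-Kerr coframe $\{\d{^*t},\d R,\d\theta,\d{^*\varphi}\}$, and finally collecting terms produces the claimed expansion with ten coefficients $A_i(r,\theta)$, $i=1,\dots,10$.

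The chief obstacle is showing that the $A_i$ are genuinely regular and bounded up to $\scri^+$ ($R=0$): the individual contributions carry negative powers of $R$ coming from $\Omega^{-3}$ in $\widetilde\Lambda$, from differentiating $\log R$ in $\Upsilon_a$, and from the substitution $\d r = -R^{-2}\d R$. The cancellation is guaranteed in principle because the rescaled metric \eqref{ResMetric} extends smoothly to $\scri^+$ in star-Kerr coordinates and $\widetilde\Phi_{ab}$ is a universal algebraic expression in $\widetilde{g}_{ab}$ and its first two derivatives; in practice one verifies it by carefully collecting the coefficients of each monomial in $R$ after the coordinate change. This last step is bookkeeping-intensive rather than conceptually deep, and is the only part of the argument that requires genuine vigilance.
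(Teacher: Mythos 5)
Your argument is correct, and it reaches the conclusion by a more self-contained route than the paper. The paper's own proof is short: Ricci-flatness gives $\Lambda=\Phi_{ab}=0$, the explicit formulas for $\widetilde\Lambda$ and for $\widetilde{\Phi}_{ab}$ (the expansion with the coefficients $A_i$) are not rederived but quoted from \cite[Lemma 4.1]{Xuan2020}, the regularity of $X_{ABCD}=\Psi_{ABCD}$ is checked directly from the Riemann components listed in Appendix \ref{Riemann} together with the inverse metric \eqref{inverse}, and the regularity of $\widetilde{X}_{ABCD}=X_{ABCD}+\widetilde\Lambda(\widetilde\varepsilon_{AC}\widetilde\varepsilon_{BD}+\widetilde\varepsilon_{AD}\widetilde\varepsilon_{BC})$ follows. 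You instead (i) invoke the Petrov type~D structure, so that only $\Psi_2\sim -M/p^3$ survives, which is a cleaner and equally valid way to see the regularity of the Weyl spinor on $\mathcal{B}_I$ (the boost relating the H\"afner--Nicolas tetrad \eqref{New2} to the Kinnersley tetrad leaves $\Psi_2$ unchanged, so this is legitimate); (ii) compute $\widetilde\Lambda=\tfrac14 r^3\,\square(1/r)$ explicitly, and your computation using $\sqrt{-g}=\rho^2\sin\theta$ and $g^{rr}=-\Delta/\rho^2$ is correct and reproduces $(Mr-a^2)/(2\rho^2)$; (iii) obtain $\widetilde\Phi_{ab}$ from the conformal law for $P_{ab}$ with $\Upsilon_a=-r^{-1}\d r$, and argue regularity of the coefficients from the smooth, non-degenerate extension of $\widetilde g$ to $\scri^+$ in star-Kerr coordinates. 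What your approach buys is independence from the external reference and a conceptual reason for the cancellation of the negative powers of $R$; what the paper's citation buys is the fully explicit form of the $A_i$, including the factored powers of $\sin^2\theta$ in the mixed components, which your smooth-extension argument does not by itself deliver and which you correctly relegate to bookkeeping. Two small slips that do not affect the argument: Appendix \ref{Riemann} lists Riemann components, not Christoffel symbols, so the covariant derivative $\nabla_b\Upsilon_a$ must be computed separately; and the leftover bookkeeping (explicit $A_i$, boundedness as $r\to r_+$ and on the axis) is exactly the content of the cited \cite[Lemma 4.1]{Xuan2020}, so your proof is at the same level of completeness as the paper's only if that final collection of terms is actually carried out.
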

\begin{proof}
Since the Kerr is Ricci flat, we have $\Lambda = \Phi_{ab} = 0$. The equaltites \eqref{Cur} were proven in \cite[Lemma 4.1]{Xuan2020}. Finally, the regularity in $r$ and $\theta$ of the components of $X_{ABCD}=\Psi_{ABCD}$ can be verified directly from the non zero components of the Riemann curvature tensor (see Appendix \ref{Riemann}) and the inverse Kerr metric \eqref{inverse}. As a consequence the components of
$$\widetilde{X}_{ABCD} = X_{ABCD} + \widetilde{\Lambda} \left( \widetilde{\varepsilon}_{AC} \widetilde{\varepsilon}_{BD} + \widetilde{\varepsilon}_{AD} \widetilde{\varepsilon}_{BC} \right)$$
are regular.
\end{proof}

\subsection{Spinor form of commutators}\label{app_3_Commutator}
We recall some basic formulas on the spin wave operator which act on the spin-$1/2$ fields on $(\bar{\mathcal{B}}_I,\widetilde{g})$ (for the generalized spin fields see \cite[Appendix]{Xuan21}). Since the anti-symmetric property of $\widetilde{\Delta}^{ab} = \widetilde{\nabla}^{[a}\widetilde{\nabla}^{b]}$, we have
$$\widetilde{\Delta}^{ab} = 2\widetilde{\nabla}^{[a}\widetilde{\nabla}^{b]} = \widetilde{\varepsilon}^{A'B'} \widetilde{\Box}^{AB} + \widetilde{\varepsilon}^{AB} \widetilde{\Box}^{A'B'},$$
where
$$\widetilde{\Box}^{AB} = \widetilde{\nabla}^{X'(A}\widetilde{\nabla}^{B)}{_{X'}}, \; \widetilde{\Box}^{A'B'} = \widetilde{\nabla}^{X(A'}\widetilde{\nabla}^{B')}{_{X}}.$$
Now we have
$$\widetilde{\Delta}^{ab} = \widetilde{g}^{ac}\widetilde{g}^{bd} \widetilde{\Delta}_{cd},$$
and $\widetilde{\Delta}_{ab}$ acts on the spinor form $\kappa^C $ as
$$\widetilde{\Delta}_{ab}\kappa^C = \left\{ \widetilde{\varepsilon}_{A'B'} \widetilde{X}_{ABE}{^C} + \widetilde{\varepsilon}_{AB} \widetilde{\Phi}_{A'B'E}{^{C}} \right\} \kappa^E,$$
where $\widetilde{X}_{ABCD}$ and $\widetilde{\Phi}_{ABC'D'}$ are the curvature spinors in the expression of the Riemann tensor $\widetilde{R}_{abcd}$~:
$$\widetilde{R}_{abcd} = \widetilde{X}_{ABCD} \, \widetilde{\varepsilon}_{A'B'} \widetilde{\varepsilon}_{C'D'} + \widetilde{\Phi}_{ABC'D'} \, \widetilde{\varepsilon}_{A'B'} \widetilde{\varepsilon}_{CD} + \bar{\widetilde{\Phi}}_{A'B'CD} \, \widetilde{\varepsilon}_{AB} \widetilde{\varepsilon}_{C'D'} + \bar{\widetilde{X}}_{A'B'C'D'} \, \widetilde{\varepsilon}_{AB} \widetilde{\varepsilon}_{CD}.$$
Hence, we obtain
$$\widetilde{\Delta}^{ab} \kappa^C = \widetilde{\varepsilon}^{AC} \widetilde{\varepsilon}^{A'C'} \widetilde{\varepsilon}^{BD} \widetilde{\varepsilon}^{B'D'} \widetilde{\Delta}_{cd}\kappa^C = \left\{ \widetilde{\varepsilon}^{A'B'} \widetilde{X}^{AB}{_E}{^C} + \widetilde{\varepsilon}^{AB} \widetilde{\Phi}^{A'B'}{_E}{^{C}} \right\} \kappa^E.$$
By taking symmetrizing and skew-symmetrizing over $AB$ we get the equations
$$\widetilde{\Box}^{AB} \kappa^C =  \widetilde{X}^{AB}{_E}{^C}  \kappa^E, \; \widetilde{\Box}^{A'B'}\kappa^{C} =  \widetilde{\Phi}^{A'B'}{_E}{^{C}}\kappa^E. $$
Similarly, we can obtain the formula of the primed spin-vectors
$$\widetilde{\Delta}^{ab} \tau^{C '}= \left\{ \widetilde{\varepsilon}^{AB} \bar{\widetilde{X}}^{A'B'}{_{E'}}{^{C'}} + \widetilde{\varepsilon}^{A'B'} \widetilde{\Phi}^{AB}{_{E'}}{^{C'}} \right\} \tau{^{E'}},$$
$$\widetilde{\Box}^{AB} \tau^{C'} =  \widetilde{\Phi}^{AB}{_{E'}}{^{C'}} \tau^{E'}, \; \widetilde{\Box}^{A'B'}\tau^{C'} =  \bar {\widetilde{X}}^{A'B'}{_{E'}}{^{C'}}\tau^{E'}. $$
Lowering the index $C$ (or $C'$), we get
$$\widetilde{\Box}^{AB} \kappa_C =   \widetilde{X}^{ABE}{_C}  \kappa_E, \; \widetilde{\Box}^{A'B'}\kappa_{C} =  \widetilde{\Phi}^{A'B'E}{_{C}}\kappa^E, $$
$$\widetilde{\Box}^{AB} \tau_{C'} =  \widetilde{\Phi}^{AB{E'}}{_{C'}}  \tau_{E'}, \; \widetilde{\Box}^{A'B'}\tau_{C'} =  \bar {\widetilde{X}}^{A'B'{E'}}{_{C'}}\tau_{E'}.$$

Using the above formulas we establish some basic formulas on $(\bar{\mathcal{B}}_I,\widetilde{g})$ which will be useful for the next sections. First, we have
\begin{eqnarray}\label{app_3_wave0}
\widetilde{\nabla}_{ZA'}\widetilde{\nabla}^{AA'} \widetilde{\phi}_A &=& \widetilde{\varepsilon}^{AM}\widetilde{\nabla}_{ZA'}\widetilde{\nabla}_M^{A'} \widetilde{\phi}_A = \widetilde{\varepsilon}^{AM} \left( \widetilde{\nabla}_{A'[Z}\widetilde{\nabla}{_{M]}}{^{A'}} + \widetilde{\nabla}_{A'(Z}\widetilde{\nabla}{_{M)}}{^{A'}}  \right) \widetilde{\phi}_A \cr
&=& \widetilde{\varepsilon}^{AM} \left( \frac{1}{2}\widetilde{\varepsilon}_{ZM} \widetilde{\Box} + \widetilde{\Box}_{ZM}  \right) \widetilde{\phi}_A = \frac{1}{2} \widetilde{\varepsilon}_Z{^A}\widetilde{\Box} \widetilde{\phi}_A + \widetilde{\Box}_{Z}{^A} \widetilde{\phi}_A \cr
&=& \frac{1}{2}\check{\Box} \widetilde{\phi}_Z + \widetilde{X}_{ZA}{^{NA}} \widetilde{\phi}_N. 
\end{eqnarray}
We have also
\begin{eqnarray}\label{app_3_wave1}
\widetilde{\nabla}^{AA'}\widetilde{\nabla}_{AK'}\Xi^{K'} &=& - \widetilde{\varepsilon}_{K'M'}\widetilde{\nabla}^{AA'}\widetilde{\nabla}^{M'}_{A}\Xi^{K'} = - \widetilde{\varepsilon}_{K'M'} \left( \widetilde{\nabla}^{A[A'}\widetilde{\nabla}^{M']}_{A} + \widetilde{\nabla}^{A(A'}\widetilde{\nabla}^{M')}_{A} \right)\Xi^{K'}\cr
&=& - \widetilde{\varepsilon}_{K'M'} \left(  \frac{1}{2} \widetilde{\varepsilon}^{A'M'} \widetilde{\Box} + \widetilde{\Box}^{A'M'} \right)\Xi^{K'} = \frac{1}{2}\widetilde{\varepsilon}^{A'} {_{K'}}\widetilde{\Box}\Xi^{K'} + \widetilde{\Box}^{A'}{_{K'}}\Xi^{K'}\cr
&=& \frac{1}{2}\breve{\Box}\Xi^{A'} + \bar{{\widetilde X}}^{A'}{_{K'Q'}}{^{K'}} \Xi^{Q'}.
\end{eqnarray}

If we define the spin wave operator which acts on the full spin fields by using the spinor form as following
\begin{equation}\label{WaveOperator1}
\widehat{\Box} = \widetilde{\varepsilon}^{MN} \widetilde{\varepsilon}_{M'N'}\widetilde{\nabla}_M^{M'}\widetilde{\nabla}_N^{N'} = \widetilde{\nabla}_a\widetilde{\nabla}^a,
\end{equation}
then we have
\begin{eqnarray}\label{WaveOperator2}
\widehat{\Box} &=& \widetilde{\varepsilon}^{MN}\nabla_{N'M}\nabla_N^{N'} = \widetilde{\varepsilon}^{MN} \left( \nabla_{N'[M}\nabla_{N]}^{N'} + \nabla_{N'(M}\nabla_{N)}^{N'} \right) \cr
&=& \varepsilon^{MN} \left( \frac{1}{2}\varepsilon_{MN}\check{\Box} + \check{\Box}_{MN} \right) \cr
&=& \check{\Box} - \check{\Box}_{M}^M.
\end{eqnarray}
Similarly
\begin{equation}\label{WaveOperator3}
\widehat{\Box} =  \widetilde{\varepsilon}_{M'N'} \left( - \frac{1}{2}\varepsilon^{M'N'}{\breve{\Box}} + {\breve{\Box}}^{M'N'} \right)  = -{\breve{\Box}} + {\breve{\Box}}^{M'}_{M'}.
\end{equation}
\begin{remark}
The spin operators $\check{\Box}$, $\breve{\Box}$ and $\widehat{\Box}$ (they act on the full spin fields $\phi_{A}$)  and the scalar wave operator $\Box_{g}$ defined by Laplcace-Beltrami formula (it acts on the scalar fields $\Phi = (\phi_0,\phi_1)$), are of the same modulo the derivation terms of order less than or equal one. The modulo terms can be calculated by using the derivations of the spin-frame in Appendix \ref{app_3_express}.
\end{remark}

\section{Dirac fields and energy identities}\label{S4}
\subsection{The massless Dirac equation}
The massless spin-$1/2$ Dirac field is a solution of the Weyl anti-neutrino equation
\begin{equation}\label{Or}
\nabla^{AA'}\phi_A = 0.
\end{equation}
This equation is conformal invariant, i.e, if $\phi_A$ is a solution of \eqref{Or} then $\widetilde{\phi} = \Omega^{-1}\phi = r\phi$
is a solution of the rescaled equation
\begin{equation}\label{Res}
\widetilde{\nabla}^{AA'}\widetilde{\phi}_A =0,
\end{equation}
where $\widetilde{\nabla}$ is the Levi-Civita connection for the rescaled metric $\widetilde{g}$.

By decomposing $\phi_A$ and $\widetilde{\phi}_A$ on the spin frames $\left\{ o_A,\iota_A\right\}$ and $\left\{ \widetilde{o}_A,\widetilde{\iota}_A\right\}$ we find that 
\begin{eqnarray*}
\widetilde{\phi}_A &=& \widetilde{\phi}_1\widetilde{o}_A - \widetilde{\phi}_0\widetilde{\iota}_A = \widetilde{\phi}_1 o_A - R\widetilde{\phi}_0\iota_A\cr
&=& r\phi_1 o_A - r\phi_0\iota_A.
\end{eqnarray*}
Hence
$$\widetilde{\phi}_0 = r^2\phi_0,\, \widetilde{\phi}_1 = r\phi_1.$$

In the recaled star-Kerr coordinates $(^*t,R,\theta,{^*}\varphi)$ the rescaled Weyl equation \eqref{Res} can be expressed using the rescaled Newman-Penrose tetrad and the associated spin coefficients as follows 
\begin{eqnarray*}
0 = \widetilde{\nabla}^{AA'}\widetilde{\phi}_A = &&\left( \widetilde{D}'\widetilde{\psi}_0 - \widetilde{\delta}\widetilde{\psi}_1 + (\widetilde{\mu} - \widetilde{\gamma})\widetilde{\psi}_0 + (\widetilde{\tau} - \widetilde{\beta})\widetilde{\phi}_1 \right) \bar{\widetilde{o}}^{A'} \cr
&&+ \left( \widetilde{D}\widetilde{\phi}_1 - \widetilde{\delta}'\widetilde{\phi}_0 + (\widetilde{\alpha} - \widetilde{\pi})\widetilde{\phi}_0 + (\widetilde{\varepsilon} - \widetilde{\rho})\widetilde{\phi}_1 \right)\bar{\widetilde{\iota}}^{A'}.\label{SpinorResWeyl}
\end{eqnarray*} 
This is equivalent to
\begin{align}\label{Res1}
\begin{cases}
\widetilde{D}'\widetilde{\phi}_0 - \widetilde{\delta}\hat{\phi}_1 + (\widetilde{\mu} - \widetilde{\gamma})\widetilde{\phi}_0 + (\widetilde{\tau} - \widetilde{\beta})\widetilde{\phi}_1 = 0,\cr
\widetilde{D}\widetilde{\phi}_1 - \widetilde{\delta}'\widetilde{\phi}_0 + (\widetilde{\alpha} - \widetilde{\pi})\widetilde{\phi}_0 + (\widetilde{\varepsilon} - \widetilde{\rho})\widetilde{\phi}_1 = 0, 
\end{cases}
\end{align} 
where $\widetilde{D},\, \widetilde{D}',\,\widetilde{\delta} $ and $\widetilde{\delta}'$ are the directional
derivatives along $\widetilde{l}^a,\, \widetilde{n}^a,\, \widetilde{m}^a$ and $\widetilde{\bar m}^a$ respectively.
In the rescaled Kerr spacetime $(\widetilde{M},\widetilde{g})$, we have the twelve values of the rescaled spin coefficients which are (see \cite{Xuan2020}):
\begin{equation*} \label{ResSpin-coffi1}
 \widetilde{\kappa}=\widetilde{\sigma}=\widetilde{\lambda}=\widetilde{\nu}=0, 
\end{equation*} 
\begin{equation*} \label{ResSpin-coffi2}
\widetilde{\tau}=-\frac{ia\sin\theta r}{\sqrt 2 \rho^2} , \, \widetilde{\pi}=\frac{ia\sin\theta r}{\sqrt 2 \bar p^2}, \, 
\widetilde{\rho} = -\frac{iar\cos\theta}{\bar p}\sqrt{\frac{\Delta}{2\rho^2}}, \,  \widetilde{\mu} = \left(R-\frac{1}{\bar p}\right)\sqrt{\frac{\Delta}{2\rho^2}},
\end{equation*}
\begin{equation*} \label{ResSpin-coffi3}
\widetilde{\varepsilon} = \frac{Mr^4 - a^2r^2(r\sin^2\theta + M\cos^2\theta)}{2\rho^2\sqrt{2\Delta\rho^2}},\, 
\widetilde{\alpha} = \frac{r}{\sqrt 2\bar{p}}\left(\frac{ia\sin\theta}{\bar p}-\frac{\cot\theta}{2}+\frac{a^2\sin\theta\cos\theta}{2\rho^2}\right),
\end{equation*} 
\begin{equation*} \label{ResSpin-coffi4}
\widetilde{\beta}=\frac {r}{\sqrt 2 p}\left(\frac{\cot\theta}{2}+\frac{a^2\sin\theta\cos\theta}{2\rho^2}\right),
\end{equation*} 
\begin{equation*} \label{ResSpin-coffi5}
\widetilde{\gamma}=\frac{Mr^2 - a^2(r\sin^2\theta + M\cos^2\theta)}{2\rho^2\sqrt{2\Delta \rho^2}} - \left(\frac{ia\cos\theta}{\rho^2}+R\right)\sqrt{\frac{\Delta}{2\rho^2}}.
\end{equation*} 

Therefore, we have the following scalar expression of the rescaled equation \eqref{Res1}:
\begin{equation}\label{Scalar}
\begin{gathered}
\sqrt{\frac{2}{\Delta\rho^2}}\left( (r^2+a^2)\partial_{^*t} + a\partial_{^*\varphi} + \frac{R^2\Delta}{2}\partial_R \right)\widetilde{\phi}_0 - \frac{r}{\sqrt{2}p}\left( ia\sin\theta\partial_{^*t}+\partial_{\theta}+\frac{i}{\sin\theta}\partial_{^*\varphi} \right)\widetilde{\phi}_1 \cr
+\left( \left( 2R - \frac{r}{\rho^2} \right)\sqrt{\frac{\Delta}{2\rho^2}} - \frac{Mr^2-a^2(r\sin^2\theta+M\cos^2\theta)}{2\rho^2\sqrt{2\Delta\rho^2}} \right)\widetilde{\phi}_0 \cr
-\frac{r}{\sqrt{2}p}\left( \frac{ia\sin\theta}{\bar{p}} + \frac{\cot\theta}{2} + \frac{a^2\sin\theta\cos\theta}{2\rho^2}\right)\widetilde{\phi}_1 = 0,\cr
-\sqrt{\frac{\Delta}{2\rho^2}}\partial_R\widetilde{\phi}_1 + \frac{r}{\sqrt{2}\bar{p}}\left( ia\sin\theta\partial_{^*t} - \partial_{\theta} + \frac{i}{\sin\theta}\partial_{^*\varphi}  \right)\widetilde{\phi}_0 + \frac{r}{\sqrt{2}\bar{p}} \left( -\frac{\cot\theta}{2} + \frac{a^2\sin\theta\cos\theta}{2\rho^2}\right)\widetilde{\phi}_0 \cr
+ \left( \frac{Mr^4 - a^2r^2(r\sin^2\theta+M\cos^2\theta)}{2\rho^2\sqrt{2\Delta\rho^2}} + \frac{iar\cos\theta}{\bar{p}}\sqrt{\frac{\Delta}{2\rho^2}} \right)\widetilde{\phi}_1 = 0.
\end{gathered}
\end{equation}

By the same way as above we can express the rescaled Weyl equation in the rescaled Kerr-star coordinates $(t^*,R,\theta,\varphi^*)$. First, we have the rescaled equation $\widehat{\nabla}^{AA'}\widehat{\phi}_A=0$. The recasled field $\widehat{\phi}_A= r\phi_A$ and $\widehat{o}_A = R o_A$ and $\widehat{\iota}_A = \iota_A$ lead to $\widehat{\phi}_0 = r\phi_0$ and $\widehat{\phi}_1 = r^2\phi_1$. The twelve values of the rescaled spin coefficients does not change
\begin{equation*} \label{ResSpin-coffi1}
\widehat{\kappa}=\widehat{\sigma}=\widehat{\lambda}=\widehat{\nu}=0, 
\end{equation*} 
\begin{equation*} \label{ResSpin-coffi2}
\widehat{\tau}=-\frac{ia\sin\theta r}{\sqrt 2 \rho^2} , \, \widehat{\pi}=\frac{ia\sin\theta r}{\sqrt 2 \bar p^2}, \, 
\widehat{\rho} = -\frac{iar\cos\theta}{\bar p}\sqrt{\frac{\Delta}{2\rho^2}}, \,  \widehat{\mu} = \left(R-\frac{1}{\bar p}\right)\sqrt{\frac{\Delta}{2\rho^2}},
\end{equation*}
\begin{equation*} \label{ResSpin-coffi3}
\widehat{\varepsilon} = \frac{Mr^4 - a^2r^2(r\sin^2\theta + M\cos^2\theta)}{2\rho^2\sqrt{2\Delta\rho^2}},\, 
\widehat{\alpha} = \frac{r}{\sqrt 2\bar{p}}\left(\frac{ia\sin\theta}{\bar p}-\frac{\cot\theta}{2}+\frac{a^2\sin\theta\cos\theta}{2\rho^2}\right),
\end{equation*} 
\begin{equation*} \label{ResSpin-coffi4}
\widehat{\beta}=\frac {r}{\sqrt 2 p}\left(\frac{\cot\theta}{2}+\frac{a^2\sin\theta\cos\theta}{2\rho^2}\right),
\end{equation*} 
\begin{equation*} \label{ResSpin-coffi5}
\widehat{\gamma}=\frac{Mr^2 - a^2(r\sin^2\theta + M\cos^2\theta)}{2\rho^2\sqrt{2\Delta \rho^2}} - \left(\frac{ia\cos\theta}{\rho^2}+R\right)\sqrt{\frac{\Delta}{2\rho^2}}.
\end{equation*} 
Therefore, in the rescaled Kerr-star coordinates $(t^*,R,\theta,\varphi^*)$ the rescaled equation $\widehat{\nabla}^{AA'}\widehat{\phi}_A=0$ has the following scalar expression
\begin{equation}\label{Scalar1}
\begin{gathered}
\sqrt{\frac{\Delta}{2\rho^2}}\partial_R \widehat{\phi}_0 - \frac{r}{\sqrt{2}p}\left( ia\sin\theta\partial_{t^*}+\partial_{\theta}+\frac{i}{\sin\theta}\partial_{\varphi^*} \right)\widehat{\phi}_1 \cr
+\left( \left( 2R - \frac{r}{\rho^2} \right)\sqrt{\frac{\Delta}{2\rho^2}} - \frac{Mr^2-a^2(r\sin^2\theta+M\cos^2\theta)}{2\rho^2\sqrt{2\Delta\rho^2}} \right)\widehat{\phi}_0 \cr
-\frac{r}{\sqrt{2}p}\left( \frac{ia\sin\theta}{\bar{p}} + \frac{\cot\theta}{2} + \frac{a^2\sin\theta\cos\theta}{2\rho^2}\right)\widehat{\phi}_1 = 0,\cr
\sqrt{\frac{2}{\Delta\rho^2}}\left( (r^2+a^2)\partial_{t^*} + a\partial_{\varphi^*} - \frac{R^2\Delta}{2}\partial_R \right)\widehat{\phi}_1 \cr
+ \frac{r}{\sqrt{2}\bar{p}}\left( ia\sin\theta\partial_{t^*} - \partial_{\theta} + \frac{i}{\sin\theta}\partial_{\varphi^*}  \right)\widehat{\phi}_0 + \frac{r}{\sqrt{2}\bar{p}} \left( -\frac{\cot\theta}{2} + \frac{a^2\sin\theta\cos\theta}{2\rho^2}\right)\widehat{\phi}_0 \cr
+ \left( \frac{Mr^4 - a^2r^2(r\sin^2\theta+M\cos^2\theta)}{2\rho^2\sqrt{2\Delta\rho^2}} + \frac{iar\cos\theta}{\bar{p}}\sqrt{\frac{\Delta}{2\rho^2}} \right)\widehat{\phi}_1 = 0.
\end{gathered}
\end{equation}

The Cauchy problem of the Weyl equation $\nabla^{AA'}\phi_A=0$ on the block $\mathcal{B}_I$ was solved in \cite{Ni2002}. In order to extend the solution to the conformal boundary $\mathfrak{H}^+\cup \scri^+$ we need to consider the Cauchy problem of the rescaled Weyl equation $\widetilde{\nabla}^{AA'}\widetilde{\phi}_A=0$ (resp. $\widehat{\nabla}^{AA'}\widehat{\phi}_A=0$) on $\bar{\mathcal{B}}_I$ minus a union $\mathcal{O}$ of arbitrary neighbourhoods of $i^+$ and $i_0$. This work can be done by the same way for the nonlinear Klein-Gordon equation on Kerr spacetime \cite[Section 4.2]{Ni2002'}. 
\begin{theorem}\label{cauchyproblem}(Cauchy problem)
The Cauchy problem for the rescaled Weyl equation $\widetilde{\nabla}^{AA'}\widetilde{\phi}_A=0$ (resp. $\widehat{\nabla}^{AA'}\widehat{\phi}_A=0$) in $\bar{\mathcal{B}}_I/\mathcal{O}$ ($\mathcal{O}$ is a union of the arbitrary neighbourhoods of $i^+$ and $i_0$) is well-posed, i.e, for any $\widetilde{\psi}_A \in \mathcal{C}_0^\infty(\Sigma_0, \mathbb{S}_A)$ (resp. $\widehat{\psi}_A \in \mathcal{C}_0^\infty(\Sigma_0, \mathbb{S}_A)$) there exists a unique $\widetilde{\phi}_A$ (resp. $\widehat{\phi}_A$) solution of $\widetilde{\nabla}^{AA'}{\widetilde{\phi}}_A = 0$ (resp. $\widehat{\nabla}^{AA'}\widehat{\phi}_A=0$) such that   
$$\widetilde{\phi}_A \in {\mathcal C}^\infty(\bar{\mathcal{B}}_I/\mathcal{O}, \mathbb{S}_A); \; \widetilde{\phi}_A|_{t=0} = \widetilde{\psi}_A.$$
(resp. $\widehat{\phi}_A \in {\mathcal C}^\infty(\bar{\mathcal{B}}_I/\mathcal{O}, \mathbb{S}_A); \; \widehat{\phi}_A|_{t=0} = \widehat{\psi}_A.$)
\end{theorem}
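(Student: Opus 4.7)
The plan is to reduce to the known Cauchy theory on the original Kerr exterior $\mathcal{B}_I$, transport the solution to the rescaled picture via conformal invariance, and then extend smoothly up to the null boundary $\mathfrak{H}^+ \cup \scri^+$ by local symmetric-hyperbolic arguments in the two rescaled coordinate charts, exactly following the strategy used in \cite{Ni2002'} for the Klein-Gordon equation.

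First, given compactly supported data $\widetilde\psi_A \in \mathcal{C}_0^\infty(\Sigma_0,\mathbb{S}_A)$, the corresponding unrescaled data $\psi_A$ on $\Sigma_0$ (obtained from the spin-frame rescaling $\widetilde o_A = o_A$, $\widetilde\iota_A = R\iota_A$, hence $\psi_0 = R^2\widetilde\psi_0$, $\psi_1 = R\widetilde\psi_1$) is smooth and compactly supported in $\Sigma_0 \subset \mathcal{B}_I$. By the well-posedness result of Nicolas \cite{Ni2002} for $\nabla^{AA'}\phi_A = 0$ on $\mathcal{B}_I$, there exists a unique smooth solution $\phi_A$ on $\mathcal{B}_I$. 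The conformal invariance of the Weyl equation then produces a smooth solution $\widetilde\phi_A = r\phi_A$ of $\widetilde\nabla^{AA'}\widetilde\phi_A = 0$ on $\mathcal{B}_I$ (and analogously $\widehat\phi_A$ of the Kerr-star rescaled equation).

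Second, I extend $\widetilde\phi_A$ up to the null boundary. Near a compact piece of $\mathfrak{H}^+$ bounded away from $i^+$, I would work in rescaled Kerr-star coordinates $(t^*, R, \theta, \varphi^*)$, in which the metric $\widehat g$ is smooth and non-degenerate across $r = r_+$ and the scalar system \eqref{Scalar1} has smooth coefficients on both sides of the horizon; the system is of first-order symmetric hyperbolic type in the sense of Friedrichs, so standard local existence, uniqueness and smoothness-up-to-the-boundary (with $\mathfrak{H}^+$ as a characteristic boundary) apply. Near a compact piece of $\scri^+$ bounded away from $i^+$ and $i_0$, I would use rescaled star-Kerr coordinates $({}^*t, R, \theta, {}^*\varphi)$ in which $\widetilde g$ given by \eqref{ResMetric} extends smoothly to $R = 0$ and the system \eqref{Scalar} is again symmetric hyperbolic with coefficients smooth up to $R = 0$; the same local theory applies, producing an extension across $\scri^+$. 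Finite propagation speed, together with the compact support of the data, confines the solution to the causal future of $\mathrm{supp}\,\widetilde\psi_A$, which is separated from $i_0$ by a positive distance in these coordinates.

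Finally, uniqueness of the symmetric hyperbolic Cauchy problem on each local chart allows one to glue the interior solution $\widetilde\phi_A$ on $\mathcal{B}_I$ with the boundary extensions across $\mathfrak{H}^+$ and $\scri^+$ into a single smooth section on $\bar{\mathcal{B}}_I \setminus \mathcal{O}$, for any open neighbourhood $\mathcal{O}$ of $\{i^+, i_0\}$; a covering argument by finitely many such charts, combined with the smoothness of the transition between the $*K$ and $K*$ coordinate systems on the overlap, yields the global statement. The analogous argument in the Kerr-star chart produces $\widehat\phi_A$. The hard part is precisely the reason $\mathcal{O}$ must be excised: the rescaled metrics $\widetilde g$ and $\widehat g$ fail to be smooth at $i^+$ and $i_0$, so no uniform symmetric-hyperbolic theory is available in a full neighbourhood of these idealised points. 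This is why the theorem is stated modulo an arbitrary neighbourhood $\mathcal{O}$; extending to or across $i^+$ is exactly what the subsequent sections address by truncating with the hyperboloidal hypersurface $\mathcal{H}_T$ and exploiting the pointwise decay assumption \eqref{pointwise}, which is not needed for the bare solvability asserted here.
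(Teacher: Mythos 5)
Your proposal is correct and follows essentially the same route the paper intends: the paper itself only sketches the argument by citing \cite{Ni2002} for well-posedness of the unrescaled Weyl equation on $\mathcal{B}_I$ and asserting that the extension to the rescaled problem on $\bar{\mathcal{B}}_I\setminus\mathcal{O}$ goes ``the same way'' as for the Klein--Gordon equation in \cite[Section 4.2]{Ni2002'}. Your reduction via conformal invariance ($\phi_0=R^2\widetilde\phi_0$, $\phi_1=R\widetilde\phi_1$) plus smooth extension across $\mathfrak{H}^+$ and $\scri^+$ in the rescaled Kerr-star and star-Kerr charts, using that $\widehat g$ and $\widetilde g$ are smooth and non-degenerate there and that the Weyl system is symmetric hyperbolic, is precisely the standard way of filling in that citation, and your explanation of why $\mathcal{O}$ must be excised matches the paper's remark.
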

\begin{remark}
Since $\mathcal{O}$ is a union of the arbitrary neighbourhoods of $i^+$ and $i_0$ (we can choose $\mathcal{O}$ arbitrary small), the existence and uniqueness of the rescaled solution in ${\cal C}^{\infty}(\bar{\mathcal{B}}_I/\mathcal{O};\mathbb{S}_A)$ of the Cauchy problem of the rescaled Weyl equation allows us to extend smooth the rescaled solution $\widetilde{\phi}_A$ (resp. $\widehat{\phi}_A$) on the conformal boundary $\mathfrak{H}^+\cup \scri^+$.
\end{remark}
\subsection{Hyperboloid foliation and assumption on decay rate of field} 
We consider a hyperboloidal foliation $\left\{\mathcal{H}_{\tau}\right\}_{\tau\geq 0}$ of $\bar{\mathcal{B}}_I$ given by
$$\mathcal{H}_\tau = \left\{(\tau,h(r),\theta,\varphi)| \tau = t^* - h(r) = t+r_* - h(r)\right\},$$
where 
\begin{eqnarray}\label{timefunction}
h(r) &=& 2(r-r_+) + 4M \log \left( \frac{r}{r_+} \right) + \frac{3M^2(r_+-r)^2}{r_+ r^2} + 2M \arctan \frac{(C-1)M}{r}\cr
&&-2M\arctan \frac{(C-1)M}{r_+}\,\,\, (\hbox{for   }C\geq 1).
\end{eqnarray}
Lemma $2.21$ in \cite{ABM} shows that for $r\geq r_+$, $h(r)$ satisfies 
$$\lim_{r\to r_+}h(r)=0, \lim_{r\to \infty} \frac{h(r)}{r} =2,\, \lim_{r\to r_+}\partial_r h(r) = 1,\, \partial_rh(r)\geq 0.$$

Since $\lim_{r\to \infty}\dfrac{h(r)}{r} = 2$, for $r$ large enough the hyperboloid hypersurface $\mathcal{H}_\tau$ is asymptotic to the level set of $^*t$.

Moreover, we have that
$$\partial_r h(r) = 2+\frac{4M}{r} + \frac{6M^2(r-r_+)}{r^3} - \frac{2(C-1)M^2}{(C-1)^2M^2+r^2}.$$
This leads to
\begin{equation}\label{CauCon}
0\leq \frac{a^2+r^2}{\Delta}\left( 1 - \sqrt{1-\frac{a^2\Delta}{(r^2+a^2)^2}} \right) < \partial_r h(r) < \frac{a^2+r^2}{\Delta}\left( 1 + \sqrt{1-\frac{a^2\Delta}{(r^2+a^2)^2}} \right).
\end{equation}
The inequality \eqref{CauCon} guarantees $\mathcal{H}_\tau$ is a Cauchy hypersurface (for more details see the proof of \cite[Lemma 2.1]{ABM}). 

There are some results about the pointwise decays (also called Price's law) of the Dirac fields on the Schwarzschild and Kerr spacetimes \cite{FiSm1,FiSm2,SmolXi,Ma2021}. 
In particular, in Schwarzschild spacetime Smoller and Xie \cite{SmolXi} use Chandrasekhar's separation of variables whereby the Dirac equations
split into two sets of wave equations, then show that the wave decays as $t^{-2\lambda}$, where $\lambda=1,2...$ is the angular momentum. On the other hand, Ma in \cite{Ma2021} establishes the decays by transforming the Dirac equation to the Teukolsky spin wave equations and then use the vector field method which consists Morawertz estimates and $r^p$-theory to obtain the decays in both time and spatial directions for these equations in Schwarzschild spacetime (see also \cite{ABM} for the linearized gravity fields). The decay results obtained for the Dirac field in \cite{Ma2021} (in detail the decay rates of the components are $v^{-\alpha}\tau^{-\beta}$) is improved the one in \cite{SmolXi}. 
Moreover, the time decay for the Dirac field on Kerr spacetime can be found in the works Finster et al. \cite{FiSm1,FiSm2}, where the the decay rate is $t^{-5/6}$. 

We notice that we need not an improved decay rate to construct a conformal scattering theory for the Dirac field. We need only that the field decays in both time and spatial directions with the suffit spatial decay rate. 
Therefore, in this work we assume that in abstract Kerr spacetime the Dirac field $\phi_A$ decays both in the time and spatial directions and the decay rate of the components $\phi_1$ and $\phi_0$ as follows 
\begin{equation}\label{pointwise}
|\phi_1|\lesssim t^{-1-\alpha}(t-r)^{-2-\beta},\, |\phi_0| \lesssim t^{-2-\alpha'}(t-r)^{-1-\beta'}
\end{equation}
where the constants $\alpha,\beta, \alpha',\beta' \geq 0$. This condition is valid for $a=0$, i.e, Schwarzschild spacetime (see \cite[Theorem 1.6]{Ma2021}) and for $0<a \ll M$, i.e, very slowly Kerr spacetime (see Appendix \ref{verySlowKer}).

\subsection{Energy fluxes and functional space of initial data}
The rescaled Weyl equation \eqref{Res} admits a null and future time orientation and conserved current
\begin{eqnarray}\label{CurrentConserved}
\widetilde{J}^a = \widetilde{\phi}^A\bar{\widetilde\phi}^{A'} &=& \widetilde{\varepsilon}^{AB}\widetilde{\phi}_B\widetilde{\varepsilon}^{A'B'}\bar{\widetilde{\phi}}_{B'} =\widetilde{\varepsilon}^{AB}\widetilde{\varepsilon}^{A'B'}(\widetilde{\phi}_1\widetilde{o}_B - \widetilde{\phi}_0\widetilde{\iota}_B)(\bar{\widetilde{\phi}}_1\bar{\widetilde{o}}_{B'} - \bar{\widetilde{\phi}}_0\bar{\widetilde{\iota}}_{B'})\cr
&=&\widetilde{\varepsilon}^{AB}\widetilde{\varepsilon}^{A'B'} (\vert\widetilde{\phi}_1\vert^2 \widetilde{o}_B\bar{\widetilde{o}}_{B'} - \widetilde{\phi}_1\bar{\widetilde{\phi}}_0\widetilde{o}_B\bar{\widetilde{\iota}}_{B'} - \widetilde{\phi}_0\bar{\widetilde{\phi}}_1\widetilde{\iota}_B\bar{\widetilde{o}}_{B'} + \vert\widetilde{\phi}_0\vert^2\widetilde{\iota}_B\bar{\widetilde{\iota}}_{B'} ) \cr
&=& \vert\widetilde{\phi}_1\vert^2\widetilde{l}^a\partial_a + \vert\widetilde{\phi}_0\vert^2\widetilde{n}^a\partial_a - \widetilde{\phi}_1\bar{\widetilde\phi}_0\widetilde{m}^a\partial_a - \widetilde{\phi}_0\bar{\widetilde{\phi}}_1\bar{\widetilde{m}}^a\partial_a.
\end{eqnarray}
Its Hodge dual is given as
\begin{eqnarray}\label{CurrentConserved1}
*\widetilde{J}_a\d x^a &=& \widetilde{J}^a \partial_a \hook \mathrm{dVol}\cr
&=& \left( \vert\widetilde{\phi}_1\vert^2\widetilde{l}^a\partial_a + \vert\widetilde{\phi}_0\vert^2\widetilde{n}^a\partial_a - \widetilde{\phi}_1\bar{\widetilde\phi}_0\widetilde{m}^a\partial_a - \bar{\widetilde{\phi}}_1\widetilde{\phi}_0\bar{\widetilde{m}}^a\partial_a  \right)\hook \mathrm{dVol}\cr
&=& \left( \vert\widetilde{\phi}_1\vert^2\widetilde{l}^a\partial_a + \vert\widetilde{\phi}_0\vert^2\widetilde{n}^a\partial_a - \widetilde{\phi}_1\bar{\widetilde\phi}_0\widetilde{m}^a\partial_a - \bar{\widetilde{\phi}}_1\widetilde{\phi}_0\bar{\widetilde{m}}^a\partial_a  \right)\hook i\widetilde{l}\wedge\widetilde{n}\wedge\widetilde{m}\wedge\bar{\widetilde{m}} \cr
&=& -i\widetilde{l}\wedge\widetilde{m}\wedge\bar{\widetilde{m}}\vert\widetilde{\phi}_1\vert^2 + i\widetilde{n}\wedge\widetilde{m}\wedge\bar{\widetilde{m}}\vert\widetilde{\phi}_0\vert^2 + i\widetilde{l}\wedge\widetilde{n}\wedge\widetilde{m}\widetilde{\phi}_1\bar{\widetilde{\phi}}_0 - i\widetilde{l}\wedge\widetilde{n}\wedge\bar{\widetilde{m}}\bar{\widetilde\phi}_1\widetilde{\phi}_0.
\end{eqnarray}
Since $\widetilde{\nabla}^{AA'}\widetilde{\phi}_A = 0$, we have the following conservation law 
\begin{equation}
\widetilde{\nabla}^a \widetilde{J}_a =0.
\end{equation}

Let ${S}$ be a boundary of a bounded open set $\Omega$ and has outgoing
orientation, the energy flux on a oriented hypersurface $\cal{S}$ is defined as
$${\cal E}_{\mathcal{S}}(\widetilde{\phi}_A) = -4\int_{\cal S}*\widetilde{J}_a\d x^a.$$ 
Using Stokes theorem, we have
\begin{equation}\label{Stokesformula}
-4\int_{\mathcal{S}}*\widetilde{J}_a\d x^a = \int_{\Omega}(\widetilde{\nabla}_a\widetilde{J}^a)\mathrm{{dVol}}_{\widetilde g}.
\end{equation}
Therefore
\begin{equation}\label{energy}
{\cal E}_{\mathcal{S}}(\widetilde{\phi}_A) = \int_{\mathcal{S}} \widetilde{J}_a\widetilde{N}^a\widetilde{L}\hook \mathrm{{dVol}}_{\widetilde g}|_{\mathcal{S}},
\end{equation}
where $\widetilde{L}$ is a transverse vector to $\mathcal{S}$ and $\widetilde{N}$ is the normal vector field to $\mathcal{S}$ such that $\widetilde{L}^a\widetilde{N}_a=1$.

We cut-off the compactifiation domain $\bar{\mathcal{B}}_I$ by the hyperboloid hypersurface $\mathcal{H}_T,\, T>0$ of the foliation $\left\{ \mathcal{H}_\tau\right\}_{\tau\geq 0}$. We consider domain $\Omega\subset \bar{\mathcal{B}}_I$ which is closed by the hypersufaces $\Sigma_0$, $\mathfrak{H}^+_T$, $\scri^+_T$ and $\Sigma_T$ for $T>0$ large enoguh. We can see that $\Sigma_T$ tends to $i^+$ as $T$ tends to infinity.

\begin{figure}[H]
\begin{center}
\includegraphics[scale=0.7]{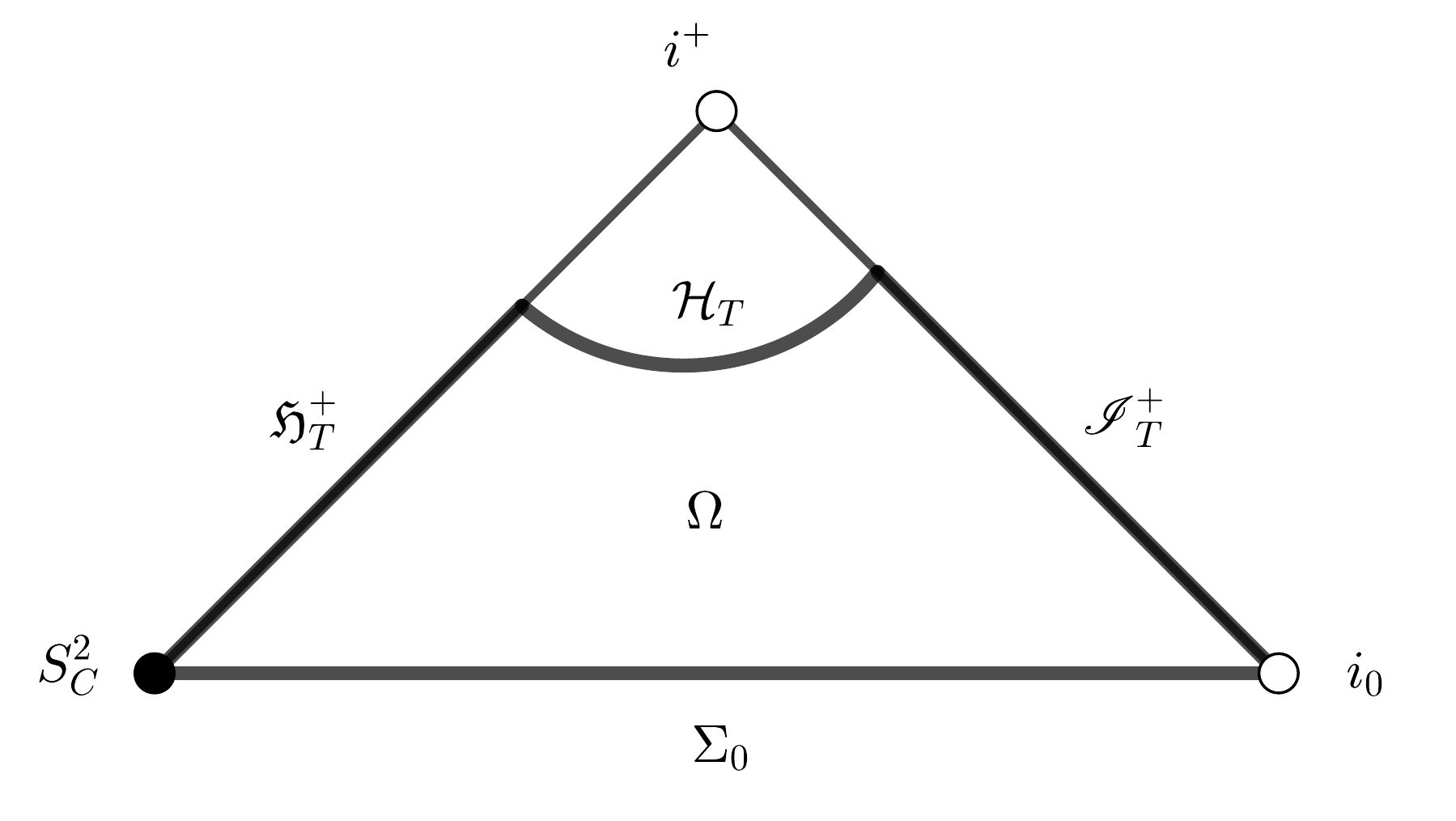}
\caption{The hyperboloid hypersurface $\mathcal{H}_T$ and domain $\Omega \subset \bar{\mathcal{B}}_I$.}
\end{center}
\end{figure}

\begin{lem}\label{EquivalentSimplerEnergy}
Consider the smooth and compactly supported initial data on $\Sigma_0$,
the energy fluxes of the Dirac field through Cauchy hypersurface $\Sigma_0$, the conformal boundaries $\mathfrak{H}_T^+$ and $\scri_T^+$ and $\mathcal{H}_T$ have the following simpler equivalent expressions
$$\mathcal{E}_{\Sigma_0}(\widetilde{\phi}_A) \simeq \frac{1}{\sqrt{2}} \int_{\Sigma_0} (R^2|\widetilde{\phi}_0|^2 + |\widetilde{\phi}_1|^2) \d r_*\d^2\omega,$$
$$\,\mathcal{E}_{\scri_T^+}(\widetilde{\phi}_A) \simeq \frac{1}{\sqrt{2}} \int_{\scri^+_T} |\widetilde{\phi}_1|_{\scri^+}|^2 \d {^*t}\d^2\omega,\, \mathcal{E}_{\mathfrak{H}_T^+}(\widetilde{\phi}_A) \simeq \frac{1}{\sqrt{2}}\int_{\mathfrak{H}_T^+}\vert\sqrt{\Delta}\widetilde{\phi}_0|_{\mathfrak{H}^+}\vert^2 \d {t^*}\d^2\omega,$$
$$\mathcal{E}_{\mathcal{H}_T}(\widetilde{\phi}_A) \simeq \int_{\mathcal{H}_T} \left( \vert\widetilde{\phi}_1\vert^2 + \vert\widetilde{\phi}_0\vert^2 \right) R^2 \d r_*\d^2\omega,$$
for $r$ large enough.
\end{lem}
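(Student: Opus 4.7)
The plan is to compute the pullback of the Hodge-dual 3-form $*\widetilde{J}_a\d x^a$ given in \eqref{CurrentConserved1} to each of the four hypersurfaces in turn. On each hypersurface, $*\widetilde{J}$ becomes a quadratic form in $(\widetilde{\phi}_0,\widetilde{\phi}_1)$ with coefficients given by pullbacks of the three-fold wedge products of $\widetilde{l},\widetilde{n},\widetilde{m},\bar{\widetilde{m}}$. The equivalences then follow by (i) identifying which monomials survive (or dominate) the pullback and (ii) dominating the cross terms in $\widetilde{\phi}_0\bar{\widetilde{\phi}}_1$ by the diagonal ones via Cauchy-Schwarz, $|\widetilde{\phi}_0\bar{\widetilde{\phi}}_1|\leq\tfrac{\lambda}{2}|\widetilde{\phi}_0|^2+\tfrac{1}{2\lambda}|\widetilde{\phi}_1|^2$, for a weight $\lambda$ chosen to match the asymptotic sizes of the two diagonal coefficients.

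On the null conformal boundaries the computation is particularly clean because many of the tetrad 1-forms vanish after pullback. On $\scri^+_T$, working in star-Kerr coordinates at $R=0$ with $\d R=0$, the dual 1-form $\widetilde{n}_a\d x^a$ in \eqref{dualNew3} is killed: its $R^2\d{^*t}$ and $aR^2\sin^2\theta\,\d{^*\varphi}$ terms vanish pointwise, and its $-\sqrt{2\rho^2/\Delta}\,\d R$ term is annihilated by the pullback. Consequently every monomial containing $\widetilde{n}$ in \eqref{CurrentConserved1} drops out, leaving only $-i\widetilde{l}\wedge\widetilde{m}\wedge\bar{\widetilde{m}}\,|\widetilde{\phi}_1|^2$; a direct computation of that 3-form at $R=0$, using $\sqrt{\Delta/(2\rho^2)}\to 1/\sqrt 2$ and the limits of $\widetilde{m}$ in \eqref{dualNew3}, yields a multiple of $\d{^*t}\wedge\d^2\omega$ with a nonzero constant of order $1/\sqrt 2$. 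On $\mathfrak{H}^+_T$ one switches to Kerr-star coordinates and to the $\widehat{\,}$-tetrad \eqref{dualNew33}, rewriting $*\widetilde{J}$ in that tetrad by means of $\widehat{\phi}_0=R\widetilde{\phi}_0$, $\widehat{\phi}_1=r\widetilde{\phi}_1$; the pullback at $r=r_+$ produces the $\sqrt{\Delta}$ weights in the $\widehat{n}$ 1-form, which converts the surviving $|\widehat{\phi}_0|^2$ contribution into the claimed expression $|\sqrt{\Delta}\,\widetilde{\phi}_0|_{\mathfrak{H}^+}|^2\,\d t^*\d^2\omega$ modulo a universal constant depending on $r_+$ and $a$.

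On the spacelike hypersurfaces all four monomials of \eqref{CurrentConserved1} contribute. Pulling back to $\Sigma_0=\{t=0\}$ in Boyer-Lindquist coordinates, the diagonal 3-forms $-i\widetilde{l}\wedge\widetilde{m}\wedge\bar{\widetilde{m}}$ and $i\widetilde{n}\wedge\widetilde{m}\wedge\bar{\widetilde{m}}$ are proportional to $\d r_*\wedge\d^2\omega$ with coefficients of size $1$ and $R^2$ respectively (the Jacobian $\d r=(\Delta/(r^2+a^2))\,\d r_*$ is absorbed in the dualization), while the two cross-term 3-forms $\pm i\widetilde{l}\wedge\widetilde{n}\wedge\widetilde{m}^{(\prime)}$ have coefficients of size $R$, exactly the geometric mean of the two diagonal weights. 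A weighted Cauchy-Schwarz with $\lambda=R$ then bounds the cross terms by $\tfrac{1}{2}(R^2|\widetilde{\phi}_0|^2+|\widetilde{\phi}_1|^2)$ from above (and a matching lower bound holds by choosing $\varepsilon$ sufficiently small), giving the stated equivalence. The argument on $\mathcal{H}_T$ is identical for $r$ large enough, since $h(r)/r\to 2$ by \eqref{timefunction} implies that $\mathcal{H}_T$ is asymptotic to a level set of $^*t$, so the pullbacks differ from those on $\Sigma_0$ only by bounded factors; the two weights are both of size $R^2$ for $r$ large, which accounts for the $R^2$ factor common to both diagonal terms in the expression for $\mathcal{E}_{\mathcal{H}_T}$.

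The main obstacle is the bookkeeping of weights in $R$ and $\sqrt{\Delta}$ at the two null boundaries, because the relevant Newman-Penrose tetrad is degenerate there and one must pass to the boundary-adapted $\widetilde{\,}$-tetrad at $\scri^+$ or $\widehat{\,}$-tetrad at $\mathfrak{H}^+$, and then correctly convert back to $\widetilde{\phi}_0,\widetilde{\phi}_1$. Away from the null boundaries, the only real point is to verify that the off-diagonal 3-form coefficients are controlled by the geometric mean of the diagonal ones uniformly in $(R,\theta)$, which follows from the explicit expressions \eqref{dualNew3} after a short algebraic check.
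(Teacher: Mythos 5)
Your overall strategy (pull back the Hodge dual \eqref{CurrentConserved1} to each hypersurface, read off a quadratic form in $(\widetilde{\phi}_0,\widetilde{\phi}_1)$, and absorb the cross terms into the diagonal ones) is the same as the paper's, and your treatment of $\scri^+_T$ and $\mathfrak{H}^+_T$ is essentially in line with it. The gap is in the step you rely on to get a \emph{two-sided} bound on the spacelike and hyperboloidal pieces. You assert that the off-diagonal coefficients are ``exactly the geometric mean'' of the diagonal ones and that a weighted Cauchy--Schwarz with $\lambda=R$ plus ``choosing $\varepsilon$ sufficiently small'' gives a matching lower bound. If the cross coefficient really saturated the geometric mean, the density would be of the type $\bigl(\sqrt{A}\,|\widetilde{\phi}_0|-\sqrt{B}\,|\widetilde{\phi}_1|\bigr)^2$: nonnegative but degenerate, and no choice of weight in Cauchy--Schwarz recovers a lower bound by $A|\widetilde{\phi}_0|^2+B|\widetilde{\phi}_1|^2$, so the claimed equivalence would simply fail. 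What makes the lemma true is that the cross coefficients lie \emph{strictly below} the geometric mean: on $\Sigma_0$ they carry the extra factor $aR\sin\theta/\rho\simeq a\sin\theta\,R^2$ against a geometric mean of size $R$ (you mis-size them as $R$), so for $r$ large they are negligible; this strict gap is the point your argument never establishes.

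The situation is worse on $\mathcal{H}_T$, which you dismiss as ``identical \ldots up to bounded factors.'' Because $\mathcal{H}_\tau$ is asymptotically characteristic, the pullback produces the factor $2-\frac{\Delta}{r^2+a^2}\partial_r h$, which by \eqref{timefunction} degenerates like $R^2$; this is precisely what converts the weight in front of $|\widetilde{\phi}_1|^2$ from $1$ on $\Sigma_0$ to $R^2$ on $\mathcal{H}_T$, and it contradicts your own ``bounded factors'' claim (which would give weights $(1,R^2)$, not $(R^2,R^2)$ as you state and as the lemma requires). Moreover, with $X=\frac{\Delta}{r^2+a^2}\partial_r h$ the $|\widetilde{\phi}_0|^2$-weight is $\sqrt{\Delta\rho^2R^4/2}\,\bigl(\frac{2}{2-X}-1\bigr)R^2$, and the strict domination of the cross terms amounts exactly to the inequality $2X-X^2>\frac{a^2\Delta}{(r^2+a^2)^2}$, i.e.\ the spacelike (Cauchy) condition \eqref{CauCon} satisfied by $h$; this is the core computation of the paper's proof of the last equivalence, and it cannot be bypassed by an asymptotic comparison with $\Sigma_0$. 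As written, your proposal proves only the upper bounds on $\Sigma_0$ and $\mathcal{H}_T$, not the equivalences.
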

\begin{proof}
On $\Sigma_0$, we take
$$\widetilde{L}= a_1\partial_{^*t} + a_2\partial_{^*\varphi}, \, \widetilde{N}=\partial_{^*t}=\partial_t,$$
where the functional coefficients $a_1,\, a_2$ are positive and satisfying that $g(\widetilde{L}^a,\widetilde{N}^a) =1$, i.e, 
$$a_1 R^2\left( 1-\frac{2Mr}{\rho^2} \right) + a_2 \frac{2MaR\sin^2\theta}{\rho^2} =1.$$
Therefore,
$$\widetilde{L}^a \hook \mathrm{dVol}_{\widetilde{g}}|_{\Sigma_0}  = a_1\partial_{^*t} \hook \mathrm{dVol}_{\widetilde{g}} \simeq \d r_*\wedge \d^2\omega.$$ 
On the other hand, using \eqref{resNew2} and \eqref{CurrentConserved} we have
\begin{eqnarray*}
\widetilde{g}(\widetilde{J}^a,\widetilde{N}^a) &=& \widetilde{g}\left( \vert\widetilde{\phi}_1\vert^2\widetilde{l}^a + \vert\widetilde{\phi}_0\vert^2\widetilde{n}^a - \widetilde{\phi}_1\bar{\widetilde\phi}_0\widetilde{m}^a - \widetilde{\phi}_0\bar{\widetilde{\phi}}_1\bar{\widetilde{m}}^a , \partial_{^*t} \right)\cr 
&=& \sqrt{\frac{\Delta}{2\rho^2}}(\vert\widetilde{\phi}_1\vert^2 + R^2\vert\widetilde{\phi}_0\vert^2) + \frac{iaR\sin\theta}{p\sqrt{2}} \widetilde{\phi}_1\bar{\widetilde\phi}_0 - \frac{iaR\sin\theta}{\bar{p}\sqrt{2}} \widetilde{\phi}_0\bar{\widetilde\phi}_1\cr
&\simeq& \frac{1}{\sqrt{2}}(\vert\widetilde{\phi}_1\vert^2 + R^2\vert\widetilde{\phi}_0\vert^2),
\end{eqnarray*}
for $r>r_+$ ($r_+$ large enough). Therefore, we obtain that
$$\mathcal{E}_{\Sigma_0}(\widetilde{\phi}_A) \simeq \frac{1}{\sqrt{2}} \int_{\Sigma_0} (R^2|\widetilde{\phi}_0|^2 + |\widetilde{\phi}_1|^2) \d r_*\d^2\omega.$$

On $\scri^+$, we take $\widetilde{N}=\partial_{^*t}$ on $\scri^+$ and $\widetilde{L}_{\scri^+}= -\partial_R$ in the rescaled star-Kerr coordinates $({^*t},R,\theta,{^*\varphi})$:
$$\widetilde{L}_{\scri^+}= \frac{2r^2}{\Delta}l|_{\scri^+},$$
where $l^a\partial_a$ given by \eqref{New2}. The energy flux through $\scri^+_T$ is
\begin{eqnarray*}
\mathcal{E}_{\scri^+_T}(\widetilde{\phi}_A) &\simeq& \int_{\scri^+} \frac{1}{\sqrt{2}}\left(\vert\widetilde{\phi}_1\vert^2 + R^2\vert\widetilde{\phi}_0\vert^2 \right)|_{\scri^+} \d {^*t}\d^2\omega\cr
&\simeq& \frac{1}{\sqrt{2}}\int_{\scri^+}\vert\widetilde{\phi}_1|_{\scri^+}\vert^2 \d {^*t}\d^2\omega.
\end{eqnarray*}

By the same way as above, on $\mathfrak{H}^+$, we take $\widehat{N}=\partial_{t^*}$ and $\widehat{L}_{\mathfrak{H}^+} = \partial_R$ in the rescaled Kerr-star coordinates $({t^*},R,\theta, \varphi^*)$ with the rescaled metric $\widehat{g}$:
$$\widehat{L}_{\mathfrak{H}^+} = \frac{2r^2}{\Delta}n|_{\mathfrak{H}^+},$$
where $n^a\partial_a$ given by \eqref{New2}. The energy flux through $\mathfrak{H}^+$ is
\begin{eqnarray}\label{eqonH}
\mathcal{E}_{\mathfrak{H}^+_T}(\widehat{\phi}_A) &\simeq& \int_{\mathfrak{H}_T^+}\left[ \sqrt{\frac{\Delta}{2\rho^2}}\left(R^2\vert\widehat{\phi}_1\vert^2 + \vert\widehat{\phi}_0\vert^2 \right)\right]|_{\mathfrak{H}^+} \d {t^*}\d^2\omega\cr
&\simeq& \frac{1}{\sqrt{2}}\int_{\mathfrak{H}_T^+}\vert\sqrt{\Delta}\widehat{\phi}_0|_{\mathfrak{H}^+}\vert^2 \d {t^*}\d^2\omega.
\end{eqnarray}
The last equivalence is due to the contraction $(\sqrt{\Delta}\widehat{\phi}_1)|_{\mathfrak{H}^+}=0$. Moreover, $\widehat{\phi}_1$ depends to $\sqrt{\Delta}\widetilde{\phi}_0$ on $\mathfrak{H}^+$ (see equation \eqref{Phi1eq}). The equivalent equality \eqref{eqonH} leads to 
$$\mathcal{E}_{\mathfrak{H}^+_T}(\widetilde{\phi}_A) \simeq \frac{1}{\sqrt{2}}\int_{\mathfrak{H}_T^+}\vert\sqrt{\Delta}\widetilde{\phi}_0|_{\mathfrak{H}^+}\vert^2 \d {t^*}\d^2\omega.$$

On $\mathcal{H}_T$ we have $T = t^* - h(r) = t+r_* - h(r)$ then a co-normal to $\mathcal{H}_T$ is given by
$$\widetilde{N}_a\d x^a = \d {^*t} + \left(2 - \frac{\Delta}{r^2+a^2}\partial_r h\right) \d r_*.$$
Hence
$$\d {^*t} = \frac{r^2(r^2+a^2)}{\Delta} \left( 2-\frac{\Delta}{r^2+a^2}\partial_rh \right) \d R = r^2\alpha \d R,$$
where $\alpha = \frac{r^2+a^2}{\Delta} \left( 2-\frac{\Delta}{r^2+a^2}\partial_rh \right)$.

Therefore, the rescaled dual Newman-Penrose tetrad \eqref{dualNew3} restricted on ${\cal H}_T$ is
\begin{eqnarray*}
\widetilde{l}_a \d x^a &=& \sqrt{\frac{\Delta}{2\rho^2}} \left( \d {^*t} - a\sin^2\theta \d {^*\varphi} \right),\cr
\widetilde{n}_a|_{{\cal H}_T} \d x^a &=& \sqrt{\frac{\Delta}{2\rho^2}} \left( \left( R^2 - \frac{2\rho^2R^2}{\Delta\alpha} \right) \d {^*t} - aR^2\sin^2\theta \d {^*\varphi} \right),\cr
\widetilde{m}_a \d x^a &=& \frac{1}{p\sqrt 2} \left( iaR\sin\theta \d {^*t} - R\rho^2 \d \theta - iR(r^2+a^2)\sin\theta \d {^*\varphi} \right).
\end{eqnarray*}
These equalities yield that
\begin{eqnarray*}
\widetilde{l}\wedge\widetilde{m}\wedge\bar{\widetilde{m}} &=& \sqrt{\frac{\Delta}{2\rho^2}}\frac{1}{2\rho^2} (-2iR^2\rho^2(r^2+a^2)\sin\theta \d{^*t}\wedge \d\theta \wedge \d{^*\varphi} \cr
&& \hspace{3cm}+ 2ia^2R^2\rho^2\sin^3\theta \d{^*t}\wedge \d\theta \wedge \d{^*\varphi} )\cr
&=&\sqrt{\frac{\Delta\rho^2}{2}}iR^2\d{^*t}\d^2\omega,
\end{eqnarray*}
\begin{eqnarray*}
\widetilde{n}|_{{\cal H}_T}\wedge\widetilde{m}\wedge\bar{\widetilde{m}} &=& \sqrt{\frac{\Delta}{2\rho^2}}\frac{1}{2\rho^2} \left( -2iR^4\rho^2\left( 1 - \frac{2\rho^2}{\Delta\alpha}\right)(r^2+a^2)\sin\theta\d{^*t}\wedge \d\theta \wedge \d{^*\varphi} \right.\cr
&&\hspace{3cm} \left. +2ia^2R^4\rho^2\sin^3\theta \d{^*t}\wedge\d\theta\wedge\d{^*\varphi} \right) \\
&=& -\sqrt{\frac{\Delta\rho^2}{2}}i\left( 1 - \frac{2}{2-\frac{\Delta}{r^2+a^2}\partial_rh} \right)R^4\d{^*t}\d^2\omega,
\end{eqnarray*}
and
\begin{eqnarray*}
\widetilde{l}\wedge\widetilde{n}|_{{\cal H}_T}\wedge \widetilde{m}&=& \frac{\Delta}{2\rho^2}\frac{1}{p\sqrt{2}}\left(-aR^3\rho^2\sin^2\theta \d{^*t}\wedge\d\theta\wedge\d{^*\varphi}\right.\cr
&&\hspace{3cm} \left. +aR\rho^2\sin^2\theta\left( R^2 - \frac{2R^2\rho^2}{\Delta\alpha} \right) \d{^*t}\wedge\d\theta\wedge\d{^*\varphi}\right)\cr
&=&-\frac{a\Delta R^3\bar{p}\sin\theta}{\sqrt{2}\Delta\alpha}\d{^*t}\d^2\omega,
\end{eqnarray*}
similarly
\begin{eqnarray*}
\widetilde{l}\wedge\widetilde{n}|_{{\cal H}_T}\wedge \bar{\widetilde{m}}&=& \frac{a\Delta R^3p\sin\theta}{\Delta\alpha}\d{^*t}\d^2\omega.
\end{eqnarray*}
Hence, the Hodge dual \eqref{CurrentConserved1} restricted on ${\cal H}_T$ is
\begin{eqnarray*}
-\omega|_{{\cal H}_T}&=& \sqrt{\frac{\Delta\rho^2}{2}}R^2\vert\widetilde{\phi}_1\vert^2\d{^*t}\d^2\omega + \sqrt{\frac{\Delta\rho^2}{2}}\left( \frac{2}{2-\frac{\Delta}{r^2+a^2}\partial_rh} - 1 \right)R^4\vert\widetilde{\phi}_0\vert^2 \d{^*t}\d^2\omega\cr 
&&-\frac{ia\bar{p}\Delta R^3\sin\theta}{\Delta\alpha}\widetilde{\phi}_1\bar{\widetilde{\phi}}_0\d{^*t}\d^2\omega + \frac{iap\Delta R^3\sin\theta}{\sqrt{2}\Delta\alpha}\bar{\widetilde{\phi}}_1 \widetilde{\phi}_0\d{^*t}\d^2\omega\cr
&=& \sqrt{\frac{\Delta\rho^2R^4}{2}} \left( \vert\widetilde{\phi}_1\vert^2 + \left( \frac{2}{2-\frac{\Delta}{r^2+a^2}\partial_rh} - 1 \right)R^2\vert\widetilde{\phi}_0\vert^2 \right)\d{^*t}\d^2\omega\cr
&&- \left( \frac{ia\bar{p}R^3\sin\theta}{\sqrt{2}\alpha}\widetilde{\phi}_1\bar{\widetilde{\phi}}_0 - \frac{iapR^3\sin\theta}{\sqrt{2}\alpha}\bar{\widetilde{\phi}}_1\widetilde{\phi}_0 \right)\d{^*t}\d^2\omega.
\end{eqnarray*}
We have that
\begin{eqnarray*}
&&\sqrt{\frac{\Delta\rho^2R^4}{2}} \left( 1 + \left( \frac{2}{2-\frac{\Delta}{r^2+a^2}\partial_rh} - 1 \right)R^2 \right)\cr
&\geq& 2\frac{|p|R^3\sqrt{\Delta}}{\sqrt{2}} \sqrt{\frac{2}{2-\frac{\Delta}{r^2+a^2}\partial_rh} - 1} \hbox{  and  } 2\frac{|\bar{p}|R^3\sqrt{\Delta}}{\sqrt{2}} \sqrt{\frac{2}{2-\frac{\Delta}{r^2+a^2}\partial_rh} - 1}. 
\end{eqnarray*}
We will show that
$$\sqrt{\frac{2}{2-\frac{\Delta}{r^2+a^2}\partial_rh} - 1} > \frac{1}{\alpha\sqrt{\Delta}}.$$
Putting $X= \frac{\Delta}{r^2+a^2}\partial_rh$, the above inequality is equivalent to
$$\sqrt{\frac{X}{2-X}} > \frac{a\sqrt{\Delta}}{(r^2+a^2)(2-X)}$$
and then 
$$2X-X^2 > \frac{a^2\Delta}{(r^2+a^2)^2}.$$
The last inequality holds by \eqref{CauCon}. Therefore, we can obtain two side estimates of $-\omega|_{{\cal H}_T}$ and give the energy flux through the hypersurface $\mathcal{H}_T$ as
\begin{eqnarray*}
\mathcal{E}_{\mathcal{H}_T}(\widetilde{\phi}_A) &\simeq& \int_{\cal{H}_T}\sqrt{\frac{\Delta\rho^2R^4}{2}} \left( \vert\widetilde{\phi}_1\vert^2 + \left( \frac{2}{2-\frac{\Delta}{r^2+a^2}\partial_rh} - 1 \right)R^2\vert\widetilde{\phi}_0\vert^2 \right)\d{^*t}\d^2\omega\cr
&\simeq&\int_{\cal{H}_T}\sqrt{\frac{\Delta\rho^2R^4}{2}} \left( \vert\widetilde{\phi}_1\vert^2 + \left( \frac{2}{2-\frac{\Delta}{r^2+a^2}\partial_rh} - 1 \right)R^2\vert\widetilde{\phi}_0\vert^2 \right)\left(2-\frac{\Delta}{r^2+a^2}\partial_rh \right)\d r_*\d^2\omega.
\end{eqnarray*}
For $r$ large enough, by \eqref{timefunction} we have
$$2-\frac{\Delta}{r^2+a^2}\partial_rh \simeq 2- \frac{\Delta}{r^2+a^2}\left( 2+ \frac{4M}{r}\right) \simeq R^2. $$
Therefore
$$\mathcal{E}_{\mathcal{H}_T}(\widetilde{\phi}_A) \simeq \int_{\cal{H}_T}\left( \vert\widetilde{\phi}_1\vert^2 + \vert\widetilde{\phi}_0\vert^2 \right)R^2\d r_*\d^2\omega.$$

\end{proof}
\begin{proposition}
Consider the smooth and compactly supported initial data on $\Sigma_0$, we can define the energy fluxes of the rescaled solution $\widetilde{\phi}_A$ across the null conformal boundary $\mathfrak{H}^+\cup \scri^+$ by 
$${\mathcal{E}}_{\scri^+}(\widetilde{\phi}_A) + {\mathcal{E}}_{\mathfrak{H}^+}(\widetilde{\phi}_A) := \lim_{T\rightarrow\infty}\mathcal{E}_{\scri_T^+}(\widetilde{\phi}_A) + {\mathcal{E}}_{\mathfrak{H}_T^+}(\widetilde{\phi}_A).$$
Moreover, we have
$$\mathcal{E}_{\scri^+}(\widetilde{\phi}_A) + \mathcal{E}_{\mathfrak{H}^+}(\widetilde{\phi}) \leq \mathcal{E}_{\Sigma_0}(\widetilde{\phi}_A),$$
where the equality holds if and only if $\lim_{T\rightarrow \infty}\mathcal{E}_{\mathcal{H}_T}(\widetilde{\phi}_A) = 0$.\end{proposition}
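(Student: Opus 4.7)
The plan is to apply Stokes' theorem to the conserved current $\widetilde J^a=\widetilde\phi^A\bar{\widetilde\phi}{}^{A'}$ on the bounded domain $\Omega_T\subset\bar{\mathcal B}_I$ whose boundary is $\Sigma_0\cup\mathfrak H^+_T\cup\scri^+_T\cup\mathcal H_T$, and then to pass to the limit $T\to\infty$. Since $\widetilde\nabla^a\widetilde J_a=0$, formula \eqref{Stokesformula} gives, with all hypersurfaces oriented outwards,
\begin{equation*}
\mathcal E_{\Sigma_0}(\widetilde\phi_A)=\mathcal E_{\mathfrak H^+_T}(\widetilde\phi_A)+\mathcal E_{\scri^+_T}(\widetilde\phi_A)+\mathcal E_{\mathcal H_T}(\widetilde\phi_A),
\end{equation*}
where care must be taken with the sign of the co-normal on $\Sigma_0$ versus the other pieces; this is the standard Green identity in the form already used to derive \eqref{energy}.

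Next I would exploit the positivity of every term on the right. By Lemma \ref{EquivalentSimplerEnergy}, each of $\mathcal E_{\mathfrak H^+_T}(\widetilde\phi_A)$, $\mathcal E_{\scri^+_T}(\widetilde\phi_A)$ and $\mathcal E_{\mathcal H_T}(\widetilde\phi_A)$ is equivalent to a manifestly non-negative $L^2$-type integral (note in particular that the positivity on $\mathcal H_T$ was the whole point of the Cauchy-surface inequality \eqref{CauCon} used in the previous lemma). Hence the three functions
\begin{equation*}
T\mapsto\mathcal E_{\mathfrak H^+_T}(\widetilde\phi_A),\qquad T\mapsto\mathcal E_{\scri^+_T}(\widetilde\phi_A),\qquad T\mapsto \mathcal E_{\mathfrak H^+_T}(\widetilde\phi_A)+\mathcal E_{\scri^+_T}(\widetilde\phi_A)
\end{equation*}
are monotone non-decreasing in $T$ and bounded above by $\mathcal E_{\Sigma_0}(\widetilde\phi_A)$. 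Monotone convergence therefore legitimates the definition
\begin{equation*}
\mathcal E_{\scri^+}(\widetilde\phi_A)+\mathcal E_{\mathfrak H^+}(\widetilde\phi_A):=\lim_{T\to\infty}\bigl(\mathcal E_{\scri^+_T}(\widetilde\phi_A)+\mathcal E_{\mathfrak H^+_T}(\widetilde\phi_A)\bigr),
\end{equation*}
and yields the inequality $\mathcal E_{\scri^+}(\widetilde\phi_A)+\mathcal E_{\mathfrak H^+}(\widetilde\phi_A)\le\mathcal E_{\Sigma_0}(\widetilde\phi_A)$.

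Finally, for the equality case I would use the conservation identity rewritten as
\begin{equation*}
\mathcal E_{\mathcal H_T}(\widetilde\phi_A)=\mathcal E_{\Sigma_0}(\widetilde\phi_A)-\bigl(\mathcal E_{\mathfrak H^+_T}(\widetilde\phi_A)+\mathcal E_{\scri^+_T}(\widetilde\phi_A)\bigr).
\end{equation*}
The right-hand side converges as $T\to\infty$ by the previous step, so $\lim_{T\to\infty}\mathcal E_{\mathcal H_T}(\widetilde\phi_A)$ exists, and it vanishes precisely when equality holds in the energy inequality. Since the data is smooth and compactly supported and the Cauchy problem on $\bar{\mathcal B}_I/\mathcal O$ is well-posed (Theorem \ref{cauchyproblem}), there is no difficulty in applying Stokes' theorem on the bounded domain $\Omega_T$; the only delicate point is that $\Omega_T$ touches the null boundaries $\mathfrak H^+_T$ and $\scri^+_T$ where the rescaled tetrad degenerates in the usual null sense, which I would handle by approximating with spacelike hypersurfaces slightly inside $\mathcal B_I$ and using the smooth extension of $\widetilde\phi_A$ up to the conformal boundary guaranteed by Theorem \ref{cauchyproblem} to pass to the limit. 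The remainder of the argument is then purely book-keeping on the monotone sequences above; the main conceptual obstacle is verifying the positivity and monotonicity of $\mathcal E_{\mathcal H_T}$, which is already built into \eqref{CauCon} and Lemma \ref{EquivalentSimplerEnergy}.
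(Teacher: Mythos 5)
Your proposal is correct and follows essentially the same route as the paper: integrate the conservation law $\widetilde{\nabla}^a\widetilde{J}_a=0$ over the domain bounded by $\Sigma_0$, $\mathfrak{H}^+_T$, $\scri^+_T$ and $\mathcal{H}_T$ via the Stokes formula to get the exact identity $\mathcal{E}_{\Sigma_0}=\mathcal{E}_{\scri^+_T}+\mathcal{E}_{\mathfrak{H}^+_T}+\mathcal{E}_{\mathcal{H}_T}$, then use non-negativity and monotonicity in $T$ of the boundary fluxes to define the limit, obtain the inequality, and read off the equality criterion from the same identity. The extra care you take about approximating near the null boundaries is not spelled out in the paper but does not change the argument.
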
 
\begin{proof} 
Intergrating the conservation law $\widetilde{\nabla}^a \widetilde{J}_a =0$ on the domain $\Omega$ and by using the Stokes's formula \eqref{Stokesformula}, we get an exact energy identity between the hypersurfaces $\Sigma_0, \, \mathfrak{H}_T^+, \, \mathcal{M}_T, \, \mathcal{N}_T$ and $\scri_T^+$ (for $T>0$) as follows 
\begin{equation}\label{energyidentity}
\mathcal{E}_{\Sigma_0}(\widetilde{\phi}_A)= \mathcal{E}_{\scri_T^+}(\widetilde{\phi}_A) + \mathcal{E}_{\mathfrak{H}_T^+}(\widetilde{\phi}_A) + \mathcal{E}_{\mathcal{H}_T}(\widetilde{\phi}_A).
\end{equation}
Therefore, the energy fluxes through $\scri_T^+$ and $\mathfrak{H}_T^+$ are non negative increasing functions of $T$ and their sum is bounded by $\mathcal{E}_{\Sigma_0}(\widetilde{\phi}_A)$. Hence, the limit of $\mathcal{E}_{\scri_T^+}(\widetilde{\phi}_A) + \mathcal{E}_{\mathfrak{H}_T^+}(\widetilde{\phi}_A)$ exists and the following sum is well defined
\begin{equation}\label{limitenergy}
\mathcal{E}_{\scri^+}(\widetilde{\phi}_A) + \mathcal{E}_{\mathfrak{H}^+}(\widetilde{\phi}_A) := \lim_{T\rightarrow\infty}\mathcal{E}_{\scri_T^+}(\widetilde{\phi}_A) + \mathcal{E}_{\mathfrak{H}_T^+}(\widetilde{\phi}_A) = \mathcal{E}_{\Sigma_0}(\widetilde{\phi}_A) - \lim_{T\rightarrow\infty}\mathcal{E}_{\mathcal{H}_T}(\widetilde{\phi}_A).
\end{equation}
The proposition now holds from the above identity.
\end{proof}

\subsection{Energy identity up to $i^+$ and trace operator}
\begin{theorem}\label{egalite_energies}
Let $\widetilde{\phi}_A$ be a solution to the rescaled equation \eqref{Res}, with the smooth and compactly supported initial data on $\Sigma_0$ satisfying some bounded conditions on the hyperboloid hypersurface $\Sigma_0$ such that the pointwise decay assumption \eqref{pointwise} holds.
The energies of $\widetilde{\phi}_A$ through the hypersurfaces $\mathcal{H}_T$ tend to zero as $T$ tends to infinity
$$\lim_{T\rightarrow \infty} \mathcal{E}_{\mathcal{H}_T}(\widetilde{\phi}_A) = 0.$$
As a consequence, the equality of the energies holds true
\begin{equation}\label{energyequality}
\mathcal{E}_{\Sigma_0}(\widetilde{\phi}_A) = \mathcal{E}_{\mathfrak{H}^+}(\widetilde{\phi}_A) + \mathcal{E}_{\scri^+}(\widetilde{\phi}_A).
\end{equation}
\end{theorem}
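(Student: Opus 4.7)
The strategy is to combine Lemma \ref{EquivalentSimplerEnergy} with the pointwise decay \eqref{pointwise} to bound the integrand of $\mathcal{E}_{\mathcal{H}_T}(\widetilde{\phi}_A)$ by quantities that are integrable in $r_*$ and tend to zero as $T\to\infty$. Once the limit $\lim_{T\to\infty}\mathcal{E}_{\mathcal{H}_T}(\widetilde{\phi}_A)=0$ is established, the energy equality \eqref{energyequality} follows immediately from \eqref{limitenergy}.

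First, I would rewrite the energy flux using Lemma \ref{EquivalentSimplerEnergy}:
$$\mathcal{E}_{\mathcal{H}_T}(\widetilde{\phi}_A) \simeq \int_{\mathcal{H}_T}\bigl(|\widetilde{\phi}_1|^2+|\widetilde{\phi}_0|^2\bigr)R^2\,\d r_*\d^2\omega.$$
Since $\widetilde{\phi}_0=r^2\phi_0$, $\widetilde{\phi}_1=r\phi_1$ and $R=1/r$, the integrand equals $|\phi_1|^2+r^2|\phi_0|^2$, which is directly controlled by \eqref{pointwise}. Next, I would analyse the behaviour of $t$ and $t-r$ along $\mathcal{H}_T$: by the defining relation $t=T+h(r)-r_*$ and the properties of $h$ recalled after \eqref{timefunction}, one has $h(r)-r_*\sim r$ at infinity and $h(r)\to 0$ at the horizon, so on $\mathcal{H}_T$
$$t \gtrsim T+r, \qquad t-r \gtrsim T,$$
uniformly (up to harmless logarithmic corrections in $r$). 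Plugging into \eqref{pointwise} gives
$$|\phi_1|^2 \lesssim (T+r)^{-2-2\alpha}T^{-4-2\beta},\qquad r^2|\phi_0|^2\lesssim r^2(T+r)^{-4-2\alpha'}T^{-2-2\beta'}.$$

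Third, I would integrate in $r_*$, using $\d r_*\simeq \d r$ for large $r$ and the fact that the near-horizon portion has bounded measure. The first term is bounded by $T^{-4-2\beta}\int_0^{\infty}(T+r)^{-2-2\alpha}\d r\lesssim T^{-5-2\alpha-2\beta}$. Splitting the second integral over $\{r\leq T\}$ and $\{r\geq T\}$ and estimating each piece crudely gives a bound of order $T^{-3-2\alpha'-2\beta'}$. Both contributions therefore vanish as $T\to\infty$, proving $\mathcal{E}_{\mathcal{H}_T}(\widetilde{\phi}_A)\to 0$; together with \eqref{limitenergy} this yields \eqref{energyequality}.

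The main delicate point I expect is the treatment of the near-horizon region, where the Boyer-Lindquist components $\phi_A$ are not regular; there one must interpret \eqref{pointwise} either in a horizon-regular frame (the Kerr-star tetrad $\widehat{o}^A,\widehat{\iota}^A$) or directly as a decay statement for the rescaled components $\widetilde{\phi}_A$, which are smooth up to $\mathfrak{H}^+$ by Theorem \ref{cauchyproblem}. Once the decay hypothesis is read in a chart that is regular on the relevant portion of $\mathcal{H}_T$, the two integral estimates above go through uniformly in the angular variables and close the argument; the remaining technicalities (log corrections, the transition between the horizon and wave zones, uniformity in $\theta,\varphi$) are routine and absorbed by the margins $\alpha,\beta,\alpha',\beta'\geq 0$.
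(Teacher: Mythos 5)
Your proposal is correct and follows essentially the same route as the paper: both rewrite $\mathcal{E}_{\mathcal{H}_T}$ via Lemma \ref{EquivalentSimplerEnergy} as $\int(|\phi_1|^2+r^2|\phi_0|^2)\,\d r_*\d^2\omega$, insert the decay assumption \eqref{pointwise} using $t=T+h(r)-r_*$ and the asymptotics of $h$ along $\mathcal{H}_T$, integrate to get negative powers of $T$, and conclude with \eqref{limitenergy}; the paper merely splits the integration into three $r$-regions where you use one uniform bound $t\gtrsim T+r$, $t-r\gtrsim T$. One small slip: the near-horizon portion does \emph{not} have bounded $\d r_*$-measure (since $r_*\to-\infty$ at $r_+$), but there $t\simeq T+|r_*|$ on $\mathcal{H}_T$, so the decay in $t$ still makes that contribution vanish — which is in effect what the paper's brief "by the same way" treatment of the region $r_+\leq r\leq r_{NH}$ amounts to.
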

\begin{proof}
Let $r_{FH}$ be a large constant such that for $r\geq r_{FH}$ we have the equivalent energies as in Lemma \ref{EquivalentSimplerEnergy} and $h(r)= (2+\varepsilon) r$, where $\varepsilon \to 0$ as $r\to\infty$. Our proof is separated in three domain $r_+\leq r \leq r_{NH}$, $r_{NH}\leq r \leq r_{FH}$ and $r_{FH}\leq r$.

In the domain $r_{FH}\leq r$, from Lemma \ref{EquivalentSimplerEnergy}, the pointwise decay \eqref{pointwise} of the Dirac's components and $h(r) = (2+\varepsilon)r$ for $r$ large enough we have that
\begin{eqnarray*}
\mathcal{E}_{\mathcal{H}_T}(\widetilde{\phi}_A) &\simeq& \int_{\mathcal{H}_T} \left( \vert\widetilde{\phi}_1\vert^2 + \vert\widetilde{\phi}_0\vert^2 \right) R^2 \d r_* \d^2\omega\cr
&\simeq& \int_{\mathcal{H}_T}\left( \vert \phi_1\vert^2 + \vert r\phi_0\vert^2 \right) \d r_* \d^2\omega\cr
&\lesssim&\int_{r_{FH}}^\infty\int_{\mathbb{S}^2} t^{-2-2\alpha}(t-r)^{-4-2\beta} + t^{-4-2\alpha'}r^2(t-r)^{-2-2\beta'} \d r\d^2\omega\cr
&\lesssim& 2\pi\int_{r_{FH}}^\infty (T- r + h(r))^{-2-2\alpha}(T+h(r)-2r)^{-4-2\beta} \d r\cr
&& + 2\pi \int_{r_{FH}}^\infty (T-r + h(r))^{-4-2\alpha'}r^2(T+h(r)-2r)^{-2-2\beta} \d r\cr
&\lesssim& 2\pi\int_{r_{FH}}^\infty (T+ r)^{-2-2\alpha}T^{-4-2\beta} \d r + 2\pi \int_{r_{FH}}^\infty (T+r)^{-4-2\alpha'}r^2T^{-2-2\beta} \d r\cr
&\lesssim& 2\pi T^{-4-2\beta}(T+ r_{FH})^{-1-2\alpha} + 2\pi T^{-2-2\beta} \int_{r_{FH}}^\infty (T+r)^{-4-2\alpha'}r^2 \d r\cr
&\lesssim& 2\pi T^{-4-2\beta}(T+ r_{FH})^{-1-2\alpha} + 2\pi T^{-2-2\beta} (r_{FH})^{-1-2\alpha'} \longrightarrow 0
\end{eqnarray*}
as $T\to \infty$. 

In the domain $r_{NH} \leq r \leq r_{FH}$ we can consider that $h(r) \simeq c(r)r$ $(1\leq c(r) <2)$ and we have
\begin{eqnarray*}
\mathcal{E}_{\mathcal{H}_T}(\widetilde{\phi}_A) &\simeq& \int_{r_{NH}}^{r_{FH}}\int_{\mathbb{S}^2} t^{-2-2\alpha}(t-r)^{-4-2\beta} + t^{-4-2\alpha'}r^2(t-r)^{-2-2\beta'} \d r\d^2\omega\cr
&\lesssim& 2\pi\int_{r_{NH}}^{r_{FH}} (T- r + h(r))^{-2-2\alpha}(T+h(r)-2r)^{-4-2\beta} \d r\cr
&& + 2\pi \int_{r_{NH}}^{r_{FH}} (T-r + h(r))^{-4-2\alpha'}r^2(T+h(r)-2r)^{-2-2\beta} \d r\cr
&\lesssim& 2\pi\int_{r_{NH}}^{r_{FH}} T^{-2-2\alpha}(T-r)^{-4-2\beta} \d r + 2\pi \int_{r_{NH}}^{r_{FH}} T^{-4-2\alpha'}r_*^2(T-r)^{-2-2\beta} \d r\cr
&\lesssim& 2\pi T^{-2-2\alpha}(T - r_{FH})^{-3-2\beta} + 2\pi T^{-4-2\alpha'} \int_{r_{NH}}^{r_{FH}} r^2(T-r)^{-2-2\beta} \d r\cr
&\lesssim&  2\pi T^{-2-2\alpha}(T - r_{FH})^{-3-2\beta} + 2\pi T^{-4-2\alpha'} (r_{FH}-r_{NH})  \longrightarrow 0
\end{eqnarray*}
as $T\to \infty$.

In the domain $r_+\leq r\leq r_{FH}$ we obtain clearly that the limit of $\mathcal{E}_{\mathcal{H}_T}(\widetilde{\phi}_A)$ is also zero by the same way as above. Our proof is completed.
\end{proof}

A direct consequence of the energy equality \eqref{energyequality} is that we can define the trace operator on the null conformal boundaries $\mathfrak{H}^+\cup \scri^+$. 
\begin{defn}
The trace operator ${\mathcal T}^+: \mathcal{C}_0^{\infty}(\Sigma_0,\mathbb{S}_A) \to \mathcal{C}_0^\infty(\mathfrak{H}^+,\mathbb{C})\times \mathcal{C}_0^{\infty}(\scri^+,\mathbb{C})$ is given by
\begin{align*}
\mathcal{T}^+: \mathcal{C}_0^{\infty}(\Sigma_0,\mathbb{S}_A) &\longrightarrow \mathcal{C}_0^\infty(\mathfrak{H}^+,\mathbb{C})\times\mathcal{C}_0^{\infty}(\scri^+,\mathbb{C})\cr
\widetilde{{\phi}}_A|_{\Sigma_0}  &\longmapsto (\widetilde{\phi}_0|_{\mathfrak{H}^+}, \widetilde{{\phi}}_1|_{\scri^+}).
\end{align*}
\end{defn}
Using again the energy equality we can extend the domain of the trace operator $\mathcal{T}^+$, where the extended operator is one-to-one and has closed range.
\begin{cor}\label{trace-one-one} 
We extend the trace operator 
\begin{align*}
\mathcal{T}^+: \mathcal{H}_0 = L^2(\Sigma_0, \mathbb{S}_A) &\longrightarrow  \mathcal{H}^+ = L^2(\mathfrak{H}^+,\mathbb{C}) \times L^2(\scri^+,\mathbb{C})\cr
\widetilde{{\phi}}_A|_{\Sigma_0} &\longmapsto (\widetilde{{\phi}}_0|_{\mathfrak{H}^+},\widetilde{{\phi}}_1|_{\scri^+})
\end{align*}
where $\mathcal{H}_0 = L^2(\Sigma_0,\mathbb{S}_A)$ is the closed space of $\mathcal{C}_0^\infty(\Sigma_0,\mathbb{S}_A)$ in the energy norm 
$$\left\|\widetilde{{\phi}}_A\right\|_{\Sigma_0} = \left( \frac{1}{\sqrt 2} \int_{\Sigma_0} (|\widetilde{\phi}_1|^2  + R^2|\widetilde{\phi}_0|^2)  \d r \d^2\omega \right)^{1/2},$$
and similarly $\mathcal{H}^+ = L^2(\mathfrak{H}^+,\mathbb{C}) \times L^2(\scri^+,\mathbb{C})$ is the closed space of $\mathcal{C}_0^\infty(\mathfrak{H}^+,\mathbb{C}) \times \mathcal{C}_0^\infty(\scri^+,\mathbb{C})$ in the energy norm 
$$\left\|((\sqrt{\Delta}\widetilde{\phi}_0)|_{\mathfrak{H}^+},\widetilde{{\phi}}_1|_{\scri^+})\right\|_{\mathcal{H}^+} = \left( \frac{1}{\sqrt 2}\left(\int_{\mathfrak{H}^+} |(\sqrt{\Delta}\widetilde{\phi}_0)|_{\mathfrak{H}^+}|^2 \d {t^*} \d^2\omega + \int_{\scri^+} |\widetilde{\phi}_1|_{\scri^+}|^2 \d {^*t} \d^2\omega\right)\right)^{1/2}.$$
The trace operator in the new domains is one to one and has closed range.
\end{cor}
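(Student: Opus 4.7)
The plan is to observe that Theorem \ref{egalite_energies} already provides exactly the isometry property needed, and that the corollary then follows from standard Hilbert space considerations. First, for any smooth, compactly supported datum $\widetilde{\psi}_A \in \mathcal{C}_0^\infty(\Sigma_0,\mathbb{S}_A)$, Theorem \ref{cauchyproblem} produces a unique solution $\widetilde{\phi}_A$, and the energy equality \eqref{energyequality} reads, in terms of the norms introduced in the corollary,
\begin{equation*}
\bigl\|\widetilde{\psi}_A\bigr\|_{\Sigma_0}^2 \;=\; \bigl\|\bigl((\sqrt{\Delta}\widetilde{\phi}_0)|_{\mathfrak{H}^+},\,\widetilde{\phi}_1|_{\scri^+}\bigr)\bigr\|_{\mathcal{H}^+}^2,
\end{equation*}
where the left side is $\mathcal{E}_{\Sigma_0}(\widetilde{\phi}_A)$ and the right side is $\mathcal{E}_{\mathfrak{H}^+}(\widetilde{\phi}_A)+\mathcal{E}_{\scri^+}(\widetilde{\phi}_A)$ thanks to Lemma \ref{EquivalentSimplerEnergy}. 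Hence $\mathcal{T}^+$ is an isometry on the dense subspace $\mathcal{C}_0^\infty(\Sigma_0,\mathbb{S}_A)$ of $\mathcal{H}_0$.

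Second, since an isometry is in particular uniformly continuous, by the B.L.T.\ (bounded linear transformation) theorem there is a unique continuous linear extension $\mathcal{T}^+ : \mathcal{H}_0 \to \mathcal{H}^+$, and this extension is again an isometry because the identity $\|\mathcal{T}^+ u\|_{\mathcal{H}^+}=\|u\|_{\mathcal{H}_0}$ passes to limits by continuity of both norms. From this the two remaining assertions are automatic: injectivity follows from $\mathcal{T}^+ u = 0 \Rightarrow \|u\|_{\mathcal{H}_0}=0 \Rightarrow u=0$; and the range is closed because if $\mathcal{T}^+ u_n \to v$ in $\mathcal{H}^+$, then $(u_n)$ is Cauchy in $\mathcal{H}_0$ (isometry), hence converges to some $u\in\mathcal{H}_0$ by completeness, and then $\mathcal{T}^+ u = v$ by continuity.

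There is essentially no serious obstacle at this stage, because all the analytic work has already been done: the delicate point was establishing that $\mathcal{E}_{\mathcal{H}_T}(\widetilde{\phi}_A)\to 0$ as $T\to\infty$ in Theorem \ref{egalite_energies}, which in turn rested on the pointwise decay assumption \eqref{pointwise}. The only mild point to be careful about is that the norm $\|\cdot\|_{\mathcal{H}^+}$ on the boundary component is written in terms of the rescaled field $\widetilde{\phi}_0$ restricted to $\mathfrak{H}^+$ weighted by $\sqrt{\Delta}$, which vanishes on the horizon for the rescaled spin frame; one should record explicitly, via \eqref{eqonH}, that the boundary energies of Lemma \ref{EquivalentSimplerEnergy} coincide with the norm prescribed in the statement of the corollary, so that the isometry identity above is literally what the energy equality asserts. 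Surjectivity onto $\mathcal{H}^+$ — i.e.\ the fact that $\mathcal{T}^+$ is actually an isomorphism — is not claimed here and will require solving the Goursat problem in the next section.
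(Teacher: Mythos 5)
Your proposal is correct and follows essentially the same route as the paper: the paper's own (much terser) proof also deduces injectivity directly from the energy equality \eqref{energyequality} and closed range from the fact that the operator carries Cauchy sequences to Cauchy sequences, which is exactly your density/isometry argument spelled out carefully. The only caveat worth noting is that the norms in the corollary are the \emph{equivalent} simplified expressions of Lemma \ref{EquivalentSimplerEnergy} (the relation is $\simeq$, not $=$), so strictly one gets a two-sided norm equivalence rather than a literal isometry; this changes nothing in your Cauchy-sequence argument for injectivity and closed range.
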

\begin{proof}
It is clear that $\mathcal{T}^+$ is one-to-one from the equality energy. Since the equality energy, we have $\mathcal{T}^+$ transforms a Cauchy sequence to another one. Hence, the domain image $\mathcal{T}^+(L^2(\Sigma_0,\mathbb{S}_A))$ is closed.
\end{proof}

\section{Goursat problem and conformal scattering operator}\label{S5}
\subsection{The full field on conformal boundaries}
We consider the Gousat problem in the future $\mathcal{I}^+(\Sigma_0) \subset \bar{\mathcal{B}}_I$:
\begin{align}\label{Goursat1}
\begin{cases}
\widetilde{\nabla}^{AA'} \widetilde{\phi}_A &= 0,\cr
\widetilde{\phi}_1|_{\scri^+} &= \widetilde{\psi}_1 \in \mathcal{C}_0^\infty(\scri^+, {\mathbb C}),\,\widetilde{\phi}_A|_{\scri^+} = \widetilde{\psi}_A \in \mathcal{D}_{\scri^+},\cr
\widetilde{\phi}_0|_{\mathfrak{H}^+} &= \widetilde{\psi}_0 \in \mathcal{C}_0^\infty(\scri^+, {\mathbb C}),\,\widetilde{\phi}_A|_{\mathfrak{H}^+} = \widetilde{\bar\psi}_A \in \mathcal{D}_{\mathfrak{H}^+},
\end{cases}
\end{align}
here $\mathcal{D}_{\scri^+}$ and $\mathcal{D}_{\mathfrak{H}^+}$ are the constraint spaces on $\scri^+$ and $\mathfrak{H}^+$ respectively.

We recall the first equation of the sytem \eqref{Scalar} of the massless equation $\widetilde{\nabla}^{AA'} \widetilde{\phi}_A = 0$ on the rescaled star-Kerr coordinates $({^*t},R,\theta,{^*\varphi})$:
\begin{gather*}
\sqrt{\frac{2}{\Delta\rho^2}}\left( (r^2+a^2)\partial_{^*t} + a\partial_{^*\varphi} + \frac{R^2\Delta}{2}\partial_R \right)\widetilde{\phi}_0 - \frac{r}{\sqrt{2}p}\left( ia\sin\theta\partial_{^*t}+\partial_{\theta}+\frac{i}{\sin\theta}\partial_{^*\varphi} \right)\widetilde{\phi}_1 \cr
+\left( \left( 2R - \frac{r}{\rho^2} \right)\sqrt{\frac{\Delta}{2\rho^2}} - \frac{Mr^2-a^2(r\sin^2\theta+M\cos^2\theta)}{2\rho^2\sqrt{2\Delta\rho^2}} \right)\widetilde{\phi}_0 \cr
-\frac{r}{\sqrt{2}p}\left( \frac{ia\sin\theta}{\bar{p}} + \frac{\cot\theta}{2} + \frac{a^2\sin\theta\cos\theta}{2\rho^2}\right)\widetilde{\phi}_1 = 0.
\end{gather*}
Since the constraint system on $\scri^+$ is the projection of the equation $\widetilde{\nabla}^{AA'}\widetilde{{\phi}}_A = 0$ on the null normal vector $\widetilde{n}^a$, the constraint on the null infinity hypersurface $\scri^+$ is that of the above equation on $\scri^+$:
\begin{equation*}
\sqrt 2 \partial_{^*t} \widetilde{\phi}_0|_{\scri^+} - \frac{1}{\sqrt 2}\left( ia\sin \theta \partial_{^*t} + \partial_\theta + \frac{i}{\sin\theta}\partial_\varphi + \frac{1}{2}\cos\theta \right) \widetilde{\phi}_1|_{\scri^+} = 0.
\end{equation*}
Therefore on $\scri^+$, we have
\begin{equation*}
\widetilde{\phi}_0|_{\scri^+}(^*t) = \widetilde{\phi}_0|_{\scri^+}(-\infty) + \frac{1}{2}\int^{^*t}_{-\infty}\left( ia\sin \theta \partial_{^*t}+ \partial_\theta + \frac{i}{\sin\theta}\partial_{^*\varphi}  + \frac{1}{2}\cos\theta  \right)\widetilde{\phi}_1|_{\scri^+}(s) \d s.
\end{equation*}
Therefore, we have to obtain the fully spin field $\widetilde{\phi}_A|_{\scri^+}$ on the future null infinity $\scri^+$.
Since the initial data $\widetilde{\psi}_1 \in \mathcal{C}^\infty_0({\scri^+}, \mathbb C)$ has the support away from $i^+$ and $i^0$, we have that the support of $\widetilde{\phi}_0$ is also far away from $i^+$ and $i_0$.

Similarly, we can obtain the fully spin field $\widetilde{\phi}_A|_{\mathfrak{H}^+}$ with the support is far away from $i^+$ from $\widehat{\psi}_0 = R\widetilde{\psi}_0 \in \mathcal{C}_0^\infty(\scri^+, {\mathbb C})$. 
In the rescaled Kerr-star coordinates $(t^*,R,\theta,\varphi^*)$ we consider the second equation of \eqref{Scalar1}:
\begin{gather*}
\sqrt{\frac{2}{\Delta\rho^2}}\left( (r^2+a^2)\partial_{t^*} + a\partial_{\varphi^*} - \frac{R^2\Delta}{2}\partial_R \right)\widehat{\phi}_1 \cr
+ \frac{r}{\sqrt{2}\bar{p}}\left( ia\sin\theta\partial_{t^*} - \partial_{\theta} + \frac{i}{\sin\theta}\partial_{\varphi^*} -\frac{\cot\theta}{2} + \frac{a^2\sin\theta\cos\theta}{2\rho^2}\right)\widehat{\phi}_0 \cr
+ \left( \frac{Mr^4 - a^2r^2(r\sin^2\theta+M\cos^2\theta)}{2\rho^2\sqrt{2\Delta\rho^2}} + \frac{iar\cos\theta}{\bar{p}}\sqrt{\frac{\Delta}{2\rho^2}} \right)\widehat{\phi}_1 = 0.
\end{gather*}
Hence
\begin{gather*}
\sqrt{2}\left( \partial_{t^*} + \frac{a}{r^2+a^2}\partial_{\varphi^*} - \frac{R^2\sqrt{\Delta}}{2(r^2+a^2)\sqrt{\rho^2}}\partial_R \right)\widehat{\phi}_1 \cr
+ \frac{r\sqrt{\Delta\rho^2}}{\sqrt{2}\bar{p}(r^2+a^2)}\left( ia\sin\theta\partial_{t^*} - \partial_{\theta} + \frac{i}{\sin\theta}\partial_{\varphi^*} -\frac{\cot\theta}{2} + \frac{a^2\sin\theta\cos\theta}{2\rho^2}\right)\widehat{\phi}_0 \cr
+ \left( \frac{Mr^4 - a^2r^2(r\sin^2\theta+M\cos^2\theta)}{2\sqrt{2}\rho^2(r^2+a^2)} + \frac{iar\cos\theta}{\bar{p}}\frac{\Delta}{\sqrt{2}(r^2+a^2)} \right)\widehat{\phi}_1 = 0.
\end{gather*}
Taking the constraint of this equation on $\mathfrak{H}^+$ we get
\begin{gather*}
\sqrt{2}\left( \partial_{t^*} + \frac{a}{r_+^2+a^2}\partial_{\varphi^*} \right)\widehat{\psi}_1|_{\mathfrak{H}^+} \cr
+ \frac{r_+\sqrt{r_+^2+a^2\cos^2\theta}}{\sqrt{2}(r_+ + ia\sin\theta)(r_+^2+a^2)}\left( ia\sin\theta\partial_{t^*} - \partial_{\theta} + \frac{i}{\sin\theta}\partial_{\varphi^*} -\frac{\cot\theta}{2} + \frac{a^2\sin\theta\cos\theta}{2(r_+^2+a^2\cos^2\theta)}\right)\lim_{r\to r_+}\sqrt{\Delta}\widehat{\phi}_0 \cr
+ \frac{Mr_+^4 - a^2r_+^2(r_+\sin^2\theta+M\cos^2\theta)}{2\sqrt{2}(r_+^2+a^2\cos^2\theta)(r_+^2+a^2)} \widehat{\psi}_1|_{\mathfrak{H}^+} = 0.
\end{gather*}
This is a transported type equation with the unknown $\widehat{\psi}_1|_{\mathfrak{H}^+}(t^*,\varphi^*)$:
\begin{equation}\label{TranEq}
\partial_{t^*}\widehat{\psi}_1|_{\mathfrak{H}^+} + \frac{a}{r_+^2 + a^2}\partial_{\varphi^*}\widehat{\psi}_1|_{\mathfrak{H}^+} = f\widehat{\psi}_1 + g(\lim_{r\to r_+}\sqrt{\Delta}\widehat{\psi}_0),
\end{equation}
where 
$$f = -\frac{Mr_+^4 - a^2r_+^2(r_+\sin^2\theta+M\cos^2\theta)}{4(r_+^2+a^2\cos^2\theta)(r_+^2+a^2)} = -\frac{r_+^2 - Mr_+}{8M}, $$
$$g = \frac{r_+\sqrt{r_+^2+a^2\cos^2\theta}}{2(r_+ + ia\sin\theta)(r_+^2+a^2)}\left( ia\sin\theta\partial_{t^*} - \partial_{\theta} + \frac{i}{\sin\theta}\partial_{\varphi^*} -\frac{\cot\theta}{2} + \frac{a^2\sin\theta\cos\theta}{2(r_+^2+a^2\cos^2\theta)}\right).$$
Equation \eqref{TranEq} can be solved by the method of characteristics and we get
\begin{equation}\label{Phi1eq}
\widehat{\psi}_1|_{\mathfrak{H}^+}(t^*,\varphi^*) = e^{ft^*} \int g\left(t^*,C,\lim_{r\to r_+} \sqrt{\Delta}\widehat{\psi}_0 \right)e^{-ft^*}\d t^*,
\end{equation}
where the characteristic curve is given by
$$C= \frac{a}{r_+^2 + a^2}t^* -\varphi^*.$$
Combining with the initial data $\widehat{\psi}_1|_{\mathfrak{H}^+}(-\infty):=\widehat{\psi}_1|_{\mathfrak{H}^+}(t^* = -\infty)$ we can find $C$ and obtain the solution $\widehat{\psi}_1|_{\mathfrak{H}^+}$, hence $\widetilde{\psi}_1|_{\mathfrak{H}^+} = R\widehat{\psi}_1|_{\mathfrak{H}^+}$ depend on $\widetilde{\psi}_1|_{\mathfrak{H}^+}(-\infty)$ and $\lim_{r\to r_+} \sqrt{\Delta}R\widetilde{\psi}_0$.
Therefore, we obatin the fully spin field $\widetilde{\phi}_A|_{\mathfrak{H}^+}$ with the support is far away from $i^+$ from $\widetilde{\psi}_0 \in \mathcal{C}_0^\infty(\scri^+, {\mathbb C})$.

\subsection{Solving Goursat problem in the future $\mathcal {I}^+(\mathcal{S})$}
Let ${\mathcal {S}}$ be a spacelike hypersurface in $\widetilde{\mathcal{B}}_I$ such that $\mathcal{S}$ pass the  bifurcation sphere $S^2_C$ and crosses $\scri^+$ strictly in the past of the support of the initial data $\widetilde{\psi}_1$, we denote the point of intersection of $\mathcal{S}$ and $\scri^+$ by $Q$.
\begin{figure}[H]
\begin{center}
\includegraphics[scale=0.7]{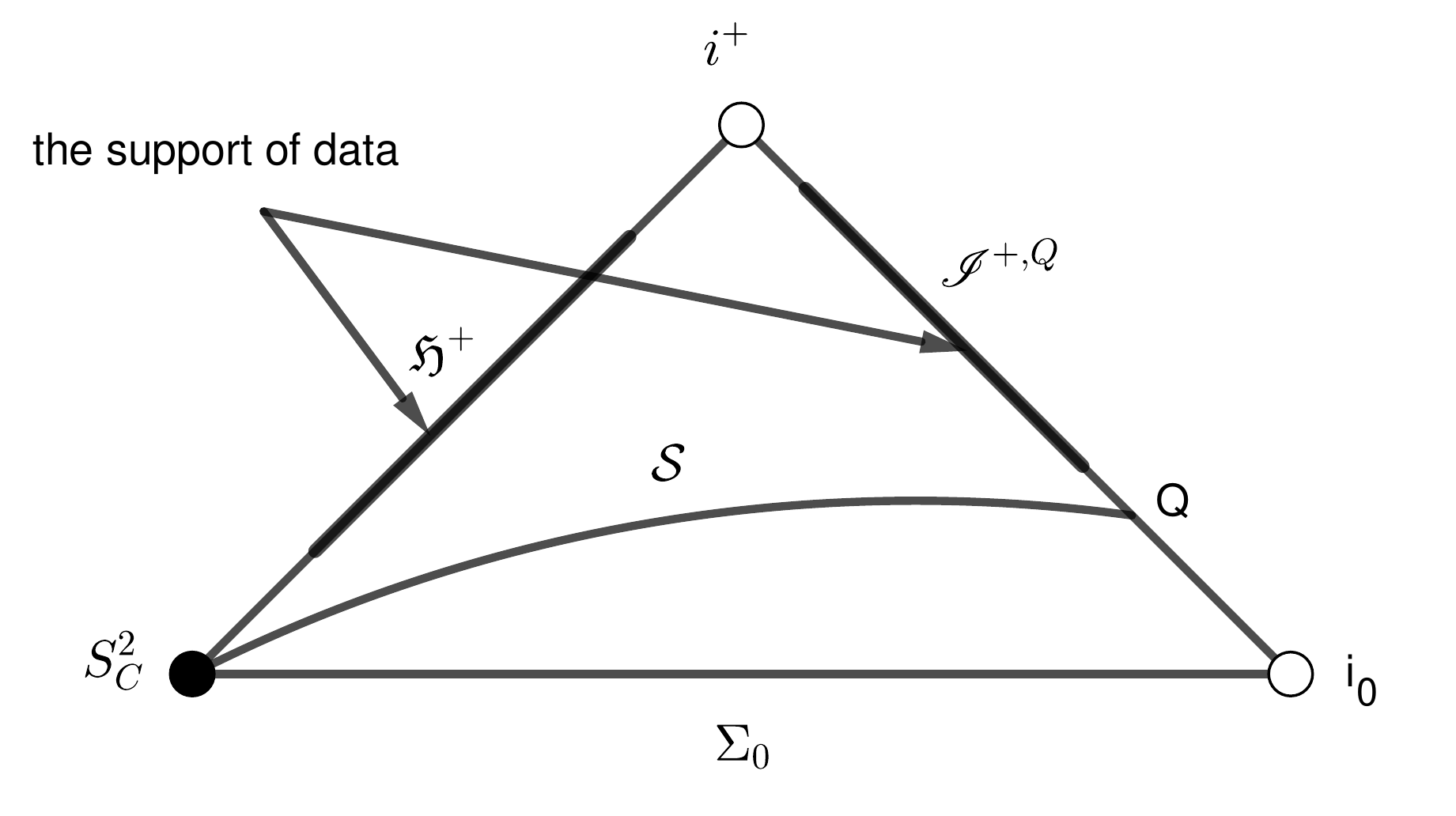}
\caption{The spacelike hypersurface $\mathcal{S}$.}
\end{center}
\end{figure}
Now we solve the Goursat problem in the future $\mathcal {I}^+(\mathcal{S})$ of ${\mathcal S}$. We have (see equation \eqref{app_3_wave0}):
$$2\widetilde{\nabla}_{ZA'}\widetilde{\nabla}^{AA'} \widetilde{\phi}_A = \check{\Box}\widetilde{\phi}_Z  + \widetilde{X}_{ZA}{^{NA}} \widetilde{\phi}_N = 0.$$
Therefore, the Goursat problem on the future $\mathcal{I}^+(\mathcal{S})$ has a problem consequence as follows
\begin{align}\label{Goursat2}
\begin{cases}
\check{\Box} \widetilde{\phi}_Z + \widetilde{X}_{ZA}{^{NA}} \widetilde{\phi}_N &= 0,\cr
\widetilde{\phi}_A|_{\scri^{+,Q}} &= \widetilde{\psi}_A|_{\scri^{+,Q}} \in \mathcal{C}^\infty_0(\scri^{+,Q}, {\mathbb S}_A),\,\widetilde{\nabla}^{AA'}\widetilde{\phi}_A|_{\scri^{+,Q}}  = 0,\cr
\widetilde{\phi}_A|_{\mathfrak{H}^+} &= \widetilde{\bar{\psi}}_A|_{\mathfrak{H}^+} \in \mathcal{C}^\infty_0(\mathfrak{H}^+, {\mathbb S}_A),\,\widetilde{\nabla}^{AA'}\widetilde{\bar{\phi}}_A|_{\mathfrak{H}^+} = 0,
\end{cases}
\end{align}
where the coefficients of the term $\widetilde{X}_{ZA}{^{NA}} \widetilde{\phi}_N$ are $\mathcal{C}^\infty$ by Lemma \ref{CurSpin} and
$\scri^{+,Q}$ is the future part of $Q$ in the null infinity hypersurface $\scri^+$. Applying the generalized result of H\"ormander for the spin wave equation (see Appendix \ref{app_3_Goursat}) the system \eqref{Goursat2} has a unique solution $\widetilde{\phi}_A \in \mathcal{I}^+(\mathcal{S})/\mathcal{V}$, where $\mathcal{V}$ is a neighbourhood of a point lying in the future of the support of the Goursat data.

Now we show that $\widetilde{\phi}_A$ is also a solution of the system \eqref{Goursat1} by proving that $\widetilde{\nabla}^{AA'} \widetilde{\phi}_A = 0$. First, the components of $\widetilde{\nabla}^{AA'} \widetilde{\psi}_A$, i.e, the restrictions of the components of $\widetilde{\nabla}^{AA'} \widetilde{\phi}_A$ on the hypersurface $\scri^{+,Q}$ are both zero. Indeed, if we set
$$\Xi^{A'}:= \widetilde{\nabla}^{AA'} \widetilde{\phi}_A,$$
then we have
$$\Xi^{1'}|_{\scri^{+,Q}} = \widetilde{\iota}_{A'}\Xi^{A'}|_{\scri^{+,Q}} = \widetilde{\nabla}^{AA'} \widetilde{\phi}_A|_{\scri^{+,Q}} = 0.$$
Hence $\Xi^{1'}|_{\scri^{+,Q}}=0$. On the other hand, by the equation 
$$\widetilde{\Box} \widetilde{\phi}_A + \widetilde{X}_{ZA}{^{NA}} \widetilde{\phi}_N = \frac{1}{2} \widetilde{\nabla}_{AK'}\widetilde{\nabla}^{KK'} \widetilde{\phi}_K = \frac{1}{2} \widetilde{\nabla}_{AK'}\Xi^{K'} = \frac{1}{2} \Theta_A =  0.$$
we have
$$\Theta_1 = \Theta_0 = 0,$$
where $\Theta_1$ and $\Theta_0$ are obtained by the differential equations which are of order one in the components of $\Xi^{1'}$ and $\Xi^{0'}$ (for detailed calculations see Appendix \ref{app_3_express}). Taking the constraint of these equations on $\scri^{+,Q}$ we obtain the restrictive equations of $\Xi^{1'}$ and $\Xi^{0'}$ on $\scri^{+,Q}$. Since $\Xi^{1'} =0 $ on $\scri^{+,Q}$, we can obtain the Cauchy problem of the system of differential equations of order one, where the unknowns are only the restriction of $\Xi^{0'}$ on $\scri^{+,Q}$:
\begin{align}
\begin{cases}
\Theta_1|_{\scri^+} &= 0,\cr
\Xi^{0'}|_{{\mathcal V}(P)} &= 0 
\end{cases}
\end{align}
where ${\mathcal V}(P)$ is the neighborhood of the point $P$ chosen to belong to $\scri^{+,Q}$, near $i^+$ and not belonging to the support of $\widetilde{\psi}_A$. Since the Cauchy problem has a unique solution, we have that $\Xi^{0'}|_{\scri^{+,Q}}$ is zero. Therefore we have that the restrictions of the components of $\widetilde{\nabla}^{AA'} \widetilde{\phi}_A$ on $\scri^{+,Q}$ are both zero (see Appendix \ref{app_3_express}).

Now we have (see Equation \eqref{app_3_wave1} in Appendix)
$$0 = \widetilde{\nabla}^{AA'}\check{\Box}\widetilde{\phi}_A = \frac{1}{2} \widetilde{\nabla}^{AA'}\widetilde{\nabla}_{AK'}\Xi^{K'} = \frac{1}{4} \breve{\Box}\Xi^{A'} + \frac{1}{4} \widetilde{{\bar X}}^{A'}{_{K'Q'}}{^{K'}} \Xi^{Q'}$$
and then
\begin{align}\label{1}
\begin{cases}
\breve{\Box}\Xi^{A'} + \widetilde{{\bar X}}^{A'}{_{K'Q'}}{^{K'}} \Xi^{Q'}&= 0,\cr
\mbox{The restrictions of all the components of} \; \Xi^{A'} \; \mbox{on} \; \scri^{+,Q} &= 0 
\end{cases}
\end{align}
where $\widetilde{X}_{ABCD}$ is the curvature spinor
$$\widetilde{X}_{ABCD} = \widetilde{\Psi}_{ABCD} + \widetilde{\Lambda} (\widetilde{\varepsilon}_{AC}\widetilde{\varepsilon}_{BD} + \widetilde{\varepsilon}_{AD}\widetilde{\varepsilon}_{BC})= {\Psi}_{ABCD} + \widetilde{\Lambda} (\widetilde{\varepsilon}_{AC}\widetilde{\varepsilon}_{BD} + \widetilde{\varepsilon}_{AD}\widetilde{\varepsilon}_{BC}).$$
The components of $\Psi_{ABC'D'}$ are $\mathcal {C}^{\infty}$ by Lemma \ref{CurSpin}.

By the same way as above we have also that (see the last of Appendix \ref{app_3_express}):
\begin{align}\label{2}
\begin{cases}
\breve{\Box}\Xi^{A'} + \widetilde{{\bar X}}^{A'}{_{K'Q'}}{^{K'}} \Xi^{Q'}&= 0,\cr
\mbox{The restrictions of all the components of} \; \Xi^{A'} \; \mbox{on} \; \mathfrak{H}^+ &= 0. 
\end{cases}
\end{align}

Therefore, since \eqref{1} and \eqref{2} and by using again the generalized result of H\"ormander (Appendix \ref{app_3_Goursat}) with the zero initial data on the null boundary $\mathfrak{H}^+\cup \scri^+_Q$, we get $\Xi^{A'} = 0$ and then $\Xi^{A'} = \widetilde{\nabla}^{AA'} \widetilde{\phi}_A = 0$. So the solution of the system \eqref{Goursat2} is a solution of the system \eqref{Goursat1}. For convenience, we denote by $\widetilde{\phi}^1_A$ the solution of this step.

\subsection{Solving Goursat problem in $\mathcal{I}^-(\mathcal{S})$ and conformal scattering operator}
We need to extend the solution obtained in the previous section down to $\Sigma_0$. This is equivalent to solve the Cauchy problem in the past $\mathcal{I}^-(\mathcal{S})$ of $\mathcal{S}$: 
\begin{align}\label{Cauchy2}
\begin{cases}
\widetilde{\nabla}^{AA'} \widetilde{\phi}_A &= 0,\cr
\widetilde{\phi}_A|_{\mathcal{S}} &= \widetilde{\phi}^1_A|_{\mathcal{S}}.
\end{cases}
\end{align}

As a consequence of Theorem \ref{cauchyproblem}, this Cauchy problem is well-posed, we denote its solution by $\widetilde{\phi}^2_A$ and the solution of this step by $\widetilde{\phi}^2_A$. Clearly, we can obtain by using the Stokes formula \eqref{Stokesformula} and the conservation law $\widetilde{\nabla}^a\widetilde{J}_a =0$ that
$$\mathcal{E}_{\mathcal{S}}(\widetilde{{\phi}}_A) = \mathcal{E}_{\Sigma_0}(\widetilde{\phi}_A).$$
Using the energy equality \eqref{energyequality}, we obtain that
$$\mathcal{E}_{\mathfrak{H}^+}(\widetilde{\phi}_A) + \mathcal{E}_{\scri^{+,Q}}(\widetilde{\phi}_A)= \mathcal{E}_{\mathcal{S}}(\widetilde{{\phi}}_A) = \mathcal{E}_{\Sigma_0}(\widetilde{\phi}_A).$$
Therefore the energy of the solution on the hypersurface $\Sigma_0$ is finite and we can define the trace operator as the constraint of the solution of the Cauchy problem \eqref{Cauchy2} on $\Sigma_0$.

Finally, the solution of the Goursat problem is the union of the solutions
$$\widetilde{\phi}_A= \begin{cases}\widetilde{\phi}^1_A \mbox{ in the domain $\mathcal{I}^+(S)$},\cr 
\widetilde{\phi}^2_A\mbox{ in the domain $\mathcal{I}^-(S)$}.
\end{cases}$$
By the solving of Goursat problem $\mathcal{I}^+(\mathcal{S})$ and $\mathcal{I}^-(\mathcal{S})$ we have obtained the following theorem
\begin{theorem}(Goursat problem)\label{Goursatprob}
The Goursat problem for the rescaled spin-$1/2$ massless equation $\widetilde{\nabla}^{AA'}\widetilde{\phi}_A = 0$ in $\bar{\mathcal B}_I$ is well-posed, i.e, for any $(\widetilde{\psi}_1, \widetilde{\psi}_0) \in \mathcal{C}_0^\infty(\scri^+) \times \mathcal{C}_0^\infty(\mathfrak{H}^+)$, $\widetilde{\phi}_A|_{\scri^+} = \widetilde{\psi}_A \in \mathcal{D}_{\scri^+}$ and $\widetilde{\phi}_A|_{\mathfrak{H}^+} = \widetilde{\bar{\psi}}_A  \in \mathcal{D}_{\mathfrak{H}^+}$ there exists a unique solution $\widetilde{\phi}_A$ of $\widetilde{\nabla}^{AA'}\widetilde{\phi}_A = 0$ such that   
$$\widetilde{\phi}_A \in {\mathcal C}^\infty(\bar{\mathcal B}_I,\mathbb{S}_A) \; ; \; (\widetilde{\phi}_1,\widetilde{\phi}_0)|_{\scri^+} = (\widetilde{\psi}_1,\widetilde{\psi}_0)$$
and 
$$\widetilde{\phi}_A|_{\scri^+} = \widetilde{\psi}_A,\, \widetilde{\phi}_A|_{\mathfrak{H}^+} = \widetilde{\bar\psi}_A.$$
Furthermore, the energy norm of the constraint of the solution $\widetilde{\phi}_A|_{\Sigma_0}$ on $\Sigma_0$ is finite.
\end{theorem}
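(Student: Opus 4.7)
The plan is to decompose the problem by introducing a spacelike hypersurface $\mathcal{S}$ that passes through the bifurcation sphere $S_C^2$ and meets $\scri^+$ at a point $Q$ strictly to the past of the support of the Goursat data $\widetilde{\psi}_1$. This splits $\bar{\mathcal{B}}_I$ into $\mathcal{I}^+(\mathcal{S})$, where the Goursat data lives, and $\mathcal{I}^-(\mathcal{S})$, which I would handle as a Cauchy problem. In $\mathcal{I}^+(\mathcal{S})$, I would pass from the first-order Dirac equation to its second-order consequence, namely the spin wave equation $\check{\Box}\widetilde{\phi}_Z + \widetilde{X}_{ZA}{}^{NA}\widetilde{\phi}_N = 0$ obtained from \eqref{app_3_wave0}. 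Since the curvature coefficients are smooth on the compactification by Lemma \ref{CurSpin}, this is a diagonal-principal-part spin wave system to which the generalized H\"ormander result in Appendix \ref{app_3_Goursat} applies, yielding a unique smooth solution $\widetilde{\phi}^1_A$ in $\mathcal{I}^+(\mathcal{S})$ with the prescribed data on $\scri^{+,Q}\cup \mathfrak{H}^+$.

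The key step is to verify that this solution of the second-order system actually satisfies the first-order Dirac equation. I would set $\Xi^{A'} := \widetilde{\nabla}^{AA'}\widetilde{\phi}_A$ and show it vanishes identically. First, I use the fact that the Goursat data lies in the constraint spaces $\mathcal{D}_{\scri^+}$ and $\mathcal{D}_{\mathfrak{H}^+}$, so the tangential components of the Dirac equation already hold on the boundary; the remaining component of $\Xi^{A'}$ on $\scri^{+,Q}$ then satisfies a first-order transport equation $\Theta_1|_{\scri^+}=0$ with vanishing data near $i^+$ (outside the support of $\widetilde{\psi}_A$), and is therefore zero along $\scri^{+,Q}$; an analogous argument handles $\mathfrak{H}^+$. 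Next I would apply identity \eqref{app_3_wave1} to deduce that $\Xi^{A'}$ itself solves the spin wave equation $\breve{\Box}\Xi^{A'} + \widetilde{\bar{X}}^{A'}{}_{K'Q'}{}^{K'}\Xi^{Q'}=0$ with vanishing Goursat data on $\scri^{+,Q}\cup \mathfrak{H}^+$; uniqueness in H\"ormander's theorem forces $\Xi^{A'}\equiv 0$, so $\widetilde{\phi}^1_A$ solves \eqref{Goursat1} in $\mathcal{I}^+(\mathcal{S})$.

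Having $\widetilde{\phi}^1_A|_{\mathcal{S}}$ in hand, I would extend the solution down to $\Sigma_0$ by solving the Cauchy problem \eqref{Cauchy2} in $\mathcal{I}^-(\mathcal{S})$, whose well-posedness is furnished by Theorem \ref{cauchyproblem}; the resulting $\widetilde{\phi}^2_A$ matches $\widetilde{\phi}^1_A$ on $\mathcal{S}$ and the two pieces glue smoothly into a global solution $\widetilde{\phi}_A$ on $\bar{\mathcal{B}}_I$. Finiteness of the energy norm on $\Sigma_0$ follows from applying Stokes' formula \eqref{Stokesformula} to the conserved current $\widetilde{J}^a$ in the region bounded by $\Sigma_0$, $\mathcal{S}$, and the relevant boundary pieces, so that $\mathcal{E}_{\Sigma_0}(\widetilde{\phi}_A)=\mathcal{E}_{\mathcal{S}}(\widetilde{\phi}_A)$, which in combination with the energy equality \eqref{energyequality} is controlled by $\mathcal{E}_{\mathfrak{H}^+}(\widetilde{\phi}_A)+\mathcal{E}_{\scri^+}(\widetilde{\phi}_A)$, i.e.\ the boundary energy of the prescribed data.

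I expect the main obstacle to be the second paragraph: propagating the vanishing of $\Xi^{A'}$ from a neighborhood of $i^+$ along the characteristic hypersurfaces. This requires a careful analysis of the precise form of the constraint equations $\Theta_A=0$ on $\scri^{+,Q}$ and $\mathfrak{H}^+$ (computed in Appendix \ref{app_3_express}), identifying which components of $\Xi^{A'}$ are forced to vanish tangentially by the data lying in $\mathcal{D}_{\scri^+}$, $\mathcal{D}_{\mathfrak{H}^+}$, and then solving the resulting first-order transport problem for the remaining components. Once this bookkeeping is done the rest of the argument reduces to applying the already established results (Theorem \ref{cauchyproblem}, Theorem \ref{egalite_energies}, and Appendix \ref{app_3_Goursat}).
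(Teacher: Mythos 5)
Your proposal follows essentially the same route as the paper's own proof: the same splitting by a spacelike hypersurface $\mathcal{S}$ through the bifurcation sphere, passing to the second-order spin wave equation and invoking the generalized H\"ormander result of Appendix \ref{app_3_Goursat}, the same argument that $\Xi^{A'}=\widetilde{\nabla}^{AA'}\widetilde{\phi}_A$ vanishes (boundary transport equations from Appendix \ref{app_3_express} plus the wave equation \eqref{app_3_wave1} with zero Goursat data), the extension to $\mathcal{I}^-(\mathcal{S})$ via Theorem \ref{cauchyproblem}, and the energy identity for finiteness on $\Sigma_0$. The plan is correct and matches the paper's construction, including its identification of the $\Xi^{A'}$-propagation step as the delicate point.
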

We now define the conformal scattering operator for the spin-$1/2$ massless Dirac equation as follows.
\begin{definition}
Similarly, we introduce the past trace operator $\mathcal{T}^-$ and the space $\mathcal{H}^-$ of past scattering data on the past horizon and the past null infinity. We define the scattering operator $S$ as the operator that, to the past scattering data associates the future scattering data, i.e.
$$S:= \mathcal{T}^+\circ (\mathcal{T}^-)^{-1}.$$
\end{definition}
 
\section{Appendix A}\label{A}

\subsection{Goursat problem for the spin wave equations on Kerr spacetime}\label{app_3_Goursat}
In this part we extend the results of H\"ormander \cite{Ho1990} for the spin-$1/2$ wave equations. The results of H\"ormander were extended for the scalar wave equation by Nicolas \cite{Ni2006} with the following minor modifications: the $\mathcal{C}^1$-metric, the continuous coefficients of the derivatives of the first order and the terms of order zero have locally $L^\infty$-coefficients. We refer \cite{Jo2020,Mo2019,Mo2021,Ni2016,Xuan2021,Xuan20,Xuan21} for the appllications of the generalized results of H\"ormander to solve the Goursat problem for the massless spin, tensorial equations, linear and semiliear wave equations. Here we will show that how the Goursat problem is valid for the spin wave equations in the future $\mathcal{I}^+(\mathcal{S})$ of $\mathcal{S}$ in $\bar{\mathcal{B}}_I$ (recall that $\mathcal{S}$ is the spacelike hypersurface in $\bar{\mathcal{B}}_I$ such that it pass $\scri^+$ strictly in the past of the support data). 

Let $P$ be a point in $\bar{\mathcal{B}}_I$, we cut-off $\mathcal{I}^+(\mathcal{S})$ by the future neighbourhood $\mathcal{V}$ of $P$ such that $\mathcal{V}$ does not intersect with the support of Goursat data and get $\mathfrak{B}=\mathcal{I}^+(\mathcal{S})/\mathcal{V}$. We extend $(\mathfrak{B},\widetilde{g})$ onto a cylindrical globally hyperbolic spacetime $(\mathfrak{M}=\mathbb{R}_t\times S^3, \mathfrak{g})$, where $\mathfrak{g}|_{\mathfrak{B}} = \tilde{g}|_{\bar{\mathcal{B}}_I}$ and the part of null conformal boundary $\mathfrak{H}^+\cup \scri^+$ inside $\mathcal{I}^+(\mathcal{S})/\mathcal{V}$ is extended as a null hypersurface $\mathcal{C}$ that is the graph of a Lipschitz function over $S^3$ and the data by zero on the rest of the extended hypersurface.

We consider the Goursat problem of following the spin wave equation in the spacetime $(\mathfrak{M} = \mathbb{R}_t\times S^3, \mathfrak{g})$:
\begin{align}\label{spinwaveeq}
\begin{cases}
\widehat{\Box}\widetilde{\phi}_Z &= 0,\cr
\widetilde{\phi}_A|_{\mathcal{C}} &= \widetilde{\psi}_A|_{\mathcal{C}} \in \mathcal{C}^\infty_0(\mathcal{C}, {\mathbb S}_A),\cr
\nabla_{\mathfrak{g}}^{AA'}\widetilde{\phi}_A|_{\mathcal{C}} & = \widetilde{\zeta}^{A'}|_{\mathcal{C}} \in \mathcal{C}^\infty_0(\mathcal{C}, {\mathbb S}^{A'}),
\end{cases}
\end{align}
where the operator $\widehat{\Box}$ defined by \eqref{WaveOperator1} acts on the full spin fields. Notice that we can replace $\widehat{\Box}$ by the two other spin wave operators $\check{\Box}$ and $\breve{\Box}$ defined in Equations \eqref{WaveOperator2} and \eqref{WaveOperator3} respectively.

Following \cite{Sti1936} the spacetime $(\mathfrak{M} = \mathbb{R}_t\times S^3, \mathfrak{g})$ is parallelizable, i.e, it admit a continuous global frame in the sense that the tangent space at each point has a basis. Therefore, we can chose a global spin-frame $\left\{o,\iota \right\}$ for $\mathcal{M}$ such that in this spin-frame the Newman-Penrose tetrad is $\mathcal {C}^{\infty}$.
Projecting \eqref{spinwaveeq} on $\left\{o,\iota \right\}$ (see the last of Appendix \ref{app_3_express} for the projection of the covariant derivative equation $\nabla_{\mathfrak{g}}^{AA'}\phi_A|_{\mathcal{C}}$) we get the scalar matrix form as follows
\begin{align}\label{system-wave}
\begin{cases}
P\widetilde{\Phi} + L_1 \widetilde{\Phi} &= 0,\cr
(\widetilde{\Phi}, \,\partial_t\widetilde{\Phi})|_{t=0} &= (\widetilde{\Psi}, \, \partial_t\widetilde{\Psi}) \in \mathcal{C}^\infty_0(\mathcal{C})\times \mathcal{C}^\infty_0(\mathcal{C}), 
\end{cases}
\end{align}
where 
$$P= \left(\begin{matrix}
\Box&&0\\
0&&\Box
\end{matrix} \right)$$
is the $2\times 2$-matrix diagram,
$$\widetilde{\Phi} = \left( \begin{matrix} \widetilde{\phi}_0\\ \widetilde{\phi}_1 \end{matrix} \right), \, \widetilde{\Psi} = \left( \begin{matrix} \widetilde{\psi}_0\\ \widetilde{\psi}_1 \end{matrix} \right)$$
is the components of $\widetilde{\phi}_{A}$ and $\widetilde{\Psi}^{A'}$ respectively on the spin-frames $\left\{o_A,\iota_A \right\}$ and $\left\{o^{A'},\iota^{A'} \right\}$ respectively and 
$$L_1 = \left( \begin{matrix} L_1^{00}&&L_1^{01}\\ L_1^{10}&& L_1^{11} \end{matrix}  \right)$$
is the $2\times 2$-matrix where the components are the operators that have the coefficients $\mathcal {C}^\infty$:
$$L_1^{ij} = b_0^{ij}\partial_t + b_\alpha^{ij}\partial_\alpha + c^{ij}.$$

Since $\mathfrak{g}$ is a $\mathcal{C}^1$-metric, the first order terms in $L_1$ have continuous coefficients and the terms of order $0$ have locally $L^\infty$-coefficients, the Goursat problem for the $2\times 2$-matrix wave equation \eqref{system-wave} is well-posed in $(\mathfrak{M}=\mathbb{R}_t\times S^3,\mathfrak{g})$ by applying the results in \cite[Theorem 3 and Theorem 4]{Ni2006}.
\begin{thm}
For the initial data $(\widetilde{\psi}_i,\partial_t\widetilde{\psi}_i) \in {\mathcal C}_0^{\infty}(\mathcal{C})\times {\mathcal{C}}_0^\infty(\mathcal{C}) $ for all $i=0,1$, the $2\times 2$-matrix equation \eqref{system-wave}, hence the spin wave equation \eqref{spinwaveeq} has a unique solution $\widetilde{\Phi} = (\widetilde{\phi}_0,\widetilde{\phi}_1)$ satisfying
$$\widetilde{\phi}_i \in {\mathcal {C}}(\mathbb{R};H^1(S^3))\cap \mathcal{C}^1(\mathbb{R};L^2(S^3)) \; \mbox{for all} \; i = 0,1.$$
\end{thm}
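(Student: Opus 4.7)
The plan is to reduce the $2\times 2$-matrix Goursat problem \eqref{system-wave} to two scalar Goursat problems for the wave operator $\Box$ coupled only through the lower-order matrix operator $L_1$, and then invoke the generalization of H\"ormander's theorem by Nicolas \cite{Ni2006}, valid for the scalar wave equation on a globally hyperbolic spacetime with $\mathcal{C}^1$-metric, continuous first-order coefficients and locally $L^\infty$ zeroth-order coefficients. Since the principal part $P$ is the diagonal matrix $\mathrm{diag}(\Box,\Box)$, the only obstruction to a direct componentwise application is the off-diagonal coupling in $L_1$, which will be handled by a Picard iteration.

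First I would set up the iteration on a finite time slab $\{t_0 \le t \le t_1\}$ inside $\mathfrak{M}=\mathbb{R}_t\times S^3$: starting from $\widetilde{\Phi}^{(0)}=0$, define $\widetilde{\Phi}^{(k+1)}$ as the solution of the uncoupled Goursat problem
\begin{equation*}
P\widetilde{\Phi}^{(k+1)} = -L_1 \widetilde{\Phi}^{(k)}, \qquad \widetilde{\Phi}^{(k+1)}|_{\mathcal{C}} = \widetilde{\Psi}.
\end{equation*}
Applying Theorem 3 of \cite{Ni2006} componentwise — using that $\mathcal{C}$ is the graph of a Lipschitz function over $S^3$ and that the data and the source $L_1\widetilde{\Phi}^{(k)}$ are in the appropriate spaces — yields $\widetilde{\phi}^{(k+1)}_i \in \mathcal{C}(\mathbb{R};H^1(S^3)) \cap \mathcal{C}^1(\mathbb{R};L^2(S^3))$ for $i=0,1$. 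Next, the energy estimate of Theorem 4 in \cite{Ni2006} applied to $\widetilde{\Phi}^{(k+1)}-\widetilde{\Phi}^{(k)}$, which solves the homogeneous Goursat problem with source $-L_1(\widetilde{\Phi}^{(k)}-\widetilde{\Phi}^{(k-1)})$ and zero Goursat data, gives a bound of the form
\begin{equation*}
\|\widetilde{\Phi}^{(k+1)}-\widetilde{\Phi}^{(k)}\|_{\mathcal{C}([t_0,t_1];H^1)\cap \mathcal{C}^1([t_0,t_1];L^2)} \le C(t_1-t_0)\|\widetilde{\Phi}^{(k)}-\widetilde{\Phi}^{(k-1)}\|_{\mathcal{C}([t_0,t_1];H^1)\cap \mathcal{C}^1([t_0,t_1];L^2)},
\end{equation*}
since $L_1$ is of order at most one with bounded (continuous / $L^\infty$) coefficients on $\mathfrak{M}$. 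Choosing $t_1-t_0$ sufficiently small yields a contraction, hence a local fixed point $\widetilde{\Phi}$; the linearity and the same energy estimate then permit to extend the solution to all of $\mathbb{R}$ by iterating on successive slabs, so that no finite-time blow-up occurs.

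Uniqueness follows by the same energy estimate applied to the difference of two solutions, which satisfies a homogeneous $2\times 2$ Goursat problem with vanishing data on $\mathcal{C}$ and therefore vanishes identically by Gr\"onwall's inequality. Once the scalar matrix problem \eqref{system-wave} is solved, the spin-wave equation \eqref{spinwaveeq} is solved as well: the pair $(\widetilde{\phi}_0,\widetilde{\phi}_1)$ obtained above is, by construction of the projection onto the global smooth spin-frame $\{o_A,\iota_A\}$ of $\mathfrak{M}$ guaranteed by Stiefel's parallelizability of $\mathbb{R}\times S^3$, the component representation of a unique spinor field $\widetilde{\phi}_A$ satisfying $\widehat{\Box}\widetilde{\phi}_A=0$ with the prescribed initial data on $\mathcal{C}$.

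The main obstacle is the combination of two low-regularity features: the metric $\mathfrak{g}$ is only $\mathcal{C}^1$ (so the coefficients of $L_1$, which involve the spin-connection and the curvature spinor $\widetilde{X}_{ABCD}$ of Lemma \ref{CurSpin}, are only continuous in the top-order part and $L^\infty$ in the zeroth-order part), and the Goursat hypersurface $\mathcal{C}$ is merely Lipschitz. The delicate point is verifying that the extension of the spin-frame and of the data from $\bar{\mathcal{B}}_I/\mathcal{V}$ to the cylinder $\mathfrak{M}$ preserves exactly the regularity hypotheses of \cite[Theorems 3--4]{Ni2006}; this is arranged by extending the metric and the frame first smoothly off a neighborhood of the original data support and then by a standard partition-of-unity construction away from $\mathcal{C}$, so that the hypotheses of the cited theorems are met on all of $\mathfrak{M}$, after which the Picard iteration described above applies without further difficulty.
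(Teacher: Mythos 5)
Your proposal is correct in substance but follows a genuinely different route from the paper. The paper treats the $2\times 2$ system \eqref{system-wave} as a whole: since the principal part $P=\mathrm{diag}(\Box,\Box)$ is diagonal and all the coupling sits in the first-order operator $L_1$ with continuous first-order and locally $L^\infty$ zeroth-order coefficients, the extension of H\"ormander's theorem by Nicolas \cite[Theorems 3 and 4]{Ni2006} is invoked directly for the matrix equation on the cylinder $(\mathfrak{M}=\mathbb{R}_t\times S^3,\mathfrak{g})$ --- the energy estimates and duality argument there only use the scalar principal symbol and absorb lower-order terms crudely, so they apply verbatim to such weakly coupled systems. You instead decouple the components and treat the coupling $L_1$ as a source in a Picard iteration, using the scalar theorem componentwise plus the energy estimate to get convergence and uniqueness. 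This buys you the ability to quote only the scalar statement, but at two costs you should acknowledge: (i) each iterate solves an \emph{inhomogeneous} characteristic Cauchy problem, and the statements cited in \cite{Ni2006} are for the homogeneous equation with lower-order terms included in the operator --- the source-term version does follow from the same energy estimates, but it is an extension you must spell out rather than a literal citation; (ii) the ``small time slab $[t_0,t_1]$'' contraction is awkward here because the data live on the characteristic Lipschitz graph $\mathcal{C}$, not on a time slice, so restricting to a thin slab does not produce a self-contained Goursat problem; since the system is linear it is cleaner to iterate once and for all in the future (resp.\ past) of $\mathcal{C}$ up to a fixed time slice and use the Gr\"onwall-type gain $(C T)^k/k!$, which gives convergence without any smallness assumption. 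With those two adjustments your argument closes, and the final transfer back to the spinor equation \eqref{spinwaveeq} via the global spin-frame is the same as in the paper.
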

Using the finite propagation speed the solution $\widetilde{\Phi}$ vanishes in $\mathcal{I}^+(\mathcal{S})/\mathfrak{B}$ by local uniqueness and causality. Therefore, the Goursat problem has a unique smooth solution in the future of $\mathcal{S}$, that is the restriction of $\widetilde{\Phi}$ to $\mathfrak{B}$.

\subsection{A pointwise decay for Dirac field on a very slow Kerr spacetime}\label{verySlowKer}
There are some methods to prove the pointwise decay of the field equations such as scalar wave, Dirac, Maxwell, linearized gravity and spin Teukolsky equations on Kerr spacetime (see \cite{AB,ABM,Da2016,Da2019,Ma2020,MeTaTo2012,MeTaTo2017,Ro2020}). The methods base on the transformation equations to the wave (or spin wave) equations with potentials which decay sufficient to establish the decay of solutions.
In this section we develop and apply the results obtained by Tataru et al. in \cite{MeTaTo2012,MeTaTo2017} to establish a pointwise decay for the components of Dirac field in the case $0<a\ll M$.

First, by using the curvature spinors and spinor form of commutators as in Subsections \ref{app_1} and \ref{app_3_Commutator} we can estabish that the origin Dirac field $\phi_A$ satisfies the following spin wave equation 
\begin{equation}\label{OriginSpinWave}
2\nabla_{ZA'}\nabla^{AA'}\phi_A = \check{\Box} \phi_Z + X_{ZA}{^{NA}}\phi_N = 0,
\end{equation}
where $\check{\Box} \phi_Z = \varepsilon^{AM}\nabla_{A'[Z}\nabla{_{M]}}{^{A'}}\phi_A$ and $X_{ZA}{^{NA}}\phi_N = \varepsilon^{AM}\nabla_{A'(Z}\nabla{_{M)}}{^{A'}}\phi_A$.

We notice that the components of the curvature spinor $X_{ZA}{^{NA}}$ can be calculated from the components of Riemann curvature ${R_{ab}}^{cd}$ given in Appendix \ref{Riemann}. We can verify that the components of ${R_{ab}}^{cd}$ are equivalent to $r^{-3}$. Therefore, the coefficients of $X_{ZA}{^{NA}}\phi_N$ are also equivalent to $r^{-3}$. On the other hand, if we use the formulas \eqref{SpinDeri1} and \eqref{SpinDeri2}, then we can transform the spin wave operator $\check{\Box} \phi_Z$ (where $\check{\Box}$ acts on the full spinor field) to the scalar wave operator $\Box_g\phi_Z$ (where $\Box_g$ acts on the functional coefficients) and get
$$\check{\Box} \phi_Z = \Box_g\phi_Z + \left< r\right>^{-2}\partial_\alpha \phi_Z + \left< r\right>^{-3}\phi_Z,$$
where we denote that the functions equivalent to $r$ by $\left<r\right>$. The term $\left< r\right>^{-2}\partial_\alpha \phi_Z + \left< r\right>^{-3}\phi_Z$ arises from the asymptotical part of Kerr metric \eqref{Metric} to Minkowski metric.

Therefore, the spin wave equation \eqref{OriginSpinWave} can be re-written by the following equivalence form
\begin{equation*}
\Box_g\phi_Z = \left< r\right>^{-2}\partial_\alpha \phi_Z + \left< r\right>^{-3}\phi_Z.
\end{equation*}
Projecting the above equation on the global spin-frame $\left\{o_A,\iota_A \right\}$ on Block $\mathcal{B}_I$ we get the scalar matrix form as follows
\begin{equation}\label{SpinCorrEq}
P\Phi = \left< r\right>^{-2}\partial_\alpha \Phi + \left< r\right>^{-3}\Phi, 
\end{equation}
where 
$$P= \left(\begin{matrix}
\Box&&0\\
0&&\Box
\end{matrix} \right)$$
is the $2\times 2$-matrix scalar wave operator and $\Phi = \left( \begin{matrix} \phi_0\\ \phi_1 \end{matrix} \right)$
is the components of $\phi_{A}$ on the spin-frames $\left\{o_A,\iota_A \right\}$. 
\begin{remark}
Under some assumptions on uniform energy bounds and a weak form of local energy decay hold forward in time the pointwise decay of the solution satisfied Equation \eqref{SpinCorrEq} was studied in \cite{MeTaTo2012}. The decays of the coefficients in $L^1$ operator (which consists the derivative terms in order less than or equal to one) arise from the asymptotic part (which contains the angular momentum $a$ and the mass $M$) of Kerr metric \eqref{Metric} to Minkowski metric and they are sufficient to establish the pointwise decay of the solution of \eqref{SpinCorrEq}.
\end{remark}
Equation \eqref{SpinCorrEq} is equivalent to
\begin{equation}\label{phi1}
\Box_g\phi_1 = \left< r\right>^{-2}\partial_\alpha \phi_1 + \left< r\right>^{-3}\phi_1
\end{equation}
and 
\begin{equation}\label{phi00}
\Box_g\phi_0 = \left< r\right>^{-2}\partial_\alpha \phi_0 + \left< r\right>^{-3}\phi_0.
\end{equation}
Now we applying \cite[Theorem 1.5]{MeTaTo2012} for equations \eqref{phi1} and \eqref{phi00} to get the following pointwise decay: suppose $(\phi_1,\partial_t\phi_1)$ is supported inside the cone $C = \left\{t \geq r - R_1 \right\}$ for some $R_1 > 0$, then the following estimates hold
\begin{equation}\label{pointwise1}
|\phi_1|, |\phi_0| \lesssim \frac{1}{\left<t\right>\left< t - r\right>^2}.
\end{equation}
Notice that $\phi_0=\phi_Ao^A$, the constraint of $\nabla^{AA'}\phi_A=0$ on the level set of $t^*$ (i.e, the projection of equation on outgoing null vecto $n^a$) contains $\partial_{^*t}\phi_0$ and $^*t\simeq u:= t-r$ for $r$ large enough. Using these facts we can improve the pointwise decay of $\phi_0$ by the same method for the midle component of Maxwell field in \cite[Section $7.5$]{MeTaTo2017} and get
\begin{equation}\label{ImpPointwise}
|\phi_0| \lesssim \frac{1}{\left< t\right>^2\left< t-r\right>}.
\end{equation}
Indeed, commuting the vector field $r\partial_u:= r(\partial_t - \partial_r)$ to equation\eqref{phi00} we get the following form (the proof is similarly \cite[Lemma $7.2$]{MeTaTo2017}):
$$\Box_g (r\partial_u\phi_0) = \left< r\right>^{-1}\partial_\alpha (\phi_0) + \left< r\right>^{-2}(\phi_0).$$
Therefore, by the same way as in \cite[equation $(7.33)$, page $90$]{MeTaTo2017} we obtain in the region $\left\{ t/2<r<t \right\}$ that 
$$|r\partial_u\phi_0| \lesssim \frac{1}{\left<t\right>\left< t - r\right>^2}.$$
Hence
\begin{equation}\label{3}
|\partial_u\phi_0| \lesssim \frac{1}{\left< r\right>\left<t\right>\left< t - r\right>^2}.
\end{equation}
For the domain $\left\{r<t/2 \right\}$, by using derivative estimates \cite[Theorem $1.5$]{MeTaTo2012} we have 
$$|\partial_u\phi_0| \lesssim \frac{1}{\left< r\right>\left< t - r\right>^3} \lesssim \frac{1}{\left< r\right>\left<t\right>\left< t - r\right>^2}.$$
Therefore, inequality \eqref{3} holds for $r<t$. Integrating \eqref{3} follows $u$ on the level set of $v=t+r$, we get \eqref{ImpPointwise}. 

Finally, the estimates in \eqref{pointwise1} and \eqref{ImpPointwise} satisfy the decay assumption \eqref{pointwise}. Therefore, we can use the results in the previous sections: Theorem \ref{egalite_energies} and Theorem \ref{Goursatprob}  to obtain the conformal scattering theory for the Dirac field in a very slowly Kerr spacetime.

\section{Appendix B}\label{B}

\subsection{Detailed calculations for the Goursat problem}\label{app_3_express}
We have the expression of the spinor field $\Xi^{A'}$ on the spin-frame $\left\{\widetilde{o},\,\widetilde{\iota} \right\}$ as follows
\begin{eqnarray*}
\Xi^{A'} &=& \Xi^{1'} \widetilde{o}^{A'} - \Xi^{0'} \widetilde{\iota}^{A'}.
\end{eqnarray*}
The covariant derivative $\widetilde{\nabla}_{ZA'}$ acts on the full spinor field can be decomposed as
$$\widetilde{\nabla}_a \Xi = (\widetilde{D}\Xi)\widetilde{n}_a + (\widetilde{D}'\Xi)\widetilde{l}_a - (\widetilde{\delta}\Xi)\bar{{\widetilde m}}_a - (\widetilde{\delta}'\Xi)\widetilde{m}_a.$$
We recall that the twelve values of the rescaled spin coefficients are
\begin{equation*} \label{ResSpin-coffi1}
 \widetilde{\kappa}=\widetilde{\sigma}=\widetilde{\lambda}=\widetilde{\nu}=0, 
\end{equation*} 
\begin{equation*} \label{ResSpin-coffi2}
\widetilde{\tau}=-\frac{ia\sin\theta r}{\sqrt 2 \rho^2} , \, \widetilde{\pi}=\frac{ia\sin\theta r}{\sqrt 2 \bar p^2}, \, 
\widetilde{\rho} = -\frac{iar\cos\theta}{\bar p}\sqrt{\frac{\Delta}{2\rho^2}}, \,  \widetilde{\mu} = \left(R-\frac{1}{\bar p}\right)\sqrt{\frac{\Delta}{2\rho^2}},
\end{equation*}
\begin{equation*} \label{ResSpin-coffi3}
\widetilde{\varepsilon} = \frac{Mr^4 - a^2r^2(r\sin^2\theta + M\cos^2\theta)}{2\rho^2\sqrt{2\Delta\rho^2}},\, 
\widetilde{\alpha} = \frac{r}{\sqrt 2\bar{p}}\left(\frac{ia\sin\theta}{\bar p}-\frac{\cot\theta}{2}+\frac{a^2\sin\theta\cos\theta}{2\rho^2}\right),
\end{equation*} 
\begin{equation*} \label{ResSpin-coffi4}
\widetilde{\beta}=\frac {r}{\sqrt 2 p}\left(\frac{\cot\theta}{2}+\frac{a^2\sin\theta\cos\theta}{2\rho^2}\right),
\end{equation*} 
\begin{equation*} \label{ResSpin-coffi5}
\widetilde{\gamma}=\frac{Mr^2 - a^2(r\sin^2\theta + M\cos^2\theta)}{2\rho^2\sqrt{2\Delta \rho^2}} - \left(\frac{ia\cos\theta}{\rho^2}+R\right)\sqrt{\frac{\Delta}{2\rho^2}}.
\end{equation*} 
The covariant derivatie acts on the spin-frame $\left\{ \widetilde{o}_A,\, \widetilde{\iota}_A \right\}$ as (see Equation (4.5.26) in \cite[Vol. 1]{PeRi}):
\begin{eqnarray*}
&&\widetilde{D}\widetilde{o}_A = \widetilde{\varepsilon} \widetilde{o}_A - \widetilde{\kappa}\widetilde{\iota}_A = \frac{Mr^4 - a^2r^2(r\sin^2\theta + M\cos^2\theta)}{2\rho^2\sqrt{2\Delta\rho^2}}\widetilde{o}_A ,\cr
&&\widetilde{D}\widetilde{\iota}_A = -\widetilde{\varepsilon}\widetilde{\iota}_A + \widetilde{\pi}\widetilde{o}_A = - \frac{Mr^4 - a^2r^2(r\sin^2\theta + M\cos^2\theta)}{2\rho^2\sqrt{2\Delta\rho^2}}\widetilde{\iota}_A + \frac{ia\sin\theta r}{\sqrt 2 \bar p^2} \widetilde{o}_A,\cr
&&\widetilde{\delta}'\widetilde{o}_A = \widetilde{\alpha} \widetilde{o}_A - \widetilde{\rho}\widetilde{\iota}_A   =  \frac{r}{\sqrt 2\bar{p}}\left(\frac{ia\sin\theta}{\bar p}-\frac{\cot\theta}{2}+\frac{a^2\sin\theta\cos\theta}{2\rho^2}\right)\widetilde{o}_A + \frac{iar\cos\theta}{\bar p}\sqrt{\frac{\Delta}{2\rho^2}}\widetilde{\iota}_A,\cr
&&\widetilde{\delta}'\widetilde{\iota}_A = -\widetilde{\alpha}\widetilde{\iota}_A + \widetilde{\lambda}\widetilde{o}_A  = -\frac{r}{\sqrt 2\bar{p}}\left(\frac{ia\sin\theta}{\bar p}-\frac{\cot\theta}{2}+\frac{a^2\sin\theta\cos\theta}{2\rho^2}\right) \widetilde{\iota}_A,\cr
&&\widetilde{\delta}\widetilde{o}_A = \widetilde{\beta} \widetilde{o}_A - \widetilde{\sigma}\widetilde{\iota}_A   = \frac {r}{\sqrt 2 p}\left(\frac{\cot\theta}{2}+\frac{a^2\sin\theta\cos\theta}{2\rho^2}\right) \widetilde{o}_A,\cr
&&\widetilde{\delta}\widetilde{\iota}_A = -\widetilde{\beta}\widetilde{\iota}_A + \widetilde{\mu}\tilde{o}_A = -\frac {r}{\sqrt 2 p}\left(\frac{\cot\theta}{2}+\frac{a^2\sin\theta\cos\theta}{2\rho^2}\right) \widetilde{\iota}_A + \left(R-\frac{1}{\bar p}\right)\sqrt{\frac{\Delta}{2\rho^2}}\widetilde{\iota}_A,\cr
&&\widetilde{D}'\widetilde{o}_A = \widetilde{\gamma} \widetilde{o}_A - \widetilde{\tau}\widetilde{\iota}_A = \left(\frac{Mr^2 - a^2(r\sin^2\theta + M\cos^2\theta)}{2\rho^2\sqrt{2\Delta \rho^2}} - \left(\frac{ia\cos\theta}{\rho^2}+R\right)\sqrt{\frac{\Delta}{2\rho^2}}\right)\widetilde{o}_A -\frac{ia\sin\theta r}{\sqrt 2 \rho^2}\widetilde{\iota}_A, \cr
&&\widetilde{D}'\widetilde{\iota}_A = -\widetilde{\gamma}\widetilde{\iota}_A + \widetilde{\nu}\widetilde{o}_A  = -\left(\frac{Mr^2 - a^2(r\sin^2\theta + M\cos^2\theta)}{2\rho^2\sqrt{2\Delta \rho^2}} - \left(\frac{ia\cos\theta}{\rho^2}+R\right)\sqrt{\frac{\Delta}{2\rho^2}}\right) \widetilde{\iota}_A.
\end{eqnarray*}
Similarly on the dual conjugation spin-frame $\left\{ \widetilde{o}^{A'}, \, \widetilde{\iota}^{A'} \right\}$ we have
\begin{eqnarray*}
&&\widetilde{D}\widetilde{o}^{A'} = \frac{Mr^4 - a^2r^2(r\sin^2\theta + M\cos^2\theta)}{2\rho^2\sqrt{2\Delta\rho^2}}\widetilde{o}^{A'} ,\cr
&&\widetilde{D}\widetilde{\iota}^{A'} = - \frac{Mr^4 - a^2r^2(r\sin^2\theta + M\cos^2\theta)}{2\rho^2\sqrt{2\Delta\rho^2}}\widetilde{\iota}^{A'} + \frac{ia\sin\theta r}{\sqrt 2 \bar p^2} \widetilde{o}^{A'},\cr
&&\widetilde{\delta}'\widetilde{o}^{A'}   =  \frac{r}{\sqrt 2\bar{p}}\left(\frac{ia\sin\theta}{\bar p}-\frac{\cot\theta}{2}+\frac{a^2\sin\theta\cos\theta}{2\rho^2}\right)\widetilde{o}^{A'} + \frac{iar\cos\theta}{\bar p}\sqrt{\frac{\Delta}{2\rho^2}}\widetilde{\iota}^{A'},\cr
&&\widetilde{\delta}'\widetilde{\iota}^{A'} = -\frac{r}{\sqrt 2\bar{p}}\left(\frac{ia\sin\theta}{\bar p}-\frac{\cot\theta}{2}+\frac{a^2\sin\theta\cos\theta}{2\rho^2}\right) \widetilde{\iota}^{A'},\cr
&&\widetilde{\delta}\widetilde{o}^{A'} = \frac {r}{\sqrt 2 p}\left(\frac{\cot\theta}{2}+\frac{a^2\sin\theta\cos\theta}{2\rho^2}\right) \widetilde{o}^{A'},\cr
&&\widetilde{\delta}\widetilde{\iota}^{A'} = -\frac {r}{\sqrt 2 p}\left(\frac{\cot\theta}{2}+\frac{a^2\sin\theta\cos\theta}{2\rho^2}\right) \widetilde{\iota}^{A'} + \left(R-\frac{1}{\bar p}\right)\sqrt{\frac{\Delta}{2\rho^2}}\widetilde{\iota}^{A'},\cr
&&\widetilde{D}'\widetilde{o}^{A'} = \left(\frac{Mr^2 - a^2(r\sin^2\theta + M\cos^2\theta)}{2\rho^2\sqrt{2\Delta \rho^2}} - \left(\frac{ia\cos\theta}{\rho^2}+R\right)\sqrt{\frac{\Delta}{2\rho^2}}\right)\widetilde{o}^{A'} -\frac{ia\sin\theta r}{\sqrt 2 \rho^2}\widetilde{\iota}^{A'}, \cr
&&\widetilde{D}'\widetilde{\iota}^{A'} = -\left(\frac{Mr^2 - a^2(r\sin^2\theta + M\cos^2\theta)}{2\rho^2\sqrt{2\Delta \rho^2}} - \left(\frac{ia\cos\theta}{\rho^2}+R\right)\sqrt{\frac{\Delta}{2\rho^2}}\right) \widetilde{\iota}^{A'}.
\end{eqnarray*}
We have the detailed expression of $\widetilde{\nabla}_{ZA'}\Xi^{A'}$ as 
\begin{eqnarray}\label{Re}
&&\widetilde{\nabla}_{ZA'}\Xi^{A'} = (\widetilde{D}\Xi^{A'})\widetilde{n}_a + (\widetilde{D}'\Xi^{A'})\widetilde{l}_a - (\widetilde{\delta}\Xi^{A'})\bar{{\widetilde m}}_a - (\widetilde{\delta}'\Xi^{A'})\widetilde{m}_a \cr
&=& \widetilde{D} \left( \Xi^{1'}\widetilde{o}^{A'} \right) \widetilde{\iota}_A\widetilde{\iota}_{A'} - \widetilde{D}' \left( \Xi^{0'}\widetilde{\iota}^{A'} \right) \widetilde{o}_A \widetilde{o}_{A'} + \widetilde{\delta} \left( \Xi^{0'} \widetilde{\iota}^{A'} \right) \widetilde{\iota}_{A} \widetilde{o}_{A'} - \widetilde{\delta}' \left( \Xi^{1'}\widetilde{o}^{A'} \right) \widetilde{o}_A \widetilde{\iota}_{A'} \cr
&=& \left\{ \left( -\widetilde{D} - \frac{Mr^4 - a^2r^2(r\sin^2\theta + M\cos^2\theta)}{2\rho^2\sqrt{2\Delta\rho^2}} \right) \Xi^{1'} +  \left( \widetilde{\delta} - \frac {r}{\sqrt 2 p}\left(\frac{\cot\theta}{2}+\frac{a^2\sin\theta\cos\theta}{2\rho^2}\right) \right) \Xi^{0'}\right\} \widetilde{\iota}_A \cr
&&+ \left\{ \left( -\widetilde{D}' + \left(\frac{Mr^2 - a^2(r\sin^2\theta + M\cos^2\theta)}{2\rho^2\sqrt{2\Delta \rho^2}} - \left(\frac{ia\cos\theta}{\rho^2}+R\right)\sqrt{\frac{\Delta}{2\rho^2}}\right) \right)\Xi^{0'}\right.\cr
&&\hspace{2cm}\left. + \left( \widetilde{\delta}' + \frac{r}{\sqrt 2\bar{p}}\left(\frac{ia\sin\theta}{\bar p}-\frac{\cot\theta}{2}+\frac{a^2\sin\theta\cos\theta}{2\rho^2}\right) \right)\Xi^{1'}  \right\}  \widetilde{o}_A.
\end{eqnarray}
Taking the constrain of the system \eqref{Re} on $\scri^+$ with noting that $\Xi^{1'}|_{\scri^+} =0$ we get only the constraint of the second equations
$$-\lim_{r\to \infty}\widetilde{D}'\Xi^{0'}|_{\scri^+} = -\lim_{r\to \infty}\sqrt{\frac{2(a^2+r^2)}{\Delta}} \partial_{^*t} \Xi^{0'}|_{\scri^+} = \sqrt{2}\partial_{^*t} \Xi^{0'}|_{\scri^+}= 0.$$
Integrating these equations along $\scri^+$, we get $\Xi^{0'}|_{\scri^+} = \; constant$. This leads to a fact that the Cauchy problem with the initial condition $\Xi^{0'}|_{{\mathcal V}(P)\cap \scri^+} = 0$ has a unique solution and it equals to zero.

By the same way we can obatin the restriction of the rescaled equation on $\mathfrak{H}^+$ by considering the rescaled equation $\widehat{\nabla}_{ZA'}\Xi^{A'}=0$ on the rescaled Kerr-star coordinates $(t^*,R,\theta,\varphi^*)$. By the same calculations we get the constraint on $\mathfrak{H}^+$ of $\widehat{\nabla}_{ZA'}\Xi^{A'}=0$ is the constraint of the first equation of \eqref{Re} on $\mathfrak{H}^+$ with $\widetilde{.}$ replacing by $\widehat{.}$:
$$\left( -\widehat{D} - \frac{Mr^4 - a^2r^2(r\sin^2\theta + M\cos^2\theta)}{2\rho^2\sqrt{2\Delta\rho^2}} \right) \Xi^{1'} +  \left( \widehat{\delta} - \frac {r}{\sqrt 2 p}\left(\frac{\cot\theta}{2}+\frac{a^2\sin\theta\cos\theta}{2\rho^2}\right) \right) \Xi^{0'} = 0.$$
Multiplying the above equation with $\dfrac{\sqrt{\Delta\rho^2}}{\sqrt{2}(r_+^2+a^2)}$ and taking the constraint of the obtained equation on $\mathfrak{H}^+$ with noting that $\Xi^{0'}|_{\mathfrak{H}^+} =0$ we get
$$\left( -\partial_{t^*} - \frac{a}{r_+^2 + a^2}\partial_{\varphi^*} - \frac{Mr_+^4 - a^2r_+^2(r_+\sin^2\theta + M\cos^2\theta)}{4(r_+^2+a^2)(r_+^2+ a^2\cos^2\theta)} \right) \Xi^{1'}|_{\mathfrak{H}^+} = 0.$$
Using $\Delta|_{\mathfrak{H}^+} = r_+^2 - 2Mr_+ + a^2 = 0$ we obtain that
$$\left( -\partial_{t^*} - \frac{a}{2Mr_+}\partial_{\varphi^*} - \frac{r_+^2-Mr_+}{8M} \right) \Xi^{1'}|_{\mathfrak{H}^+} = 0.$$
This is a tranported type equation and it has a trivial solution with the initial condition $\Xi^{1'}|_{\mathcal{V}(P)\cap \mathfrak{H}^+} = 0$. This equation has also non trivial solution given by
$$\Xi^{1'}|_{\mathfrak{H}^+} = e^{h(C)}e^{-\frac{r_+^2 - Mr_+}{8M}t^*}, \,\,\, C= t^* - \frac{a}{2Mr_+}\varphi^*.$$
However, this solution does not satisfy that the initial condition $\Xi^{1'}|_{\mathcal{V}(P)\cap \mathfrak{H}^+} = 0$ (this is only true at $t^* = +\infty$). Therefore, we conclude that if the support of $\Xi^{1'}|_{\mathfrak{H}^+}$ is compact and far away from $i^+$, then the rescaled equation $\widehat{\nabla}_{ZA'}\Xi^{A'}=0$ leads to $\Xi^{1'}|_{\mathfrak{H}^+} = 0$.

\subsection{Spin coefficients and derivations of the origin spin-frame}\label{app_4_spin}
Using the Newman-Penrose tetrad normalization \eqref{New2}, the spin coefficients are calculated (see \cite{HaNi2004}):
\begin{equation*} \label{spin0}
\kappa= \tilde{\sigma} = \lambda = \nu = 0, 
\end{equation*} 
\begin{equation*} \label{spin1}
\tau = -\frac{ia\sin\theta}{\sqrt 2 \rho^2} , \, \pi = \frac{ia\sin\theta}{\sqrt 2 \bar p^2}, \, 
\tilde{\rho} = \mu = -\frac{1}{\bar p}\sqrt{\frac{\Delta}{2\rho^2}}, \, \varepsilon = \frac{Mr^2 - a^2(r\sin^2\theta + M\cos^2\theta)}{2\rho^2\sqrt{2\Delta\rho^2}},
\end{equation*}
\begin{equation*} \label{spin2}
\alpha = \frac{1}{\sqrt{2}\bar{p}}\left( \frac{ia\sin\theta}{\bar {p}}-\frac{\cot\theta}{2} + \frac{a^2\sin\theta\cos\theta}{2\rho^2} \right), \, \beta = \frac{1}{\sqrt{2}p} \left( \frac{\cot\theta}{2} +\frac{a^2\sin\theta\cos\theta}{2\rho^2} \right),
\end{equation*} 
\begin{equation*} \label{spin3}
\gamma=\frac{Mr^2 - a^2(r\sin^2\theta + M\cos^2\theta)}{2\rho^2\sqrt{2\Delta \rho^2}} - \frac{ia\cos\theta}{\rho^2}\sqrt{\frac{\Delta}{2\rho^2}}, 
\end{equation*} 
here we use $\tilde{\sigma}$ and $\tilde{\rho}$ to avoid the confusion with the parameters $\sigma$ and $\rho$ in the Kerr metric.

The covariant derivatie acts on the spin-frame $\left\{o_A,\, \iota_A \right\}$ as (see Equation (4.5.26) in \cite[Vol. 1]{PeRi}):
\begin{equation}\label{SpinDeri1}
\begin{gathered}
Do_A = \varepsilon o_A - \kappa\iota_A = \frac{Mr^2 - a^2(r\sin^2\theta + M\cos^2\theta)}{2\rho^2\sqrt{2\Delta\rho^2}}o_A ,\cr
D\iota_A = -\varepsilon\iota_A + \pi o_A = -\frac{Mr^2 - a^2(r\sin^2\theta + M\cos^2\theta)}{2\rho^2\sqrt{2\Delta\rho^2}}\iota_A + \frac{ia\sin\theta}{\sqrt 2 \bar p^2} o_A,\cr
\delta'o_A = \alpha o_A - \tilde{\rho}\iota_A   =  \frac{1}{\sqrt 2\bar{p}}\left(\frac{ia\sin\theta}{\bar p}-\frac{\cot\theta}{2}+\frac{a^2\sin\theta\cos\theta}{2\rho^2}\right)o_A + \frac{1}{\bar p}\sqrt{\frac{\Delta}{2\rho^2}}\iota_A,\cr
\delta'\iota_A = -\alpha\iota_A + \lambda o_A  = -\frac{1}{\sqrt 2\bar{p}}\left(\frac{ia\sin\theta}{\bar p}-\frac{\cot\theta}{2}+\frac{a^2\sin\theta\cos\theta}{2\rho^2}\right)\iota_A,\cr
\delta o_A = \beta o_A - \tilde{\sigma}\iota_A   = \frac {1}{\sqrt 2 p}\left(\frac{\cot\theta}{2}+\frac{a^2\sin\theta\cos\theta}{2\rho^2}\right)o_A,\cr
\delta\iota_A = -\beta\iota_A + \mu o_A = -\frac {1}{\sqrt 2 p}\left(\frac{\cot\theta}{2}+\frac{a^2\sin\theta\cos\theta}{2\rho^2}\right)\iota_A - \frac{1}{\bar p}\sqrt{\frac{\Delta}{2\rho^2}}\iota_A,\cr
D'o_A = \gamma o_A - \tau\iota_A = \left(\frac{Mr^2 - a^2(r\sin^2\theta + M\cos^2\theta)}{2\rho^2\sqrt{2\Delta \rho^2}} - \frac{ia\cos\theta}{\rho^2}\sqrt{\frac{\Delta}{2\rho^2}}\right)o_A + \frac{ia\sin\theta}{\sqrt 2 \rho^2}\iota_A, \cr
D'\iota_A = -\gamma\iota_A + \nu o_A  = -\left(\frac{Mr^2 - a^2(r\sin^2\theta + M\cos^2\theta)}{2\rho^2\sqrt{2\Delta \rho^2}} - \frac{ia\cos\theta}{\rho^2}\sqrt{\frac{\Delta}{2\rho^2}}\right)\iota_A.
\end{gathered}
\end{equation}
Similarly on the dual conjugation spin-frame $\left\{ o^{A'}, \, \iota^{A'} \right\}$ we have
\begin{equation}\label{SpinDeri2}
\begin{gathered}
Do^{A'} =  \frac{Mr^2 - a^2(r\sin^2\theta + M\cos^2\theta)}{2\rho^2\sqrt{2\Delta\rho^2}} o^{A'} ,\cr
D\iota^{A'} = -\frac{Mr^2 - a^2(r\sin^2\theta + M\cos^2\theta)}{2\rho^2\sqrt{2\Delta\rho^2}}\iota^{A'} + \frac{ia\sin\theta}{\sqrt 2 \bar p^2} o^{A'},\cr
\delta'o^{A'}   = \frac{1}{\sqrt 2\bar{p}}\left(\frac{ia\sin\theta}{\bar p}-\frac{\cot\theta}{2}+\frac{a^2\sin\theta\cos\theta}{2\rho^2}\right) o^{A'} + \frac{1}{\bar p}\sqrt{\frac{\Delta}{2\rho^2}}\iota^{A'},\cr
\delta'\iota^{A'} = -\frac{1}{\sqrt 2\bar{p}}\left(\frac{ia\sin\theta}{\bar p}-\frac{\cot\theta}{2}+\frac{a^2\sin\theta\cos\theta}{2\rho^2}\right) \iota^{A'},\cr
\delta o^{A'} = \frac {1}{\sqrt 2 p}\left(\frac{\cot\theta}{2}+\frac{a^2\sin\theta\cos\theta}{2\rho^2}\right) o^{A'},\cr
\delta\iota^{A'} = -\frac {1}{\sqrt 2 p}\left(\frac{\cot\theta}{2}+\frac{a^2\sin\theta\cos\theta}{2\rho^2}\right)\iota^{A'} + - \frac{1}{\bar p}\sqrt{\frac{\Delta}{2\rho^2}}\iota^{A'},\cr
D'o^{A'} = \left(\frac{Mr^2 - a^2(r\sin^2\theta + M\cos^2\theta)}{2\rho^2\sqrt{2\Delta \rho^2}} - \frac{ia\cos\theta}{\rho^2}\sqrt{\frac{\Delta}{2\rho^2}}\right)o^{A'} + \frac{ia\sin\theta}{\sqrt 2 \rho^2}\iota^{A'}, \cr
D'\iota^{A'} = -\left(\frac{Mr^2 - a^2(r\sin^2\theta + M\cos^2\theta)}{2\rho^2\sqrt{2\Delta \rho^2}} - \frac{ia\cos\theta}{\rho^2}\sqrt{\frac{\Delta}{2\rho^2}}\right)\iota^{A'}.
\end{gathered}
\end{equation}

\subsection{Riemann curvature tensor of Kerr metric}\label{Riemann}
With the usual coordinate transformation $c=\cos\theta$, the Kerr metric \eqref{Metric} becomes
\begin{equation*}
 g=\left(1-\frac{2Mr}{\rho^2}\right)\d t^2 +\frac{4aMr(1-c^2)}{\rho^2}\d t\d \varphi -\frac{\rho^2}{\Delta}\d r^2- \frac{\rho^2}{1-c^2}\d c^2-\frac{\sigma^2}{\rho^2}(1-c^2)\d \varphi^2, 
\end{equation*}
where $\rho^2=r^2+a^2c^2$, $\Delta=r^2-2Mr+a^2$ and $\sigma^2 = (r^2+a^2)\rho^2+ 2Mra^2(1-c^2)$. 

The non zero components of the Riemann curvature tensor are
\begin{eqnarray*}
R_{r c r c} &=&	\frac{3 a^2c^2Mr - Mr^3}{(1 - c^2)\rho^2\Delta},\, R_{rc \varphi t} =	- \frac{ac(a^2c^2M - 3Mr^2)}{\rho^4},\cr
R_{r \varphi r \varphi} &=&	-\frac{(1 - c^2)}{\rho^6 \Delta}(- 9 a^6c^2Mr + 6a^6c^4Mr + 12a^4c^2M^2r^2 - 12a^4c^4M^2r^2)\cr
&& - \frac{(1 - c^2)}{\rho^6 \Delta}(3 a^4Mr^3 - 14a^4c^2Mr^3 + 6a^4c^4Mr^3 - 4a^2M^2r^4)\cr
&& - \frac{(1 - c^2)}{\rho^6 \Delta}( + 4a^2c^2M^2r^4 + 4a^2Mr^5 - 5a^2c^2Mr^5 + Mr^7),\cr
R_{r\varphi r t} &=&\frac{a (1 - c2) (9a^4c^2Mr - 12 a^2c^2M^2r^2 - 3a^2Mr^3 + 9a^2c^2Mr^3 + 4M^2r^4 - 3Mr^5)}
{\rho^6 \Delta},\cr
R_{r\varphi c \varphi} &=& \frac{3a^2c(1 - c^2) (a^2 + r^2) (a^2c^2M - 3Mr^2)}{\rho^6},\cr
R_{r\varphi c t} &=& -\frac{ac(- 3a^2 + 2a^2c^2 - r^2) (a^2c^2M - 3Mr^2)}{\rho^6},\cr
R_{rtrt} &=& -\frac{(-9a^4c^2Mr + 3 a^4c^4Mr + 12a^2c^2M^2r^2 + 3a^2Mr^3 - 7a^2c^2Mr^3 - 4M^2r^4 + 2Mr^5)}{\rho^6\Delta},\cr
R_{rtc\varphi} &=& -\frac{ac(-3a^2 + a^2c^2 - 2r^2) (a^2c^2M - 3Mr^2)}{\rho^6},\, R_{rtct}= \frac{3a^2c(a^2c^2M -3Mr^2)}{\rho^6},\cr
R_{c \varphi c \varphi}	&=& \frac{9a^6c^2Mr + 3a^6c^4Mr + 6a^4c^2M^2r^2 - 6 a^4c^4M^2r^2 + 3a^4Mr^3 - 16a^4c^2Mr^3}{\rho^6} \cr
&& + \frac{3a^4c^4Mr^3 - 2a^2M^2r^4+ 2a^2c^2M^2r^4 + 5a^2Mr^5 - 7a^2c^2Mr^5 + 2Mr^7}{\rho^6},\cr
R_{c \varphi ct} &=& -\frac{a(9a^4c^2Mr - 6a^2c^2M^2r^2 - 3a^2Mr^3 + 9a^2c^2Mr^3 + 2M^2r^4 - 3Mr^5)}{\rho^6},\cr
R_{ctct} &=&	\frac{- 9a^4c^2Mr + 6a^4c^4Mr + 6a^2c^2M^2r^2 + 3a^2Mr^3 - 5a^2c^2Mr^3 - 2M^2r^4 + Mr^5}{(1 - c^2)\rho^6},\cr
R_{\varphi t \varphi t} &=& -\frac{(1 - c^2)\Delta (3a^2c^2Mr - M r^3)}{\rho^6}.
\end{eqnarray*}

\end{document}